\documentclass{IEEEtran}
\usepackage{cite}
\usepackage{amsmath,amssymb,amsfonts}
\usepackage{graphicx}
\usepackage{textcomp}
\usepackage{color}
\usepackage[colorlinks=true,
linkcolor=black,
anchorcolor=black,
citecolor=black
]{hyperref}
\usepackage{epstopdf}
\usepackage{bm}
\usepackage{subcaption}
\usepackage{graphicx}
\usepackage{subcaption}
\usepackage[linesnumbered,ruled,vlined]{algorithm2e}
\usepackage{tikz}
\usepackage{pgfplots}
\pgfplotsset{compat=newest}
\usepackage{dsfont}
\usepackage{soul}

\newenvironment{proof}{{\it Proof:}}{\hfill $\blacksquare$}

\newcounter{theorem}
\newtheorem{theorem}{\it Theorem}
\newcounter{lemma}
\newtheorem{lemma}{\it Lemma}
\newcounter{corollary}
\newtheorem{corollary}{\it Corollary}
\newcounter{proposition}
\newtheorem{proposition}{\it Proposition}
\newcounter{remark}
\newtheorem{remark}{\it Remark}
\newcounter{definition}
\newtheorem{definition}{\it Definition}
\newcounter{assumption}
\newtheorem{assumption}{\it Assumption}
\newcounter{example}
\newtheorem{example}{\it Example}
\def\BibTeX{{\rm B\kern-.05em{\sc i\kern-.025em b}\kern-.08em
    T\kern-.1667em\lower.7ex\hbox{E}\kern-.125emX}}

\begin{document}

\title{Signed DeGroot--Friedkin Dynamics with Interdependent Topics}
\author{Yangyang Luan, Muhammad Ahsan Razaq, Xiaoqun Wu, and Claudio Altafini
	\thanks{This work was supported in part by the National Natural Science Foundation of China under Grant 62573301, by the Startup Grant from Shenzhen University, by the Swedish Research Council under Grant 2020-03701 and Grant 2024-04772, and by the ELLIIT framework program at Link\"{o}ping University. \textit{(Corresponding authors: Claudio Altafini; Xiaoqun Wu.)}}
	\thanks{Yangyang Luan is with the School of Mathematics and Statistics, Wuhan University, Wuhan 430072, China, and also with the Division of Automatic Control, Department of Electrical Engineering, Link\"{o}ping University, SE-58183 Link\"{o}ping, Sweden (e-mail: luanyy\_1704@whu.edu.cn).}
	\thanks{Muhammad Ahsan Razaq and Claudio Altafini are with the Division of Automatic Control, Department of Electrical Engineering, Link\"{o}ping University, SE-58183 Link\"{o}ping, Sweden (e-mail: muhammad.ahsan.razaq@liu.se; claudio.altafini@liu.se).}
	\thanks{Xiaoqun Wu is with the College of Computer Science and Software Engineering, Shenzhen University, Shenzhen 518060, China (e-mail: xqwu@szu.edu.cn).}
}

\maketitle

\begin{abstract}
This paper investigates DeGroot--Friedkin (DF) dynamics over signed influence networks with interdependent topics.
We propose a multi-topic signed framework that combines repelling interpersonal interactions with cross-issue self-appraisal, examining how antagonism and topic interdependence shape the evolution of agent-level social power.
When the logic matrices (for topic interdependence) of all agents share a common dominant left eigenvector, we identify structural conditions under which the original dynamics admit an exact reduction to an explicit scalar DF map.
This yields a complete classification of limiting social power configurations into pluralistic, mixed, and vertex-dominant types.
In all three cases, the dynamics are globally convergent, and in the first two the ordering induced by the interaction centrality is preserved.
We further show local robustness under small heterogeneous perturbations of the logic matrices.
We also clarify what changes when this common-eigenvector structure is lost.
These results extend signed social power dynamics beyond the standard nonnegative scalar setting and shed light on the robustness and scope of centrality-based social power formation in multi-topic signed influence systems.
\end{abstract}

\begin{IEEEkeywords}
	Agent-based models, influence networks, opinion dynamics, social power, signed graphs, interdependent topics.
\end{IEEEkeywords}

\section{Introduction}
\label{sec:Intro}

In recent years, network science and control theory have become important tools for modeling opinion dynamics, providing mathematical frameworks for understanding how beliefs spread, interact, and stabilize within social groups~\cite{proskurnikov2017tutorial}.
A cornerstone in this area is the DeGroot model~\cite{degroot1974reaching}, in which each individual repeatedly updates her opinion by averaging those of her neighbors. Although elegant and analytically tractable, the model treats interpersonal influence as fixed.
In many social settings, however, this influence is itself endogenous: agents who are more successful in shaping collective outcomes tend to gain confidence and exert greater influence in subsequent issues.
Capturing this feedback between opinion formation and evolving influence is one of the central motivations for studying social power dynamics.

The DeGroot--Friedkin (DF) model addresses this point by coupling DeGroot opinion formation with Friedkin's reflected-appraisal mechanism~\cite{friedkin2011formal,jia2015opinion}. In this framework, self-weights are not prescribed a priori, but emerge endogenously through repeated issue discussions. An agent who more strongly affects the outcome of one issue assigns herself a larger self-weight for the next one, generating a coevolutionary process between interpersonal influence and self-appraisal.
Since its introduction, the DF model has stimulated a broad literature on reducible networks~\cite{jia2017opinion}, switching interaction structures~\cite{chen2018social,ye2018evolution}, alternative updating rules~\cite{xu2015modified,jia2019opinion}, and heterogeneous psychological traits such as stubbornness~\cite{tian2021social}.
Yet most available results remain confined to cooperative and scalar settings, where interpersonal influence is nonnegative and discussion topics are treated as independent.

A major limitation of this paradigm is the absence of antagonism. Real influence networks are often signed, containing both trust and opposition.
Existing signed extensions of DF dynamics largely adopt the ``opposing'' framework~\cite{bhowmick2020evolution,xue2020evolution,shi2019dynamics}, which preserves analytical tractability through a particular normalization of the signed Laplacian: its diagonal entries are defined by the sums of the absolute values of the corresponding row entries.
This construction is analytically convenient, but it also narrows the range of admissible asymptotic behaviors. Indeed, under standard connectivity assumptions, the resulting dynamics exhibit a rather restrictive dichotomy: structural balance yields bipartite consensus, whereas structural unbalance drives all opinions to zero.

A more general alternative is the ``repelling'' framework~\cite{shi2019dynamics,fontan2022multiagent}, in which the signed Laplacian contains on the diagonal the row sums of the signed entries, thereby preserving the original signed algebraic structure more directly.
The repelling framework can accommodate a wider range of signed behaviors than the opposing one, with the structurally balanced case appearing only as a special subregime.
This added generality, however, comes at an analytical price: the repelling Laplacian always has zero as an eigenvalue, but the dynamics may be unstable, and even when stable, convergence to consensus cannot be guaranteed~\cite{razaq2025signed,razaq2024multidimensional}. Consequently, extending reflected appraisal to repelling signed interactions is not a straightforward extension of the standard nonnegative DF theory.
This motivates an analysis based on eventual stochasticity (ES) and additional contractivity requirements, such as the single-leader contractivity (SLC) condition introduced in this paper, to ensure that the resulting self-appraisal map remains meaningful and dynamically well behaved.

Another simplification in the existing DF literature is the scalar treatment of issues.
In many realistic decision-making scenarios, the agents reason simultaneously over multiple logically interdependent topics~\cite{parsegov2016novel,ye2020continuous,luan2025coevolutionary}. Such interdependence is not a minor modeling detail, nor can it be reduced to stubbornness or inertia. Instead, it governs how socially received opinions are internally processed before being reflected in the next self-appraisal step. Although logical interdependence has been studied extensively in multidimensional DeGroot-type models, its effect on reflected-appraisal dynamics over signed influence networks remains largely unexplored.
Accordingly, once one moves beyond scalar cooperative discussions, a natural question arises: is the long-term agent-level social power profile determined solely by the signed interaction topology, or can it also depend on the internal topic-coupling structure encoded by the agents' logic matrices?

This paper develops a signed multi-topic DF framework that addresses these questions.
We consider a model in which each issue consists of a multidimensional DeGroot stage with repelling interpersonal interactions, followed by a Friedkin self-appraisal update across issues.
In the shared topic-weight regime, where the logic matrices of the agents all share a common dominant left eigenvector, we show that the problem admits an exact algebraic reduction.
More precisely, under the proposed SLC condition on the interaction component, together with a coupled spectral regularity condition, the original self-appraisal dynamics reduce exactly to an explicit scalar DF map on the simplex.
This reduction yields a complete classification of equilibria and global behavior: depending on the interaction centrality profile, the limiting social power configuration is either a unique interior equilibrium or a vertex of the simplex.
In the interior case, the limiting configuration is either ``pluralistic'' (mixed but with no agent having the majority of social power), or ``majoritarian'' (mixed but with a majority leader).
Moreover, whenever an interior equilibrium exists, it preserves the ordering induced by the interaction centrality vector.

We then move beyond exact separability in two directions.
First, for sufficiently small heterogeneous perturbations of the logic matrices, we quantify how the eigenvalue $1$, its associated normalized left eigenvector, and the induced update map vary under perturbation, and use these estimates to show that locally attractive interior equilibria persist near the reference regime.
This shows that the centrality-based characterization obtained in the shared topic-weight regime exhibits local robustness to mild violations of exact separability.
Second, we clarify the role of the heterogeneous logic under the present self-appraisal aggregation rule.
Although a heterogeneous logic can generate substantial disparities in topic-level transfer before the self-appraisal aggregation, these pre-aggregation differences disappear once the current self-appraisal update is applied, and only the interaction component matters. 
Taken together, these results identify both a local robustness property near the shared topic-weight regime and a structural limitation of the current aggregation mechanism.

The remainder of this paper is outlined  as follows.
Section~\ref{sec:Preliminaries} introduces preliminaries on signed graphs and matrix spectral properties. 
Section~\ref{sec:ModelFormulation} formulates the signed multi-topic DF model.
Section~\ref{sec:baseline-sharedc} develops the exact reduction in the shared topic-weight regime and derives the equilibrium and convergence results.
Section~\ref{sec:beyond-separable} examines local robustness and the role of heterogeneous logic beyond exact separability.
Section~\ref{sec:Simulations} presents numerical simulations, and Section~\ref{sec:Conclusion} concludes the paper.
Some technical lemmas and proofs are in the Appendix.

A preliminary version of this work, to appear in the proceedings of the 23rd IFAC World Congress~\cite{razaq2026ifac_signed_df}, considers a scalar signed DF model and introduces the SLC condition in that setting.
The multi-topic framework developed here, the exact reduction in the shared topic-weight regime, the equilibrium classification with global convergence, and the perturbation and aggregation results are new.

\section{Preliminaries}
\label{sec:Preliminaries}

\subsection{Notation}
Scalars are denoted by lowercase letters (e.g., $x$), vectors by bold lowercase letters (e.g., $\mathbf{x}$, $\bm{x}$), and matrices by uppercase letters (e.g., $A$, $\cal{A}$).
The vector of all ones and the vector of all zeros in $\mathbb{R}^n$ are denoted by $\mathds{1}_n$ and $\mathbf{0}_n$, respectively; subscripts are omitted when dimensions are clear.
The vector $\mathbf{e}_i$ denotes the $i$-th canonical basis vector in $\mathbb{R}^n$.
For a vector $\mathbf{x} \in \mathbb{R}^n$, let $\|\mathbf{x}\|_1 := \sum_{i=1}^n |x_i|$ denote the $\ell_1$-norm, and let $\operatorname{diag}(\mathbf{x})$ denote the diagonal matrix with the entries of $\mathbf{x}$ on its main diagonal.
Conversely, for a matrix $A \in \mathbb{R}^{n \times n}$, $\operatorname{diag}(A) \in \mathbb{R}^{n}$ denotes the vector formed by its diagonal entries.
The block diagonal matrix formed by matrices $A_1, \dots, A_n$ is denoted by $\operatorname{blkdiag}(A_1, \dots, A_n)$.
Let $I_n$ be the $n\times n$ identity matrix, and $\otimes$ denote the Kronecker product.
The spectrum of $A \in \mathbb{R}^{n \times n}$ is denoted by $\Lambda(A) = \{ \lambda_i(A) \}_{i=1}^n$, and its spectral radius by $\rho(A) := \max_{\lambda \in \Lambda(A)} |\lambda|$.
A real eigenvalue $\lambda_\star\in\Lambda(A)$ is called dominant if $|\lambda_\star|>|\lambda|$ for every $\lambda\in\Lambda(A)\setminus\{\lambda_\star\}$; the associated left and right eigenvectors are called dominant eigenvectors.
Component-wise inequalities for vectors and matrices are denoted by $>,<,\ge,\le$. 
The induced matrix $\ell_1$-norm is $\|A\|_1 := \max_j \sum_i |a_{ij}|$.
Given a symmetric positive definite matrix $P\succ 0$, the induced $P$-norm is
\[
\|A\|_P := \sqrt{\lambda_{\max} \left(P^{-1/2}A^\top P A P^{-1/2}\right)} .
\]
The $n$-simplex is defined by $\Delta_n := \{ \mathbf{x} \in \mathbb{R}^n \mid \mathbf{x} \ge \mathbf{0}, \mathds{1}^\top \mathbf{x} = 1 \}$ and the $n$-hyperplane by $\mathcal{H}_n := \{ \mathbf{x} \in \mathbb{R}^n \mid \mathds{1}^\top \mathbf{x} = 1 \}$.
The interior and boundary of $\Delta_n$ are $\mathrm{int}\Delta_n := \{ \mathbf{x} \in \Delta_n \mid \mathbf{x} > \mathbf{0} \}$ and $\partial\Delta_n := \Delta_n \setminus \mathrm{int}\Delta_n$, respectively.

\subsection{Signed Graphs}

A signed directed graph (digraph) is a triplet $\mathcal G=(\mathcal V,\mathcal E,A)$, where $\mathcal V=\{1,\dots,n\}$ is the node set, $\mathcal E\subseteq \mathcal V\times\mathcal V$ is the edge set, and $A=[a_{ij}]\in\mathbb R^{n\times n}$ is the signed weighted adjacency matrix.
A directed edge $(j,i) \in \mathcal{E}$ exists if and only if $a_{ij} \neq 0$.
Unlike unsigned graphs where $a_{ij} \ge 0$, a signed graph allows $a_{ij}<0$, capturing antagonistic interactions.
For nonnegative matrices, $A$ is said to be row-stochastic if $A \ge 0$ and $A\mathds{1} = \mathds{1}$, and doubly stochastic if both $A$ and $A^\top$ are row-stochastic.

\subsection{Spectral Properties of Signed Matrices}

The classical Perron--Frobenius (PF) theory for nonnegative matrices ensures the existence of a positive eigenvector associated with the spectral radius. 
For signed matrices, analogous asymptotic properties are defined as follows~\cite{fontan2022multiagent,razaq2025signed}.

\begin{definition}[Perron--Frobenius (PF)]
	A matrix $A \in \mathbb{R}^{n\times n}$ satisfies the right (resp.\ left) PF property, denoted by $A\in \mathcal{PF}$ (resp.\ $ A^{\top} \in \mathcal{PF}$), if there exists a simple real and positive eigenvalue $\lambda_1\in\Lambda(A)$ such that $\lambda_1 > |\lambda_j(A)|~\forall\lambda_j\neq\lambda_1$, and the associated right (resp.\ left) eigenvector is strictly positive.
\end{definition}

\begin{definition}[Stochastic PF (SPF)]
	A matrix $A \in \mathbb{R}^{n \times n}$ is termed SPF if $A \in \mathcal{PF}$, $A\mathds{1} = \mathds{1}$, and $\rho(A) = 1$.
\end{definition}

\begin{definition}[Eventually Stochastic (ES)]
	A matrix $A \in \mathbb{R}^{n \times n}$ is termed ES if both $A \in \mathcal{PF}$ and $A^\top \in \mathcal{PF}$ hold, while satisfying $A\mathds{1} = \mathds{1}$ and $\rho(A) = 1$.
\end{definition}

\begin{remark}
	If $A$ is SPF, then $\lambda=1$ is the simple dominant eigenvalue with strictly positive right eigenvector $\mathds{1}$.
	If $A$ is further ES, then the normalized left Perron eigenvector is also uniquely defined and strictly positive, and therefore belongs to $\mathrm{int} \Delta_n$.
	These properties will be repeatedly used in the signed setting considered in this paper.
\end{remark}

\section{Model Formulation}
\label{sec:ModelFormulation}

In this section, we formulate a topic-coupled DF model that consists of two interrelated components:
(i) a multidimensional DeGroot stage for intra-issue opinion formation; (ii) a Friedkin self-appraisal stage for inter-issue self-weight updating.
This framework allows us to investigate how antagonistic interpersonal influence and heterogeneous belief structures collectively shape long-term self-weight allocations across issues.

\subsection{Multidimensional DeGroot Dynamics}

Consider a group of $ n $ agents discussing $m$ logically interdependent topics over a sequence of issues $\mathcal{S}= \{0,1,2,\ldots\} $.
Let $\mathcal I=\{1,\ldots,n\}$ denote the agent set.
For each issue $ s\in\mathcal{S} $, agent $ i\in\mathcal{I} $ holds opinions on the $m$ topics at time $t$, denoted by $\mathbf{y}_i(s,t) = [y_i^1(s,t),\ldots,y_i^m(s,t)]^{\top} \in \mathbb{R}^{m}$, where $y_i^p(s,t)$ is agent $i$'s opinion on topic $p$.

During discussions over issue $s$, each agent $i$ updates her opinions according to the linear rule:
\begin{equation}\label{equ:DynamicsOfEachAgent}
	\mathbf{y}_i(s,t+1) = \sum_{j=1}^{n} w_{ij}(s) C_i \mathbf{y}_{j}(s,t),
\end{equation}
where $w_{ij}(s)\in\mathbb{R}$ denotes the (possibly signed) interpersonal influence coefficient that agent $i$ assigns to agent $j$ on issue $s$.
We collect these coefficients into the signed influence matrix $W(s)=[w_{ij}(s)]\in\mathbb{R}^{n\times n}$.
Throughout the paper, we adopt the standard row-sum normalization, i.e., $W(s)\mathds{1}=\mathds{1}, \forall s\in\mathcal{S}$.

The logical interdependencies among the $m$ topics perceived by agent $i$ are encoded through  a logic matrix $ C_i=[c_{pq,i}]\in\mathbb{R}^{m\times m}$, where $c_{pq,i}$ quantifies the dependence of topic $p$ on topic $q$ within agent $i$'s belief structure~\cite{parsegov2016novel,friedkin2016network}.
The matrices $\{C_i\}_{i\in\mathcal{I}}$ are allowed to be heterogeneous across agents, i.e., there may exist $i\neq j$ such that $C_i\neq C_j$.

Our primary interest is the evolution of self-weights across issues, i.e., the diagonal entries $w_{ii}(s)$.
Define $ x_{i}(s) := w_{ii}(s) $ and $\mathbf{x}(s) := [x_1(s),\ldots,x_n(s)]^\top$.
Because $W(s)\mathds{1}=\mathds{1}$, one obtains $\sum_{j\neq i} w_{ij}(s)=1-x_i(s)$.
Following the classical DF model~\cite{jia2015opinion}, we assume that the relative off-diagonal interaction proportions are issue-independent.
Specifically, for $j\neq i$, one has $w_{ij}(s)=(1-x_i(s))r_{ij}$, where $r_{ij}$ is a constant relative interaction coefficient (possibly signed), with $r_{ii}=0$ for all $i\in\mathcal I$.
Accordingly,
\begin{equation}\label{equ:DecompositionW}
	W(s)
	=
	\operatorname{diag}(\mathbf{x}(s))
	+
	\bigl(I_n-\operatorname{diag}(\mathbf{x}(s))\bigr)R,
\end{equation}
where $R=[r_{ij}] \in \mathbb R^{n\times n}$ satisfies $\operatorname{diag}(R)=\bm 0$ and $R\mathds{1}=\mathds{1}$.

Let $ \mathbf{y}(s,t)=[\mathbf{y}_1(s,t)^{\top},\ldots,\mathbf{y}_n(s,t)^{\top}]^{\top} \in \mathbb{R}^{nm} $.
Define $ \mathcal{C}_{[n]} := \operatorname{blkdiag}(C_1,\dots,C_n) \in \mathbb{R}^{nm \times nm} $.
Aggregating \eqref{equ:DynamicsOfEachAgent} over all agents yields
\begin{equation}\label{equ:DynamicsOfAllAgents}
	\mathbf{y}(s,t+1) = \mathcal{C}_{[n]}\bigl(W(s)\otimes I_m\bigr)\mathbf{y}(s,t).
\end{equation}
Denote the associated system matrix by
\begin{equation}\label{equ:SystemMatrix}
	\mathcal{A}(\mathbf{x}(s)) = \mathcal{C}_{[n]}\bigl(W(\mathbf{x}(s))\otimes I_m\bigr).
\end{equation}
Hence, \eqref{equ:DynamicsOfAllAgents} defines a multidimensional DeGroot-type update in which opinion formation is jointly shaped by signed interpersonal influence and within-agent topic interdependence.

\subsection{Friedkin Self‑Appraisal Dynamics}
\label{subsec:SelfAppraisal}

To define the self-appraisal update in the signed setting, we work on the affine hyperplane $\mathcal{H}_n$.
For any $\mathbf{x}\in\mathcal{H}_n$ such that the system matrix $\mathcal{A}(\mathbf{x})$ has $1$ as a simple dominant eigenvalue, let $\bm{\mu}(\mathbf{x})\in\mathbb{R}^{nm}$ denote the normalized left eigenvector satisfying $\bm{\mu}(\mathbf{x})^{\top}\mathcal{A}(\mathbf{x})=\bm{\mu}(\mathbf{x})^{\top}$ and $\bm{\mu}(\mathbf{x})^{\top}\mathds{1}_{nm}=1$.
Sufficient structural conditions guaranteeing this property are established in later sections.

Partition $\bm\mu(\mathbf{x})=[\bm\mu_1(\mathbf{x})^\top,\ldots,\bm\mu_n(\mathbf{x})^\top]^\top$,
where $\bm{\mu}_i(\mathbf{x})\in\mathbb{R}^m$ represents the topic-level influence profile associated with agent $i$.
In signed settings, entries of $\bm{\mu}(\mathbf{x})$ may take negative values due to antagonistic interactions.
We adopt the following aggregation rule to obtain the agent-level update:
\begin{equation}\label{equ:AggregatedPower}
	\phi_i(\mathbf{x}):=\mathds{1}_m^\top \bm\mu_i(\mathbf{x})
	=\sum_{p=1}^m \mu_{i,p}(\mathbf{x}).
\end{equation}
Let $\bm\phi(\mathbf{x})=[\phi_1(\mathbf{x}),\ldots,\phi_n(\mathbf{x})]^\top$.
Given that $\sum_{i=1}^n \phi_i(\mathbf{x})
= \sum_{i=1}^n \mathds{1}_m^\top \bm\mu_i(\mathbf{x})
= \mathds{1}_{nm}^\top \bm\mu(\mathbf{x})
=1$,
it follows that $\bm\phi(\mathbf{x})\in\mathcal H_n$.

Following Friedkin's reflected-appraisal mechanism~\cite{friedkin2011formal}, the self-weights are updated across issues according to
\begin{equation}\label{equ:UpdateRule}
	\mathbf{x}(s+1)=\bm\phi(\mathbf{x}(s)).
\end{equation}
Whenever $\bm\phi(\mathbf{x})$ is well defined, \eqref{equ:AggregatedPower}--\eqref{equ:UpdateRule} induce a self-appraisal map on the corresponding subset of $\mathcal H_n$.

\begin{remark}
	In this paper, $\mathbf{x}(s)$ denotes the self-weight vector,
	$\bm\mu(\mathbf{x})$ the topic-level influence profile, and
	$\bm\phi(\mathbf{x})$ the aggregated agent-level update under the present aggregation rule.
	By the so-called ``social power'', we refer to the resulting agent-level influence distribution.
	Whenever the self-weight update in \eqref{equ:UpdateRule} is well defined and convergent, its limit gives the long-term agent-level social power profile.
\end{remark}

\begin{remark}
	Note that we distinguish between the simplex $\Delta_n$ of nonnegative self-weights and the affine hyperplane $\mathcal{H}_n$, the latter constituting the analytical state space for the signed self-appraisal map.
	In cooperative settings, the update is naturally confined to $\Delta_n$.
	By contrast, repelling effects may transiently drive $\mathbf{x}(s)$ outside $\Delta_n$.
	Later sections identify conditions under which the self-appraisal map is well posed on relevant domains and the asymptotic self-weight distribution lies in $\Delta_n$.
\end{remark}

\begin{remark}
	This framework extends the classical DF model by incorporating within-agent topic interdependence through $\{C_i\}_{i\in\mathcal{I}}$ together with repelling-type signed interpersonal interactions through $R$.
	It does not rely on structural balance and gauge transformations~\cite{altafini2012consensus,shi2019dynamics}, and is therefore not restricted to structurally balanced signed networks.
\end{remark}

\section{Dynamics under Shared Topic Weights}
\label{sec:baseline-sharedc}

We examine how the logic matrices $\{C_i\}_{i\in\mathcal I}$ and the relative interaction matrix $R$ shape the evolution of self-weights, with particular emphasis on the resulting equilibrium structure and long-term behavior.
In this section, the focus lies on the shared topic-weight regime, where the logic matrices admit a common normalized left eigenvector; the homogeneous case $C_i\equiv C$ is included as a special case.

Our objective here is to identify structural conditions under which the original self-appraisal map \eqref{equ:UpdateRule} associated with the coupled multi-topic DF dynamics admits an explicit reduction to a single-topic DF map on $\Delta_n$, thereby enabling a complete characterization of equilibria and long-term behavior.

\subsection{Structural Assumptions}
\label{subsec:assumptions}

We begin by introducing structural assumptions on the logic matrices and the relative interaction matrix.
These assumptions are designed to preserve interpretability of topic aggregation, guarantee regularity of the interpersonal influence dynamics, and prepare the ground for the separation analysis developed later in the section.

To maintain interpretability of agents' belief processing and cross-topic aggregation, we impose two modeling constraints.
First, following standard works on multidimensional opinion dynamics models with logical interdependence \cite{parsegov2016novel,ye2020continuous,ye2019consensus,luan2025coevolutionary}, we require each $C_i$ to possess strictly positive diagonal entries, representing self-reinforcement across topics.
Second, 
we require that all $ C_i $ share a common ``dominant'' left eigenvector, which we refer to here as the topic-weight vector and assume to be strictly positive.
These assumptions are stated below.

\begin{assumption}\label{assum:shared-c}
	For each agent $i\in\mathcal I$, the logic matrix $C_i\in\mathbb R^{m\times m}$ is ES and there exists a common vector $\bm c>\bm 0$ such that $\bm c^\top C_i=\bm c^\top \forall i\in\mathcal I$ and $\bm c^\top \mathds{1}_m=1$. Moreover, $\operatorname{diag}(C_i)>\bm 0$.
\end{assumption}

\begin{remark}\label{rmk:model-justification}
	The separation principle developed below can, in principle, be extended to the more general SPF case (allowing signed $\bm c$) or to  more general ES logic matrices with possibly negative diagonals.
	Here we impose Assumption~\ref{assum:shared-c} in order to preserve a nonnegative interpretation of aggregated topic-level influence and, consequently, of the limiting self-weight vector.
	A further implication concerns a structural restriction on signed logic matrices: as discussed  in the Appendix, negative logical interdependencies cannot exist in low dimensions ($m<3$).
	Therefore, negative logical influences require $m\ge 3$ under the present assumptions.
\end{remark}

We next impose structural conditions on the relative interaction matrix.

\begin{assumption}\label{assum:SPF-R}
	The relative interaction matrix $R\in\mathbb R^{n\times n}$ is SPF with $\bm{r}^\top R = \bm{r}^\top$ and $\bm{r}^\top \mathds{1}_{n} = 1$. Moreover, $\operatorname{diag}(R)=\bm 0$.
\end{assumption}

Assumption~\ref{assum:SPF-R} guarantees that the eigenvalue $1$ of $R$ is simple and dominant, consequently the associated normalized left eigenvector $\bm r$ is well defined.
The SPF class is broad enough to include signed relative interaction patterns beyond the ES class.
However, SPF alone is not sufficient for the subsequent DF analysis.
On the one hand, the corresponding vector $\bm r$ may still have mixed signs, which prevents a nonnegative centrality interpretation.
On the other hand, even if one restricts attention to matrices that are already ES and hence admit $\bm r>\bm 0$, the nonlinear state-dependent DF dynamics may still be unstable, with $\rho(W(\mathbf x))>1$ for some $\mathbf x$, as shown in Example~\ref{ex:counter-example-SLC} below.

To obtain a class of matrices enjoying both strict positivity of $\bm r$ and uniform stability of $W(\mathbf x)$ on $\Delta_n$, we introduce the Single-Leader Contractivity (SLC) condition below.
Based on the decomposition in \eqref{equ:DecompositionW}, define for each agent $i\in\mathcal I$
\begin{equation}\label{equ:Qi}
	Q_i:=\mathrm{diag}(\mathbf{e}_i)+(I-\mathrm{diag}(\mathbf{e}_i))R,
	\quad
	\Pi_i:=\mathds{1}_n\mathbf{e}_i^\top.
\end{equation}

\begin{assumption}[Single-Leader Contractivity]
	\label{assum:SLC-R}
	There exist a matrix $P\succ0$ and a scalar $\beta\in(0,1)$ such that $\|Q_i-\Pi_i\|_{P} \le \beta$ for all $i\in\mathcal I$.
\end{assumption}

The above SLC condition requires each single-leader matrix $Q_i$ (corresponding to agent $i$ with full self-trust) to be a strict contraction toward the rank-one projector $\Pi_i$ in a common quadratic norm.
By convexity, this leads to a uniform spectral gap for any convex mixture $W(\mathbf x)=\sum_i x_iQ_i$, which is a key ingredient for the subsequent reduction and global analysis.

\begin{proposition}[Positivity of $\bm r$ and uniform spectral gap]
	\label{prop:SLC-positivity-gap}
	Under Assumptions~\ref{assum:SPF-R} and \ref{assum:SLC-R}, the following statements hold:
	\begin{enumerate}
		\item The normalized left eigenvector $\bm r$ of $R$ is strictly positive. In particular, $R$ is ES.

		\item For $\mathbf x\in\Delta_n$, let $\bm\alpha(\mathbf x)$ be a left eigenvector of $W(\mathbf x)$ associated with eigenvalue $1$, normalized by $\bm\alpha(\mathbf x)^\top\mathds{1}_n=1$. Define $\Pi(\mathbf x):=\mathds{1}_n\bm\alpha(\mathbf x)^\top$, $S(\mathbf x):=\sum_{i=1}^n x_i(Q_i-\Pi_i)$.
		Then,
		\[
		\|S(\mathbf x)\|_P\le \beta<1, \quad \rho\bigl(W(\mathbf x)-\Pi(\mathbf x)\bigr)<1.
		\]
		In particular, the eigenvalue $1$ of $W(\mathbf x)$ is simple and dominant, and the normalized left eigenvector $\bm\alpha(\mathbf x)$ is uniquely defined.
	\end{enumerate}
\end{proposition}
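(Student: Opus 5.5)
We prove item~2 first, because item~1 will follow from the left eigenvector $\bm\alpha(\mathbf x)$ that item~2 provides.

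\emph{Uniform gap.} Each $Q_i$ has row $i$ equal to $\mathbf e_i^\top$ and satisfies $Q_i\mathds{1}=\mathds{1}$. For $\mathbf x\in\Delta_n$, \eqref{equ:DecompositionW} gives $W(\mathbf x)=\sum_i x_iQ_i$, and $\sum_i x_i\Pi_i=\mathds{1}\mathbf x^\top$. Hence
\[
W(\mathbf x)=\mathds{1}\mathbf x^\top+S(\mathbf x).
\]
By the triangle inequality and Assumption~\ref{assum:SLC-R}, $\|S(\mathbf x)\|_P\le\sum_i x_i\|Q_i-\Pi_i\|_P\le\beta$. Next, $S(\mathbf x)\mathds{1}=\mathds{1}-\mathds{1}=\mathbf 0$, so $\mathcal U:=\mathrm{span}(\mathds{1})$ is invariant under both $W(\mathbf x)$ and $S(\mathbf x)$. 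Their induced maps on the quotient $\mathbb R^n/\mathcal U$ coincide, since $\mathds{1}\mathbf x^\top$ maps into $\mathcal U$. Therefore the spectrum of $W(\mathbf x)$ is $\{1\}$ together with the quotient spectrum, and that quotient spectrum is contained in $\Lambda(S(\mathbf x))$. Consequently every eigenvalue of $W(\mathbf x)$ other than the one associated with $\mathds{1}$ has modulus at most $\rho(S(\mathbf x))\le\|S(\mathbf x)\|_P\le\beta<1$. So $1$ is simple and dominant, and $\bm\alpha(\mathbf x)$ exists and is unique after normalization. For the projector, $\Pi(\mathbf x)$ is the spectral projector of this eigenvalue. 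Hence $W-\Pi$ has spectrum $\{0\}\cup(\Lambda(W)\setminus\{1\})$, which gives $\rho(W(\mathbf x)-\Pi(\mathbf x))\le\beta<1$. Because the eigenvalue stays simple on the compact set $\Delta_n$, $\bm\alpha(\mathbf x)$ depends continuously (indeed analytically) on $\mathbf x$. Finally, at a vertex, $\mathbf e_i^\top Q_i=\mathbf e_i^\top$, so $\bm\alpha(\mathbf e_i)=\mathbf e_i$.

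\emph{Positivity of $\bm r$.} Expand $\bm\alpha^\top W(\mathbf x)=\bm\alpha^\top$ using \eqref{equ:DecompositionW}. This gives $\bm\alpha^\top(I-\mathrm{diag}(\mathbf x))(I-R)=\mathbf 0$. By Assumption~\ref{assum:SPF-R} the left kernel of $I-R$ is $\mathrm{span}(\bm r)$. Hence for $\mathbf x\in\mathrm{int}\Delta_n$,
\[
(1-x_i)\alpha_i(\mathbf x)=\kappa(\mathbf x)\,r_i ,
\]
so $\alpha_i=\kappa r_i/(1-x_i)$. The normalization $\bm\alpha^\top\mathds{1}=1$ then yields $\kappa(\mathbf x)Z(\mathbf x)=1$, where $Z(\mathbf x):=\sum_j r_j/(1-x_j)$. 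Thus $Z$ is continuous and never zero on $\mathrm{int}\Delta_n$. Since $\mathrm{int}\Delta_n$ is connected, $Z$ has a constant sign there.

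We next show every $r_i\neq0$. If $r_i=0$, then $\alpha_i(\mathbf x)=0$ throughout $\mathrm{int}\Delta_n$, which contradicts the continuity of $\bm\alpha$ together with $\alpha_i(\mathbf e_i)=1$.

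Now approach $\mathbf e_i$ from the interior. The term $r_i/(1-x_i)$ dominates $Z$, so $\mathrm{sign}\,Z=\mathrm{sign}\,r_i$ near $\mathbf e_i$. Because the sign of $Z$ is constant, all $r_i$ share one sign, and $\bm r^\top\mathds{1}=1$ forces $\bm r>\bm 0$. Combined with $R$ being SPF, this gives $R^\top\in\mathcal{PF}$, i.e.\ $R$ is ES.

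The main obstacle is item~1: the SLC bound by itself gives no sign information. The argument above gets positivity by combining the explicit interior formula $\alpha_i\propto r_i/(1-x_i)$ with continuity of $\bm\alpha$ up to the vertices. The point needing care is the vertex limit: one must check that $\bm\alpha$ extends continuously to $\mathbf e_i$, which holds because item~2 guarantees a uniform spectral gap on all of $\Delta_n$, vertices included.
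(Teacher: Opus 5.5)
Your proof is correct, and it takes a genuinely different route from the paper, most clearly in item~1.

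\textbf{Item 2.} You block-triangularize with respect to $\operatorname{span}\{\mathds{1}_n\}$. From $W(\mathbf x)=\mathds{1}_n\mathbf x^\top+S(\mathbf x)$ and $S(\mathbf x)\mathds{1}_n=\mathbf 0$, the non-unit eigenvalues of $W(\mathbf x)$ are eigenvalues of $S(\mathbf x)$. The paper instead writes $W-\Pi=S+Z$ with $Z=\mathds{1}_n(\mathbf x-\bm\alpha)^\top$, and uses $Z^2=0$ and $SZ=0$ to get $(S+Z)^k=S^k+ZS^{k-1}\to 0$. Since $Z^2=0$ already needs $\bm\alpha^\top\mathds{1}_n=1$, the paper takes the normalized $\bm\alpha(\mathbf x)$ as given. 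You prove simplicity of the eigenvalue $1$ first, and the bound $\rho(W(\mathbf x)-\Pi(\mathbf x))\le\beta$ falls out directly.

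\textbf{Item 1.} The paper argues algebraically and vertex by vertex. From $\Lambda(Q_i-\Pi_i)=\{0\}\cup\Lambda(\widehat R_i)$ it gets $\rho(\widehat R_i)<1$, hence $\det(I-\widehat R_i)>0$. The rank-one adjugate $\operatorname{adj}(I-R)=\gamma\mathds{1}_n\bm r^\top$ then gives $\det(I-\widehat R_i)=\gamma r_i$, so every $r_i$ has the sign of $\gamma$. This uses only $\rho(Q_i-\Pi_i)<1$ at each vertex separately, not the common $P$ or the convex mixtures. So it proves $\bm r>\bm 0$ under a weaker hypothesis than SLC.

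Your topological argument does need the gap along the whole simplex. You use it to get $Z(\mathbf x)\neq 0$ on all of $\mathrm{int}\Delta_n$ and continuity of $\bm\alpha$ at the vertices. In exchange, it yields the interior formula of Lemma~\ref{lem:sep-explicit} as a by-product. It also ties positivity of $\bm r$ directly to the fact that the denominator in \eqref{eq:sep-map} never vanishes.

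\textbf{Link between the two routes.} They are the interior and vertex readings of one identity. Since $I-W(\mathbf x)=(I-\operatorname{diag}(\mathbf x))(I-R)$, one has $\operatorname{tr}\operatorname{adj}(I-W(\mathbf x))=\gamma\sum_j r_j\prod_{k\neq j}(1-x_k)$. This trace always equals $\prod_{k=2}^{n}\bigl(1-\lambda_k(W(\mathbf x))\bigr)$, where $\lambda_1=1$. It equals $\gamma\prod_k(1-x_k)\,Z(\mathbf x)$ on $\mathrm{int}\Delta_n$, and $\gamma r_i=\det(I-\widehat R_i)$ at $\mathbf x=\mathbf e_i$.
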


\begin{proof}
	\textit{Proof of Statement 1.}
	Here and below, $\det(\cdot)$, $\ker(\cdot)$, $\operatorname{rank}(\cdot)$, and $\operatorname{span}\{\cdot\}$ denote the determinant, nullspace, matrix rank, and linear span, respectively.
	
	For each $i\in\mathcal I$, let $\widehat R_i\in\mathbb R^{(n-1)\times(n-1)}$ denote the principal submatrix of $R$ obtained by deleting the $i$-th row and the $i$-th column.
	Let $U_i$ be a permutation matrix that moves index $i$ to the first position.
	Then $R$ can be written in block form as
	\[
	U_i R U_i^\top
	=
	\begin{bmatrix}
		0 & \bm\rho_i^\top\\
		\bm\sigma_i & \widehat R_i
	\end{bmatrix},
	\]
	for some vectors $\bm\rho_i,\bm\sigma_i\in\mathbb R^{n-1}$.
	Using the definitions of $Q_i$ and $\Pi_i$, one obtains
	\[
	U_i(Q_i-\Pi_i)U_i^\top
	=
	\begin{bmatrix}
		0 & \mathbf 0^\top\\
		\bm\sigma_i-\mathds{1}_{n-1} & \widehat R_i
	\end{bmatrix}.
	\]
	Hence $U_i(Q_i-\Pi_i)U_i^\top$ is block lower triangular, and therefore
	\[
	\Lambda(Q_i-\Pi_i)=\{0\}\cup \Lambda(\widehat R_i).
	\]
	
	By Assumption~\ref{assum:SLC-R}, $\rho(Q_i-\Pi_i)\le \|Q_i-\Pi_i\|_P \le \beta<1$.
	It follows that $\rho(\widehat R_i)<1\,\forall i\in\mathcal I$.
	Consequently, every eigenvalue $\lambda_k(\widehat R_i)$ of $\widehat R_i$ satisfies $|\lambda_k(\widehat R_i)|<1$, and thus
	\[
	\det(I_{n-1}-\widehat R_i)
	=
	\prod_{k=1}^{n-1}\bigl(1-\lambda_k(\widehat R_i)\bigr)>0.
	\]
	Indeed, if $\lambda_k(\widehat R_i)\in\mathbb R$, then $|\lambda_k(\widehat R_i)|<1$ means $1-\lambda_k(\widehat R_i)>0$.
	If $\lambda\notin\mathbb R$ is an eigenvalue of $\widehat R_i$, then its complex conjugate $\bar\lambda$ is also an eigenvalue of $\widehat R_i$, and the corresponding factor is $(1-\lambda)(1-\bar\lambda)=|1-\lambda|^2>0$.

	Now define $M:=I_n-R$.
	Since $R$ is SPF, the eigenvalue $1$ of $R$ is simple.
	Equivalently, $0$ is a simple eigenvalue of $M$.
	Therefore, $\ker(M)=\operatorname{span}\{\mathds{1}_n\}$.
	By the rank-nullity theorem, $\operatorname{rank}(M)=n-1$.
	Moreover, $M\mathds{1}_n=\mathbf 0$ and $\bm r^\top M=\mathbf 0^\top$, where $\bm r^\top R=\bm r^\top$ and $\bm r^\top\mathds{1}_n=1$.
	
	Let $\operatorname{adj}(M)$ be the adjugate matrix of $M$, i.e., the transpose of the cofactor matrix.
	Recall that $M \operatorname{adj}(M)=\operatorname{adj}(M) M=\det(M)I_n$.
	Since $\det(M)=0$, this leads to $M \operatorname{adj}(M)=0$ and $\operatorname{adj}(M) M=0$.
	Hence every column of $\operatorname{adj}(M)$ belongs to $\ker(M)=\operatorname{span}\{\mathds{1}_n\}$, while every row belongs to the left nullspace, namely $\operatorname{span}\{\bm r^\top\}$.
	Because $\operatorname{rank}(M)=n-1$, at least one $(n-1)\times(n-1)$ minor of $M$ is nonzero; equivalently, $\operatorname{adj}(M)\neq 0$.
	Therefore, there exists a nonzero scalar $\gamma$ such that $\operatorname{adj}(M)=\gamma \mathds{1}_n\bm r^\top$.
	
	For each $i$, the $(i,i)$ entry of $\operatorname{adj}(M)$ equals the corresponding principal cofactor of $M$, namely
	\[
	[\operatorname{adj}(M)]_{ii}
	=
	\det(I_{n-1}-\widehat R_i).
	\]
	On the other hand, given that $\operatorname{adj}(M)=\gamma \mathds{1}_n\bm r^\top$, we also have $[\operatorname{adj}(M)]_{ii}=\gamma r_i$.
	Hence
	\[
	\det(I_{n-1}-\widehat R_i)=\gamma r_i,\quad \forall i\in\mathcal I.
	\]
	Since the left-hand side is strictly positive for any $i$, all entries $r_i$ have the same sign.
	Together with $\bm r^\top\mathds{1}_n=1$, we get $\bm r>\bm 0$.
	Therefore $R^\top\in\mathcal{PF}$, and hence $R$ is ES.

	\textit{Proof of Statement 2.} By convexity of the induced norm and Assumption~\ref{assum:SLC-R}, we have
	\begin{equation*}
		\begin{aligned}
			\|S(\mathbf x)\|_{P}\! =\! \left\|\sum_{i=1}^n x_i(Q_i-\Pi_i)\right\|_{P} \!\!\!\le\! \sum_{i=1}^n x_i\|Q_i-\Pi_i\|_{P}\le\beta<1.
		\end{aligned}
	\end{equation*}
	Define $Z(\mathbf x):=\mathds{1}_n\big(\mathbf x-\bm\alpha(\mathbf x)\big)^\top$ and note that
	\begin{align*}
		W(\mathbf x)-\Pi(\mathbf x)
		& =\sum_{i=1}^n x_i Q_i - \mathds{1}_n\bm\alpha(\mathbf x)^\top \\
		& =\sum_{i=1}^n x_i(Q_i-\Pi_i)
		+\mathds{1}_n\big(\mathbf x-\bm\alpha(\mathbf x)\big)^\top \\
		& = S(\mathbf x) + Z(\mathbf x).
	\end{align*}
	Direct calculation yields the following algebraic relations:
	\begin{itemize}
		\item $S(\mathbf x)\mathds{1}_n=\mathbf 0$ since $Q_i\mathds{1}_n=\mathds{1}_n$ and $\Pi_i\mathds{1}_n=\mathds{1}_n$ for all $i$;
		\item $Z(\mathbf x)^2= 0$ since $\sum_i(x_i-\alpha_i(\mathbf x))=0$;
		\item $S(\mathbf x)Z(\mathbf x)=0$ since $S(\mathbf x)\mathds{1}_n=\mathbf 0$.
	\end{itemize}
	Hence, for all $k\ge1$,
	\[
	(S+Z)^k = S^k + \sum_{j=0}^{k-1} S^j Z S^{k-1-j} = S^k+ZS^{k-1}.
	\]
	Taking the induced $P$-norm,
	\[
	\|(S+Z)^k\|_{P} \le \|S\|_{P}^k + \|Z\|_{P} \|S\|_{P}^{k-1} \to 0
	\quad  \text{as } k\to\infty,
	\]
	because $\|S\|_{P}\le\beta<1$. Therefore,
	$\rho\big(W(\mathbf x)-\Pi(\mathbf x)\big)<1$.
	
	Next, since $W(\mathbf x)\mathds{1}_n=\mathds{1}_n$ and
	$\bm\alpha(\mathbf x)^\top W(\mathbf x)=\bm\alpha(\mathbf x)^\top$, the rank-one matrix $\Pi(\mathbf x):=\mathds{1}_n\bm\alpha(\mathbf x)^\top$ satisfies
	\[
	\Pi(\mathbf x)^2=\Pi(\mathbf x),\  W(\mathbf x)\Pi(\mathbf x)=\Pi(\mathbf x),\  \Pi(\mathbf x)W(\mathbf x)=\Pi(\mathbf x).
	\]
	Then, $\Pi(\mathbf x)(W(\mathbf x)-\Pi(\mathbf x))=0$ and $(W(\mathbf x)-\Pi(\mathbf x))\Pi(\mathbf x)=0$.
	Hence, for every $k\ge1$,
	\[
	W(\mathbf x)^k=\Pi(\mathbf x)+\bigl(W(\mathbf x)-\Pi(\mathbf x)\bigr)^k \to \Pi(\mathbf x)
	\quad  \text{as } k\to\infty.
	\]
	This implies the eigenvalue $1$ of $W(\mathbf x)$ is simple and dominant; the normalized left eigenvector $\bm\alpha(\mathbf x)$ is uniquely defined.
\end{proof}

Proposition~\ref{prop:SLC-positivity-gap} indicates that any $R$ satisfying Assumptions~\ref{assum:SPF-R} and \ref{assum:SLC-R} is ES and yields both a strictly positive centrality vector $\bm r$ and a uniform spectral gap for $W(\mathbf x)$.
Hence, the class identified by SPF and SLC constitutes a structurally meaningful subclass of ES matrices for the DF dynamics considered here.
However, not all ES matrices satisfy Assumption~\ref{assum:SLC-R}, implying  that strict positivity of $\bm r$ alone does not guarantee the required stability, as demonstrated next.

\begin{example}\label{ex:counter-example-SLC}
	Consider a relative interaction matrix (which is ES and satisfies Assumption~\ref{assum:SPF-R}):
	\begin{equation*}
		R = \begin{bmatrix}
			0 & 1.14 & 0.78 & -0.92 \\
			0.93 & 0 & -0.35 & 0.42 \\
			-0.89 & 0.70 & 0 & 1.19 \\
			0.27 & 0.71 & 0.02 & 0 \\
		\end{bmatrix}
	\end{equation*}
	The associated dominant left eigenvector is strictly positive:
	\[
	\bm{r} = 
	\begin{bmatrix}
		0.3733 & 0.5084 & 0.1133 & 0.0050
	\end{bmatrix}^\top.
	\]
	Nevertheless, the single-leader matrix $Q_3 = \operatorname{diag}(\mathbf e_3) + (I-\operatorname{diag}(\mathbf e_3))R$	has spectral radius $\rho(Q_3) \approx 1.2249 > 1$.
	Hence $\rho(W(\mathbf{x})) > 1$ for some $\mathbf{x}$.
	Consequently, although $R$ admits a well-defined positive centrality vector $\bm r$, the system violates Assumption~\ref{assum:SLC-R}, and the corresponding DF dynamics may fail to enjoy the uniform stability needed for the subsequent analysis.
\end{example}

\subsection{The Algebraic Separation Principle}
\label{subsec:separation}

Having established regularity of the interaction component, we now use a key separability property of the coupled system.
Under the shared topic-weight assumption, the left eigenstructure of the system matrix \eqref{equ:SystemMatrix} at eigenvalue $1$ can be separated into an interaction component and a topic-weight component.
This observation leads to an explicit reduction of the original self-appraisal map on $\Delta_n$.

\begin{lemma}[Algebraic separation identity]\label{lem:sep-algebraic}
	Under Assumption~\ref{assum:shared-c}, for any $\mathbf x\in\mathcal H_n$ and any row vector $\bm\alpha^\top\in\mathbb R^{1\times n}$,
	\begin{equation}\label{eq:separation-identity}
		(\bm\alpha^\top\otimes \bm c^\top) \mathcal A(\mathbf x)
		=
		(\bm\alpha^\top W(\mathbf x))\otimes \bm c^\top.
	\end{equation}
	In particular, if one has $\bm\alpha^\top W(\mathbf x)=\bm\alpha^\top$, then $\bm\alpha^\top\otimes \bm c^\top$ is a left eigenvector of $\mathcal A(\mathbf x)$ associated with eigenvalue $1$.
\end{lemma}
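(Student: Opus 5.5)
The identity follows from a direct block computation using the mixed-product property of the Kronecker product. I would first write $\bm\alpha^\top\otimes\bm c^\top$ as the row vector $[\alpha_1\bm c^\top,\ldots,\alpha_n\bm c^\top]$, partitioned into $n$ blocks of length $m$. Multiplying by the block-diagonal matrix $\mathcal C_{[n]}$ then acts blockwise, giving $[\alpha_1\bm c^\top C_1,\ldots,\alpha_n\bm c^\top C_n]$. Assumption~\ref{assum:shared-c} gives $\bm c^\top C_i=\bm c^\top$ for every $i$, so this reduces to $[\alpha_1\bm c^\top,\ldots,\alpha_n\bm c^\top]=\bm\alpha^\top\otimes\bm c^\top$. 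Equivalently, in compact form,
\[
(\bm\alpha^\top\otimes\bm c^\top)\,\mathcal C_{[n]}=\bm\alpha^\top\otimes\bm c^\top .
\]
This is the only place where the shared topic-weight hypothesis enters.

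Next I would apply the mixed-product rule $(A\otimes B)(C\otimes D)=(AC)\otimes(BD)$ to the remaining factor $W(\mathbf x)\otimes I_m$:
\[
(\bm\alpha^\top\otimes\bm c^\top)(W(\mathbf x)\otimes I_m)=(\bm\alpha^\top W(\mathbf x))\otimes(\bm c^\top I_m)=(\bm\alpha^\top W(\mathbf x))\otimes\bm c^\top .
\]
Since $\mathcal A(\mathbf x)=\mathcal C_{[n]}(W(\mathbf x)\otimes I_m)$, combining the two steps yields \eqref{eq:separation-identity}. No property of $\mathbf x\in\mathcal H_n$ or of $R$ is needed here; $W(\mathbf x)$ is simply the matrix defined in \eqref{equ:DecompositionW}.

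For the ``in particular'' claim, I would substitute $\bm\alpha^\top W(\mathbf x)=\bm\alpha^\top$ into the identity, which gives $(\bm\alpha^\top\otimes\bm c^\top)\mathcal A(\mathbf x)=\bm\alpha^\top\otimes\bm c^\top$. It remains to note that this vector is nonzero whenever $\bm\alpha\neq\bm 0$, because $\bm c>\bm 0$. Hence it is a genuine left eigenvector of $\mathcal A(\mathbf x)$ for the eigenvalue $1$.

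I expect no real obstacle. The only step needing care is the first one: keeping the block indexing consistent with the ordering of $\mathbf y$ in \eqref{equ:DynamicsOfAllAgents}, so that block $i$ of the row vector is indeed multiplied by $C_i$.
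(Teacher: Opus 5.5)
Your proposal is correct and follows the paper's own proof: the block-diagonal $\mathcal C_{[n]}$ acts blockwise and $\bm c^\top C_i=\bm c^\top$ collapses it, then the mixed-product rule handles $W(\mathbf x)\otimes I_m$. Your extra remark that $\bm\alpha^\top\otimes\bm c^\top$ is a genuine (nonzero) eigenvector only when $\bm\alpha\neq\bm 0$ is a small point the paper leaves implicit.
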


\begin{proof}
	Using Assumption~\ref{assum:shared-c}, namely $\bm c^\top C_i=\bm c^\top \forall i$, one has $(\bm\alpha^\top\otimes \bm c^\top)\mathcal C_{[n]} 
	= [\alpha_1\bm c^\top C_1,\dots,\alpha_n\bm c^\top C_n] 
	= \bm\alpha^\top\otimes \bm c^\top$.
	Then,
	\[
	(\bm\alpha^\top\otimes \bm c^\top)\mathcal A(\mathbf x)
	=
	(\bm\alpha^\top\otimes \bm c^\top)(W(\mathbf x)\otimes I_m)
	=
	(\bm\alpha^\top W(\mathbf x))\otimes \bm c^\top,
	\]
	which proves \eqref{eq:separation-identity}. In particular, if
	$\bm\alpha^\top W(\mathbf x)=\bm\alpha^\top$, then $(\bm\alpha^\top\otimes \bm c^\top)\mathcal A(\mathbf x)
	=
	\bm\alpha^\top\otimes \bm c^\top$.
	Therefore $\bm\alpha^\top\otimes \bm c^\top$ is a left eigenvector of $\mathcal A(\mathbf x)$ associated with eigenvalue $1$.
\end{proof}

The above lemma identifies, for any $\mathbf x\in\Delta_n$, a natural candidate for a left eigenvector of $\mathcal A(\mathbf x)$ at eigenvalue $1$.
However, Assumption~\ref{assum:shared-c} alone does not guarantee this eigenvector is the dominant influence profile underlying the self-appraisal map.
We therefore impose the  coupled spectral regularity condition.

\begin{assumption}
	\label{assum:coupled-regularity}
	For every $\mathbf x\in\Delta_n$, the system matrix $\mathcal A(\mathbf x)$ has eigenvalue $1$ as a simple dominant eigenvalue.
\end{assumption}

\begin{remark}
	Assumption~\ref{assum:coupled-regularity} is automatically satisfied in the homogeneous case $C_i\equiv C$, where $\mathcal A(\mathbf x)=W(\mathbf x)\otimes C$.
	More generally, it still allows genuinely heterogeneous logic matrices, since the matrices $\{C_i\}_{i\in\mathcal I}$ may differ while sharing the same topic-weight vector $\bm c$.
	Its role is not to eliminate logical interdependence, but to ensure that the separated eigenvector identified in Lemma~\ref{lem:sep-algebraic} is exactly the dominant left eigenvector underlying the original self-appraisal map.
\end{remark}

Proposition~\ref{prop:SLC-positivity-gap} guarantees strict positivity of $\bm r$ and a uniform spectral gap for $W(\mathbf x)$.
Combined with Lemma~\ref{lem:sep-algebraic} and Assumption~\ref{assum:coupled-regularity}, this leads to an explicit characterization of the original self-appraisal map on $\Delta_n$ and reduces the coupled multi-topic DF dynamics to an equivalent single-topic DF map.

\begin{lemma}[Model reduction and explicit DF map]\label{lem:sep-explicit}
	Under Assumptions~\ref{assum:shared-c}--\ref{assum:coupled-regularity}, for any $\mathbf x\in\Delta_n$, the matrix $W(\mathbf x)$ admits a unique normalized dominant left eigenvector $\bm\alpha(\mathbf x)$.
	Moreover, restricting to the simplex $\Delta_n$, one has
	\begin{equation}\label{eq:sep-map}
		\alpha_i(\mathbf x)=
		\begin{cases}
			1 & \text{if } \mathbf x=\mathbf e_i \text{ for some } i,\\
			\displaystyle \dfrac{r_i}{1-x_i}\Big/\sum_{j=1}^n\dfrac{r_j}{1-x_j}
			& \text{if } \mathbf x \in \Delta_n \setminus \{\mathbf{e}_1, \dots, \mathbf{e}_n\}.
		\end{cases}
	\end{equation}
	Furthermore, $\bm\mu(\mathbf x)=\bm\alpha(\mathbf x)\otimes \bm c$ is the normalized dominant left eigenvector of $\mathcal A(\mathbf x)$,
	and the self-appraisal map simplifies to $\bm\phi(\mathbf x)=\bm\alpha(\mathbf x)$ on $\Delta_n$.
\end{lemma}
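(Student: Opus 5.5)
The proof has three parts: well-posedness of $\bm\alpha(\mathbf x)$, an explicit eigenvector formula, and the lift to $\mathcal A(\mathbf x)$ via the separation identity.

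\textbf{Step 1: well-posedness.} By Statement 2 of Proposition~\ref{prop:SLC-positivity-gap}, for every $\mathbf x\in\Delta_n$ the eigenvalue $1$ of $W(\mathbf x)$ is simple and dominant. Hence the normalized left eigenvector $\bm\alpha(\mathbf x)$ exists and is unique, and the first claim of the lemma requires no further work.

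\textbf{Step 2: explicit formula for $\bm\alpha(\mathbf x)$.} I would split into two cases.

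At a vertex $\mathbf x=\mathbf e_i$, we have $W(\mathbf e_i)=Q_i$. The $i$-th row of $Q_i$ is $\mathbf e_i^\top$, so $\mathbf e_i^\top Q_i=\mathbf e_i^\top$ directly from the block form used in the proof of Proposition~\ref{prop:SLC-positivity-gap}. By uniqueness, $\bm\alpha(\mathbf e_i)=\mathbf e_i$.

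For $\mathbf x\in\Delta_n$ not a vertex, every $x_j<1$, so $D:=I-\operatorname{diag}(\mathbf x)$ is invertible with positive diagonal. Write $W(\mathbf x)=I-D+DR=I-D(I-R)$. Then $\bm\alpha^\top W=\bm\alpha^\top$ is equivalent to $\bm\alpha^\top D(I-R)=\mathbf 0^\top$. This says $D\bm\alpha$ lies in the left null space of $I-R$, which is spanned by $\bm r$ because the eigenvalue $1$ of $R$ is simple under Assumption~\ref{assum:SPF-R}. Hence $\bm\alpha\propto D^{-1}\bm r$, that is, $\alpha_i\propto r_i/(1-x_i)$. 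Since $\bm r>\bm 0$ by Statement 1 of Proposition~\ref{prop:SLC-positivity-gap}, the normalizing sum is strictly positive, and normalization gives \eqref{eq:sep-map}.

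\textbf{Step 3: lift to $\mathcal A(\mathbf x)$ and identify $\bm\phi$.} By Lemma~\ref{lem:sep-algebraic}, $\bm\alpha(\mathbf x)^\top\otimes\bm c^\top$ is a left eigenvector of $\mathcal A(\mathbf x)$ for eigenvalue $1$. Its entry sum is $(\bm\alpha^\top\mathds{1}_n)(\bm c^\top\mathds{1}_m)=1$, so it is already normalized. Assumption~\ref{assum:coupled-regularity} makes $1$ a simple dominant eigenvalue of $\mathcal A(\mathbf x)$, so its left eigenspace is one-dimensional and $\bm\mu(\mathbf x)=\bm\alpha(\mathbf x)\otimes\bm c$. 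Finally, $\bm\mu_i=\alpha_i\bm c$, so $\phi_i=\mathds 1_m^\top\bm\mu_i=\alpha_i\,\bm c^\top\mathds 1_m=\alpha_i$.

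There is no real obstacle here. The one step that needs care is Step 3: without Assumption~\ref{assum:coupled-regularity}, $\mathcal A(\mathbf x)$ could have other eigenvalue-$1$ directions or a larger eigenvalue, and the separated vector from Lemma~\ref{lem:sep-algebraic} would not be the one defining $\bm\phi$. The invertibility of $D$ off the vertices and the positivity of $\bm r$ are the other two points to check, and both are covered above.
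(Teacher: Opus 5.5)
Your proposal is correct and follows essentially the same route as the paper: uniqueness from Proposition~\ref{prop:SLC-positivity-gap}, the left-kernel argument $\bm\alpha^\top\operatorname{diag}(\mathds{1}_n-\mathbf x)(I-R)=\mathbf 0^\top$ giving $\alpha_i\propto r_i/(1-x_i)$ off the vertices, $\mathbf e_i^\top Q_i=\mathbf e_i^\top$ at the vertices, and the lift through Lemma~\ref{lem:sep-algebraic} together with Assumption~\ref{assum:coupled-regularity}. The only cosmetic difference is at a vertex: you use the uniqueness already supplied by Proposition~\ref{prop:SLC-positivity-gap}, whereas the paper re-derives that the eigenvalue $1$ of $Q_i$ is simple from $Q_i^k\to\Pi_i$.
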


\begin{proof}
	By Proposition~\ref{prop:SLC-positivity-gap}, for any $\mathbf x\in\Delta_n$, the matrix $W(\mathbf x)$ has a unique normalized dominant left eigenvector $\bm\alpha(\mathbf x)$.

	Consider $\mathbf x\in\Delta_n\setminus\{\mathbf e_1,\dots,\mathbf e_n\}$.
	The left eigen-equation $\bm\alpha^\top W(\mathbf x)=\bm\alpha^\top$ yields $\bm\alpha^\top [\mathrm{diag}(\mathbf x)+(I-\mathrm{diag}(\mathbf x))R]=\bm\alpha^\top$. Then, $\bm\alpha^\top(I-R)=\bm\alpha^\top\mathrm{diag}(\mathbf x)(I-R)$.
	Rearranging gives
	\[
	\big(\bm\alpha^\top\mathrm{diag}(\mathds{1}_n-\mathbf x)\big)(I-R)=\bm 0^\top.
	\]
	Since $R$ is SPF and the eigenvalue $1$ is simple, the left kernel of $(I-R)$ is spanned by $\bm r^\top$, where $\bm r^\top R=\bm r^\top$ and $\bm r^\top\mathds{1}_n=1$.
	Hence there exists a scalar $\kappa\neq 0$ such that
	\[
	\bm\alpha^\top\mathrm{diag}(\mathds{1}_n-\mathbf x)=\kappa \bm r^\top,
	\]
	which implies $\alpha_i(1-x_i)=\kappa r_i,\,\forall i=1,\ldots,n$.
	Normalizing with $\sum_i\alpha_i(\mathbf x)=1$ yields
	\[
	\alpha_i(\mathbf x)=\frac{r_i/(1-x_i)}{\sum_{j=1}^n r_j/(1-x_j)},
	\]
	which is exactly \eqref{eq:sep-map} on $\Delta_n\setminus\{\mathbf e_1,\dots,\mathbf e_n\}$.

	Now consider a vertex $\mathbf x=\mathbf e_i$.
	Then $W(\mathbf e_i)=Q_i$.
	Moreover, $Q_i\mathds{1}_n=\mathds{1}_n$ and $\mathbf e_i^\top Q_i=\mathbf e_i^\top$, hence $Q_i\Pi_i=\Pi_iQ_i=\Pi_i$.
	Since $\rho(Q_i-\Pi_i)<1$ by Proposition~\ref{prop:SLC-positivity-gap}, it follows that
	\[
	Q_i^k=\Pi_i+(Q_i-\Pi_i)^k\to \Pi_i
	\quad  \text{as } k\to\infty,
	\]
	which shows that $1$ is a simple dominant eigenvalue of $Q_i$, and its unique normalized left eigenvector is $\mathbf e_i$.
	Thus $\bm\alpha(\mathbf e_i)=\mathbf e_i$, matching the vertex case in \eqref{eq:sep-map}.

	By Lemma~\ref{lem:sep-algebraic}, $\bm\alpha(\mathbf x)\otimes \bm c$ is a left eigenvector of $\mathcal A(\mathbf x)$ associated with eigenvalue $1$.
	By Assumption~\ref{assum:coupled-regularity}, the eigenvalue $1$ of $\mathcal A(\mathbf x)$ is simple and dominant, so its normalized dominant left eigenvector is uniquely defined.
	Since $(\bm\alpha(\mathbf x)^\top\otimes \bm c^\top)\mathds{1}_{nm}
	=
	(\bm\alpha(\mathbf x)^\top\mathds{1}_n)(\bm c^\top\mathds{1}_m)
	=
	1$,
	it follows that $\bm\mu(\mathbf x)=\bm\alpha(\mathbf x)\otimes \bm c$.
	
	Finally, recalling the definition of $\bm\phi$ in \eqref{equ:AggregatedPower}, we compute
	\[
	\phi_i(\mathbf x)
	=
	\sum_{p=1}^m \mu_{i,p}(\mathbf x)
	=
	\sum_{p=1}^m \alpha_i(\mathbf x)c_p
	=
	\alpha_i(\mathbf x) \bm c^\top\mathds{1}_m
	=
	\alpha_i(\mathbf x),
	\]
	where we used $\bm c^\top\mathds{1}_m=1$.
	Thus $\bm\phi(\mathbf x)=\bm\alpha(\mathbf x)$ on $\Delta_n$.
\end{proof}

\begin{remark}\label{rmk:domain-singularity}
	Although the model is defined on the hyperplane $\mathcal H_n$, Lemma~\ref{lem:sep-explicit} specializes the analysis to the simplex $\Delta_n$ for two reasons.
	First, the explicit formula \eqref{eq:sep-map} may encounter singularities outside $\Delta_n$ (e.g., when the denominator vanishes), hence the closed-form representation is not naturally defined on all of $\mathcal H_n$.
	Second, as shown in Corollary~\ref{cor:simplex}, $\Delta_n$ is forward invariant under the original self-appraisal map.
	Therefore, for trajectories initialized in $\Delta_n$, the asymptotic analysis can be carried out entirely on the simplex without loss of generality.
\end{remark}

\begin{remark}\label{rmk:n-2-trivial}
	In this paper we focus on the nontrivial case $n\ge 3$.
	Indeed, under Assumption~\ref{assum:SPF-R}, if $n=2$ then the zero-diagonal constraint forces
	$R=\big[\begin{smallmatrix} 0 & 1 \\ 1 & 0 \end{smallmatrix}\big]$, which is exactly a permutation matrix and doubly stochastic with $\bm r=\tfrac12\mathds{1}_2$.
	Substituting into \eqref{eq:sep-map} yields $\bm\alpha(\mathbf x)=\mathbf x$ for all $\mathbf x\in\Delta_2$, and hence, by Lemma~\ref{lem:sep-explicit}, the self-weight dynamics are static.
\end{remark}

\begin{corollary}[Forward invariance of the simplex]\label{cor:simplex}
	Under Assumptions~\ref{assum:shared-c}--\ref{assum:coupled-regularity}, the simplex $\Delta_n$ is forward invariant. Specifically, if $\mathbf{x}(0) \in \Delta_n$, then $\mathbf{x}(s) \in \Delta_n$ for all $s \ge 0$.
\end{corollary}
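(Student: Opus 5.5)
The argument is an induction on $s$, and the work sits in a one-step claim: for every $\mathbf x\in\Delta_n$, the vector $\bm\phi(\mathbf x)$ lies again in $\Delta_n$. Once this is shown, $\mathbf x(0)\in\Delta_n$ gives $\mathbf x(1)=\bm\phi(\mathbf x(0))\in\Delta_n$, and iterating the update rule \eqref{equ:UpdateRule} gives $\mathbf x(s)\in\Delta_n$ for all $s\ge0$. To prove the claim I would not work with $\mathcal A(\mathbf x)$ directly. Instead I would use Lemma~\ref{lem:sep-explicit}, which shows that under Assumptions~\ref{assum:shared-c}--\ref{assum:coupled-regularity} the map $\bm\phi$ is well defined on $\Delta_n$ and equals $\bm\alpha(\mathbf x)$, given by the explicit formula \eqref{eq:sep-map}.

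The proof of the claim splits into the two cases of \eqref{eq:sep-map}.

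\emph{Vertex case.} If $\mathbf x=\mathbf e_i$, then $\bm\phi(\mathbf x)=\mathbf e_i\in\Delta_n$, and there is nothing to prove.

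\emph{Non-vertex case.} If $\mathbf x\in\Delta_n\setminus\{\mathbf e_1,\dots,\mathbf e_n\}$, I first check that every denominator $1-x_j$ is strictly positive. Since $\mathbf x\ge\mathbf 0$ and $\mathds{1}^\top\mathbf x=1$, each $x_j\le 1$. Equality $x_j=1$ would force all other entries to vanish, so that $\mathbf x=\mathbf e_j$, which is excluded. Hence $1-x_j>0$ for all $j$. Next, Statement~1 of Proposition~\ref{prop:SLC-positivity-gap} gives $\bm r>\bm 0$. It follows that every term $r_j/(1-x_j)$ is strictly positive, and therefore so is their sum. Consequently each $\phi_i(\mathbf x)=\alpha_i(\mathbf x)$ is strictly positive, and the entries sum to $1$ by construction. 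So $\bm\phi(\mathbf x)\in\mathrm{int}\,\Delta_n\subset\Delta_n$.

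The only points that need care are that the formula requires $\mathbf x\in\Delta_n$, which is exactly the inductive hypothesis, and the positivity of $\bm r$. Positivity of $\bm r$ is the one substantive ingredient, and it comes from the SLC assumption through Proposition~\ref{prop:SLC-positivity-gap}. As a byproduct, the argument shows that any non-vertex point is mapped into $\mathrm{int}\,\Delta_n$, while vertices are fixed.
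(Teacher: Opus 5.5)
Your proof is correct and follows the paper's argument. Like the paper, you use Lemma~\ref{lem:sep-explicit} to identify $\bm\phi$ with the explicit map $\bm\alpha$ on $\Delta_n$, then use $\bm r>\bm 0$ from Proposition~\ref{prop:SLC-positivity-gap} to get $\bm\alpha(\mathbf x)\in\Delta_n$, and iterate; your check that $1-x_j>0$ away from the vertices and your separate vertex case simply make explicit details the paper leaves implicit.
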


\begin{proof}
	By Proposition~\ref{prop:SLC-positivity-gap}, $\bm r>\bm 0$.
	For every $\mathbf x\in\Delta_n$, the explicit map \eqref{eq:sep-map} defines $\bm\alpha(\mathbf x)$ as a ratio of nonnegative terms and satisfies $\sum_i \alpha_i(\mathbf x)=1$, so $\bm\alpha(\mathbf x)\in\Delta_n$.
	Since Lemma~\ref{lem:sep-explicit} gives $\bm\phi(\mathbf x)=\bm\alpha(\mathbf x)$ on $\Delta_n$, the trajectory remains in $\Delta_n$.
\end{proof}

\subsection{Equilibria and Global Behavior}
\label{subsec:convergence}

We now study the global evolution of the self-weight vector $\mathbf x(s)$.
According to Lemma~\ref{lem:sep-explicit}, on $\Delta_n$ the self-appraisal map admits the explicit representation \eqref{eq:sep-map}.
Any interior equilibrium $\mathbf x^*=\bm\phi(\mathbf x^*)$ must therefore satisfy, for some scalar $\kappa$,
\begin{equation}\label{eq:FP-scalar}
	x_i^*(1-x_i^*)=\kappa r_i,\quad i=1,\ldots,n,
\end{equation}
where $\kappa=(\sum_j r_j/(1-x_j^*))^{-1}$.
For each $i$, \eqref{eq:FP-scalar} has two admissible roots: a ``small-branch'' solution $x_i^-(\kappa)\in(0,\tfrac12]$ and a ``large-branch'' solution $x_i^+(\kappa)\in(\tfrac12,1)$.
Thus, characterizing interior equilibria reduces to selecting, for each agent, one of the two branches subject to the simplex constraint $\sum_i x_i=1$.

Before classifying equilibria and establishing global convergence, we first isolate a monotonicity property that governs the ``no-interior'' regime.
It shows that once one agent's centrality reaches the threshold $1/2$, every nonvertex trajectory is pushed monotonically toward the corresponding vertex.

\begin{lemma}[Vertex attraction under the dominant centrality]
	\label{lem:vertex-attraction}
	Under Assumptions~\ref{assum:shared-c}--\ref{assum:coupled-regularity}, let $i_{\max}$ denote the unique index such that $r_{i_{\max}}=r_{\max}\ge \tfrac12$.
	Then, for every $\mathbf x\in\Delta_n\setminus\{\mathbf e_1,\dots,\mathbf e_n\}$, $\phi_{i_{\max}}(\mathbf x)>x_{i_{\max}}$.
	Consequently, for every initial state $\mathbf x(0)\in\Delta_n\setminus\{\mathbf e_1,\dots,\mathbf e_n\}$, the trajectory of \eqref{equ:UpdateRule} satisfies $\mathbf x(s)\to \mathbf e_{i_{\max}}$.
\end{lemma}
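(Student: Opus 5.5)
The plan is to use the explicit representation \eqref{eq:sep-map} from Lemma~\ref{lem:sep-explicit} to prove the strict inequality pointwise. Monotonicity plus a compactness argument on $\Delta_n$ then gives convergence. Throughout, write $i:=i_{\max}$. By Proposition~\ref{prop:SLC-positivity-gap}, $\bm r>\bm 0$. Uniqueness of the index holds because $r_i\ge\tfrac12$ and $\sum_j r_j=1$ with all $r_j>0$ and $n\ge 3$ (Remark~\ref{rmk:n-2-trivial}).

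\emph{Step 1 (strict increase).} Fix $\mathbf x\in\Delta_n\setminus\{\mathbf e_1,\dots,\mathbf e_n\}$, so that $x_j<1$ for all $j$. Multiplying $\phi_i(\mathbf x)>x_i$ by the positive denominator $\sum_j r_j/(1-x_j)$, the claim is equivalent to
\[
x_i\sum_{j\neq i}\frac{r_j}{1-x_j}<r_i .
\]
For $j\neq i$ we have $x_j\le 1-x_i$, hence $1-x_j\ge x_i$ and $x_i/(1-x_j)\le 1$. Therefore the left-hand side is at most $\sum_{j\ne i}r_j=1-r_i\le r_i$.

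To get strictness, I would argue as follows. If $x_i=0$, the left-hand side is $0<r_i$. If $0<x_i<1$, equality in the first bound would require $x_j=1-x_i>0$ for every $j\neq i$, since every $r_j>0$. Because $n\ge3$, there are at least two such $j$, so $\sum_{j\ne i}x_j\ge 2(1-x_i)>1-x_i$. This contradicts $\mathbf x\in\Delta_n$. Hence the inequality is strict.

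\emph{Step 2 (trajectory stays off the vertices).} For non-vertex $\mathbf x$, every term $r_j/(1-x_j)$ is finite and positive. Hence $\phi(\mathbf x)\in\mathrm{int}\Delta_n$, which in particular is non-vertex. By Corollary~\ref{cor:simplex} and induction, $\mathbf x(s)$ is non-vertex for all $s$, and $\mathbf x(s)\in\mathrm{int}\Delta_n$ for $s\ge1$. Thus the sequence $x_i(s)$ is strictly increasing, bounded by $1$, and converges to some $L\in(0,1]$, where $L>0$ since $x_i(1)>0$.

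\emph{Step 3 (the limit is the vertex).} This is the main obstacle, because $\phi$ is discontinuous at the vertices and so a naive fixed-point argument fails. Suppose $L<1$. By compactness of $\Delta_n$, pick a subsequence $\mathbf x(s_k)\to\bar{\mathbf x}$ with $\bar x_i=L\in(0,1)$. Then $\bar{\mathbf x}$ is not a vertex: it is not $\mathbf e_i$ because $L<1$, and it is not $\mathbf e_j$ for $j\neq i$ because $\bar x_i>0$. Near $\bar{\mathbf x}$ all $1-x_j$ are bounded away from $0$, so $\phi$ is continuous there. Hence
\[
\phi_i(\mathbf x(s_k))=x_i(s_k+1)\to L .
\]
By continuity, this limit also equals $\phi_i(\bar{\mathbf x})$, and Step 1 gives $\phi_i(\bar{\mathbf x})>L$, a contradiction. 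Therefore $L=1$. Since $\mathbf x(s)\in\Delta_n$, this forces $x_j(s)\to0$ for all $j\ne i$, i.e.\ $\mathbf x(s)\to\mathbf e_{i_{\max}}$.
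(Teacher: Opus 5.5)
Your proof is correct and follows essentially the same route as the paper: the same termwise bound $x_{i_{\max}}r_j/(1-x_j)\le r_j$ with strictness from $n\ge 3$, then monotone convergence of $x_{i_{\max}}(s)$, then a compactness argument ruling out a limit below $1$ (the paper uses a uniform positive increment on $\{x_{i_{\max}}\le 1-\varrho\}$, you use a convergent subsequence). Your use of $\bar x_i=L>0$ to keep the limit point away from the other vertices is more careful than the paper, whose set $\mathcal K_\varrho$ contains the fixed points $\mathbf e_j$, $j\neq i_{\max}$, where the increment vanishes; your explicit check that trajectories never reach a vertex also fills a step the paper leaves implicit.

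One minor correction to your motivation: $\bm\phi$ is in fact continuous at the vertices. For non-vertex $\mathbf x$ one has $\alpha_j(\mathbf x)=r_j\big/\bigl(r_j+(1-x_j)\sum_{k\neq j}r_k/(1-x_k)\bigr)\to 1$ as $\mathbf x\to\mathbf e_j$. The real obstruction is only that the other vertices are fixed points, and your argument already handles that.
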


\begin{proof}
	By Proposition~\ref{prop:SLC-positivity-gap}, one has $\bm r\in\mathrm{int}\Delta_n$.
	Since $r_{i_{\max}}\ge \tfrac12$ and $\sum_i r_i=1$, the maximizer $i_{\max}$ is unique.
	
	Let $\mathbf x\in\Delta_n\setminus\{\mathbf e_1,\dots,\mathbf e_n\}$.
	By Lemma~\ref{lem:sep-explicit}, one has $\bm\phi(\mathbf x)=\bm\alpha(\mathbf x)$ and
	\[
	\phi_{i_{\max}}(\mathbf x) = \alpha_{i_{\max}}(\mathbf x) = \frac{r_{i_{\max}}/(1-x_{i_{\max}})}{\sum_{j=1}^n r_j/(1-x_j)}.
	\]
	Therefore, $\phi_{i_{\max}}(\mathbf x)>x_{i_{\max}}$ is equivalent to
	\[
	r_{i_{\max}} > x_{i_{\max}} \sum_{j\neq i_{\max}} \frac{r_j}{1-x_j}.
	\]
	For each $j\neq i_{\max}$, one has $x_j\le 1-x_{i_{\max}}$, hence $1-x_j\ge x_{i_{\max}}$.
	It follows that $x_{i_{\max}}r_j/(1-x_j)\le r_j$ for all $j\neq i_{\max}$.
	Summing over $j\neq i_{\max}$ gives
	\[
	x_{i_{\max}} \sum_{j\neq i_{\max}} \frac{r_j}{1-x_j}
	\le \sum_{j\neq i_{\max}} r_j
	= 1-r_{i_{\max}}
	\le r_{i_{\max}}.
	\]
	Moreover, the inequality is strict whenever $\mathbf x\notin\{\mathbf e_1,\dots,\mathbf e_n\}$.
	Indeed, since $n\ge 3$ and $\mathbf x\in\Delta_n\setminus\{\mathbf e_1,\dots,\mathbf e_n\}$, there exists at least one $j\neq i_{\max}$ with $x_j<1-x_{i_{\max}}$, which implies $1-x_j>x_{i_{\max}}$ and hence $x_{i_{\max}}r_j/(1-x_j)< r_j$.
	Therefore, $\phi_{i_{\max}}(\mathbf x)>x_{i_{\max}}$ for every $\mathbf x\in\Delta_n\setminus\{\mathbf e_1,\dots,\mathbf e_n\}$.
	
	Define $V(\mathbf x):=1-x_{i_{\max}}$.
	Then $V(\mathbf x)\ge 0$ on $\Delta_n$, and for every $\mathbf x\in\Delta_n\setminus\{\mathbf e_1,\dots,\mathbf e_n\}$,
	\[
	V(\bm\phi(\mathbf x)) = 1-\phi_{i_{\max}}(\mathbf x) < 1-x_{i_{\max}} = V(\mathbf x),
	\]
	which implies that $x_{i_{\max}}(s)$ is strictly increasing along every nonvertex trajectory and bounded above by $1$, so $x_{i_{\max}}(s)\to \epsilon$ for some $\epsilon\in(0,1]$.
	
	If $\epsilon<1$, choose $\varrho:=(1-\epsilon)/2>0$.
	Then $x_{i_{\max}}(s)\le 1-\varrho$ for all sufficiently large $s$, so the tail of the trajectory always lies in the compact set $\mathcal K_\varrho:=\{\mathbf x\in\Delta_n: x_{i_{\max}}\le 1-\varrho\}$,
	which does not contain $\mathbf e_{i_{\max}}$.
	Since the map $\mathbf x\mapsto \phi_{i_{\max}}(\mathbf x)-x_{i_{\max}}$ is continuous and strictly positive on $\mathcal K_\varrho$, there exists $\eta_\varrho>0$ such that $\phi_{i_{\max}}(\mathbf x)-x_{i_{\max}}\ge \eta_\varrho, \forall \mathbf x\in\mathcal K_\varrho$.
	Therefore, $x_{i_{\max}}(s+1)-x_{i_{\max}}(s)\ge \eta_\varrho$ for all sufficiently large $s$, contradicting the convergence of $x_{i_{\max}}(s)$. Hence $\epsilon=1$.
	
	Finally, since $\mathbf x(s)\in\Delta_n$ and $\sum_i x_i(s)=1$ for all $s$, the convergence $x_{i_{\max}}(s)\to 1$ implies $x_j(s)\to 0$ for all $j\neq i_{\max}$.
	Therefore, $\mathbf x(s)\to \mathbf e_{i_{\max}}$.
\end{proof}

Lemma~\ref{lem:vertex-attraction} resolves the regime in which no interior equilibrium can persist.
We now combine this monotonicity mechanism with the branch structure of \eqref{eq:FP-scalar} to obtain a complete classification of equilibria and global behavior.

\begin{theorem}[Interior equilibria and global behavior]
	\label{thm:global-final}
	Suppose that Assumptions~\ref{assum:shared-c}--\ref{assum:coupled-regularity} hold.
	Let $\bm r\in\mathrm{int}\Delta_n$ be the strictly positive dominant left eigenvector of $R$.
	Set $r_{\max}:=\max_i r_i$ and $\kappa_{\max}:=1/(4r_{\max})$.
	When the maximizer of $\bm r$ is unique, denote it by $i_{\max}$.
	Define the cumulative function for the small-branch solutions:
	\[
	\ell(\kappa):=\sum_{i=1}^n x_i^-(\kappa)
	=\sum_{i=1}^n \tfrac12 \left(1-\sqrt{1-4\kappa r_i}\right),
	\  \kappa\in(0,\kappa_{\max}].
	\]
	Then the following statements hold:
	\begin{enumerate}
		\item \textit{Boundary equilibria.}  
		The set of boundary equilibria is exactly the set of vertices $\{\mathbf e_1,\dots,\mathbf e_n\}$.
		
		\item \textit{Interior equilibria.}  
		Exactly one of the following three mutually exclusive cases occurs:
		\begin{itemize}
			\item[(a)] \textit{Small branch.}  
			If $\ell(\kappa_{\max}) \ge 1$, there exists a unique interior equilibrium
			$\mathbf x^*\in(0,\tfrac12]^n$.
			
			\item[(b)] \textit{Mixed branch.}  
			If $\ell(\kappa_{\max}) < 1$ and $r_{\max}<\tfrac12$, then necessarily the maximizer of $\bm r$ is unique; denoting it by $i_{\max}$, there exists a unique interior equilibrium $\mathbf x^*$ such that $x_{i_{\max}}^*>\tfrac12$ and $x_j^*<\tfrac12\,\forall j\neq i_{\max}$.
			
			\item[(c)] \textit{No interior equilibrium.}  
			If $r_{\max}\ge\tfrac12$, then there is no interior equilibrium.
		\end{itemize}
		\item \textit{Global convergence.}  If an interior equilibrium $\mathbf x^*$ exists (cases 2(a) and 2(b)), then every trajectory of \eqref{equ:UpdateRule} starting from $\Delta_n\setminus\{\mathbf e_1,\dots,\mathbf e_n\}$ converges to $\mathbf x^*$.
		Otherwise (i.e., case 2(c)), every such trajectory converges to the vertex $\mathbf e_{i_{\max}}$, and the maximizer $i_{\max}$ is necessarily unique.
	\end{enumerate}
\end{theorem}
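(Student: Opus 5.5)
By Lemma~\ref{lem:sep-explicit} and Corollary~\ref{cor:simplex}, on $\Delta_n$ the self-appraisal map coincides with the explicit map \eqref{eq:sep-map}, which is exactly the classical scalar DF map of~\cite{jia2015opinion} with the positive centrality vector $\bm r\in\mathrm{int}\Delta_n$ given by Proposition~\ref{prop:SLC-positivity-gap}. The plan is therefore to redo the Jia--Friedkin--Bullo analysis for this map, working directly with \eqref{eq:sep-map}.

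\textbf{Statement 1 (boundary equilibria).} The vertices are fixed points because $\bm\alpha(\mathbf e_i)=\mathbf e_i$. Conversely, take a non-vertex $\mathbf x\in\partial\Delta_n$. Then some $x_k=0$, while every $x_j<1$, so the denominators in \eqref{eq:sep-map} are finite and positive. Hence $\alpha_k(\mathbf x)>0$ since $r_k>0$, and therefore $\bm\phi(\mathbf x)\neq\mathbf x$.

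\textbf{Statement 2 (interior equilibria).} An interior fixed point satisfies \eqref{eq:FP-scalar} with $\kappa>0$, so $\kappa r_i\le 1/4$ for all $i$, i.e.\ $\kappa\in(0,\kappa_{\max}]$. First I will show that at most one coordinate can lie on the large branch: if two coordinates exceeded $1/2$, the sum would exceed $1$.

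For the all-small case, $\ell$ is continuous and strictly increasing with $\ell(0^+)=0$. So a solution of $\ell(\kappa)=1$ exists, and is unique, iff $\ell(\kappa_{\max})\ge1$.

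For the mixed case with large index $k$, define $g_k(\kappa):=x_k^+(\kappa)+\sum_{j\neq k}x_j^-(\kappa)$ and require $g_k(\kappa)=1$. Three things are needed:
\begin{itemize}
\item[(i)] $k$ must be $i_{\max}$. Since $x_k^+\ge x_j^-$ gives $x_k(1-x_k)=\kappa r_k$, one should compare $r_k$ with $r_j$; the cleaner route is that if $r_j>r_k$ then $x_j^-(\kappa)$ is only defined for $\kappa\le 1/(4r_j)$, and I will derive a contradiction from $\sum=1$ using $x_j^-+x_k^+\ge$ something.
\item[(ii)] Existence iff $\ell(\kappa_{\max})<1$ and $r_{\max}<1/2$. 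At $\kappa=\kappa_{\max}$ both branches of $i_{\max}$ meet at $1/2$, so $g(\kappa_{\max})=\ell(\kappa_{\max})<1$. As $\kappa\to0^+$, $g\to1$ from above or below according to the derivative $-r_{i_{\max}}+\sum_{j\neq i_{\max}}r_j=1-2r_{\max}$. Hence $r_{\max}<1/2$ gives $g>1$ near $0$, and the intermediate value theorem yields a root.
\item[(iii)] Uniqueness. I will show that $g$ has at most one crossing, via convexity or concavity of $g$ in $\kappa$, or by the standard substitution from Jia et al.\ using $\sum_j x_j^-(\kappa)$ in terms of $\kappa$ and monotonicity of $g$ after its maximum.
\end{itemize}
If $r_{\max}\ge1/2$, Lemma~\ref{lem:vertex-attraction} shows $\phi_{i_{\max}}(\mathbf x)>x_{i_{\max}}$ on all non-vertices, so no interior equilibrium exists. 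Uniqueness of the maximizer in case (b) follows because two maximizers, each at most $1/2$, would force the mixed configuration to be infeasible; equivalently, I will argue that $\ell(\kappa_{\max})\ge1$ then holds.

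\textbf{Statement 3 (global convergence).} Case 2(c) is exactly Lemma~\ref{lem:vertex-attraction}. For cases 2(a) and 2(b) I would use the Lyapunov function $V(\mathbf x)=\max_i x_i/x_i^*$, as in~\cite{jia2015opinion}, together with the observation that the image of any non-vertex point is interior. Writing $\phi_i(\mathbf x)/x_i^*=\frac{x_i^*(1-x_i^*)}{x_i^*(1-x_i)}\cdot\frac{1}{\kappa^*\sum_j r_j/(1-x_j)}$ and comparing the index achieving the max, I expect $V(\bm\phi(\mathbf x))<V(\mathbf x)$ unless $\mathbf x=\mathbf x^*$. LaSalle's invariance principle on the compact forward-invariant set $\{\mathbf x:\min_i x_i\ge\delta\}$, which is reached after one step, then gives convergence. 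The main obstacle is this strict decrease, particularly in the mixed branch, where the map $x\mapsto x(1-x)$ is non-monotone. If it fails, I would fall back on Jia et al.'s argument of first showing the trajectory enters a neighborhood where the ordering is preserved, then using $\max$ and $\min$ ratio functions jointly.
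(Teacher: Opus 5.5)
\textbf{The genuine gap is step (iii), uniqueness of the mixed-branch root.} This is one of the theorem's main claims. In the variable $\kappa$,
$g(\kappa)=\tfrac12\bigl(1+\sqrt{1-4\kappa r_{\max}}\bigr)+\sum_{j\neq i_{\max}}\tfrac12\bigl(1-\sqrt{1-4\kappa r_j}\bigr)$
is a concave term plus convex terms. So neither concavity nor convexity of $g$ is available. Your alternative, ``monotonicity of $g$ after its maximum,'' presupposes a unimodality you have not shown.

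The idea you are missing is the paper's substitution $t=\sqrt{1-4\kappa r_{\max}}$. It makes the large-branch term linear. Each small-branch term becomes $-\tfrac12\sqrt{1-\zeta_i+\zeta_i t^2}$ with $\zeta_i=r_i/r_{\max}$, which is concave. Hence $G(t)=\ell(\kappa(t))+t$ is concave on $[0,1]$, with $G(1)=1$, $G(0)=\ell(\kappa_{\max})$ and $G'(1)=1-1/(2r_{\max})$. This gives:
\begin{itemize}
\item[(1)] exactly one root in $(0,1)$ when $G(0)<1$ and $r_{\max}<\tfrac12$;
\item[(2)] $G>1$ on $(0,1)$ when $G(0)\ge 1$, by strict concavity, since $\zeta_j<1$ for $j\neq i_{\max}$.
\end{itemize}
You need (2) as well, so that no mixed equilibrium coexists with the small-branch one in case (a); your proposal never addresses this. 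If you prefer to stay in $\kappa$, you must prove unimodality directly. The sign of $g'(\kappa)$ is that of $\sum_{j\neq i_{\max}}(r_j/r_{\max})\sqrt{(1-4\kappa r_{\max})/(1-4\kappa r_j)}-1$. This expression is strictly decreasing in $\kappa$ because $r_j<r_{\max}$, so $g'$ changes sign at most once.

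\textbf{Step (i) is also left unfinished, at ``$\ge$ something''.} The needed identity is
$x_k^+(\kappa)+x_j^-(\kappa)=1+\tfrac12\bigl(\sqrt{1-4\kappa r_k}-\sqrt{1-4\kappa r_j}\bigr)\ge 1$ whenever $r_j\ge r_k$.
Any third, positive coordinate then pushes the sum above $1$. Once written out, this pairwise comparison is simpler than the paper's derivative argument and also covers non-unique maximizers. Your treatment of case (c) via Lemma~\ref{lem:vertex-attraction} is valid and shorter than the paper's concavity argument. It also makes the mutual exclusivity of (a) and (c) automatic.

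\textbf{Statement 3.} The paper simply invokes the classical convergence result of Jia et al., which is your fallback. Your own Lyapunov route can be completed, and the strict decrease you flag as the obstacle does not depend on the branch. Substituting $r_i=x_i^*(1-x_i^*)/\kappa^*$ gives $\alpha_i(\mathbf x)/x_i^*=z_i^{-1}\big/\sum_j x_j^*z_j^{-1}$. Here $z_i:=(1-x_i)/(1-x_i^*)$ is a convex combination of $\{x_k/x_k^*\}_{k\neq i}$. Then $z_j\le V(\mathbf x)$ and $x_i\le V(\mathbf x)x_i^*$ give $V(\bm\alpha(\mathbf x))\le V(\mathbf x)$, with equality only at $\mathbf x^*$ when $n\ge 3$.

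\textbf{The LaSalle step as written does not work.} You have not shown that $\{\min_i x_i\ge\delta\}$ is forward invariant; entering $\mathrm{int}\Delta_n$ after one step gives no uniform $\delta$. Moreover, sublevel sets of $\max_i x_i/x_i^*$ can contain vertices, which are fixed points. You must exclude accumulation at vertices separately. For example, near $\mathbf e_k$ one has $V(\mathbf x)=x_k/x_k^*<V(\mathbf e_k)$, which is incompatible with $V(\mathbf x(s))$ decreasing strictly to $V(\mathbf e_k)$. Alternatively, use $\max_i\ln(x_i/x_i^*)-\min_i\ln(x_i/x_i^*)$, which decreases by the same averaging argument and whose sublevel sets are compact subsets of $\mathrm{int}\Delta_n$.
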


\begin{proof}
	\textit{Proof of Statement 1.}
	By Lemma~\ref{lem:sep-explicit}, one has $\bm\phi(\mathbf e_i)=\bm\alpha(\mathbf e_i)=\mathbf e_i$, so each vertex is a fixed point.
	If $\mathbf x\in\partial\Delta_n$ is not a vertex, then $x_k=0$ for some $k$.
	By Proposition~\ref{prop:SLC-positivity-gap}, one has $\bm r>\bm 0$, and \eqref{eq:sep-map} yields $\alpha_k(\mathbf x)>0$.
	Hence $\phi_k(\mathbf x)=\alpha_k(\mathbf x)>0$, so $x_k\neq \phi_k(\mathbf x)$.
	Therefore $\mathbf x$ cannot be an equilibrium, and the vertices are the only boundary equilibria.
	
	\textit{Proof of Statement 2.}
	Any interior fixed point $\mathbf x\in\mathrm{int} \Delta_n$ satisfies \eqref{eq:FP-scalar}.
	Solving this quadratic equation for each component yields two admissible branches:
	\[
	x_i^{\pm}(\kappa) := \tfrac12 \left(1 \pm \sqrt{1-4\kappa r_i}\right), \quad \kappa \in (0, \kappa_{\max}].
	\]
	Thus, each $x_i \in \{x_i^-(\kappa), x_i^+(\kappa)\}$.
	Let $\mathcal{L}:=\{i: x_i=x_i^+(\kappa)\}$ denote the index set of large-branch components.
	The simplex constraint becomes
	\begin{equation}\label{eq:gE}
		g_{\mathcal{L}}(\kappa):=\sum_{i\in \mathcal{L}}x_i^+(\kappa)+\sum_{i\notin \mathcal{L}}x_i^-(\kappa)=1.
	\end{equation}
	Therefore, interior equilibria are in bijection with pairs $(\mathcal{L},\kappa)$ that solve \eqref{eq:gE}.

	Let $\mathcal M:=\{i: r_i=r_{\max}\}$ denote the set of agents with maximum centrality and let $k:=|\mathcal M|$.
	We first rule out $|\mathcal{L}|\ge 2$.
	Since $x_i^+(\kappa)\ge \tfrac12$ for all $i$ (with equality only when $i\in\mathcal M$ and $\kappa=\kappa_{\max}$),
	$g_{\mathcal{L}}(\kappa)\ge \sum_{i\in \mathcal{L}}\tfrac12\ge 1$.
	Equality $g_{\mathcal{L}}(\kappa)=1$ could only occur in the degenerate situation $\mathcal{L}=\mathcal M$, $k=2$, $\kappa=\kappa_{\max}$, and $x_j=0$ for $j\notin\mathcal M$, which does not lie in $\mathrm{int} \Delta_n$.
	Therefore, no interior equilibrium can satisfy $|\mathcal{L}|\ge 2$.
	
	Next, we analyze the case $|\mathcal{L}|=1$.
	We claim that this can occur only when $k=1$, that is, when the maximizer of $\bm r$ is unique.
	Fix $\mathcal{L}=\{i_0\}$ and define
	\[
	g_{\{i_0\}}(\kappa):=x_{i_0}^+(\kappa)+\sum_{j\neq i_0}x_j^-(\kappa),
	\quad \kappa\in(0,\kappa_{\max}].
	\]
	Its derivative is
	\[
	\frac{\mathrm d}{\mathrm d\kappa}g_{\{i_0\}}(\kappa)=\sum_{j\neq i_0} f(r_j,\kappa)-f(r_{i_0},\kappa),
	\ 
	f(r,\kappa):=\frac{r}{\sqrt{1-4\kappa r}},
	\]
	and $f(r,\kappa)$ is strictly increasing in $r$.
	Moreover,
	\[
	\lim_{\kappa\to 0^+} g_{\{i_0\}}(\kappa)
	=\lim_{\kappa\to 0^+} x_{i_0}^+(\kappa)+\sum_{j\neq i_0}\lim_{\kappa\to 0^+}x_j^-(\kappa)
	=1.
	\]
	
	If $k\ge 2$, then for any choice of $i_0$ one has $\frac{\mathrm d}{\mathrm d\kappa}g_{\{i_0\}}(\kappa)>0, \forall \kappa\in(0,\kappa_{\max}]$.
	Indeed, if $i_0\notin\mathcal M$, then there exists some $j\in\mathcal M$ such that $r_j>r_{i_0}$, which already yields strict positivity.
	If $i_0\in\mathcal M$, then there exists another index $j\in\mathcal M\setminus\{i_0\}$ with $r_j=r_{i_0}$; since $n\ge 3$ and $\bm r\in\mathrm{int}\Delta_n$, at least one additional term in the sum is strictly positive, so the derivative remains strictly positive.
	Hence $g_{\{i_0\}}(\kappa)>1, \forall \kappa\in(0,\kappa_{\max}]$,
	and thus no singleton large-branch solution exists when $k\ge 2$.
	
	Therefore, if $|\mathcal L|=1$, necessarily $k=1$.
	Let $i_{\max}$ denote the unique maximizer of $\bm r$.
	We now show that one must have $\mathcal L=\{i_{\max}\}$.
	Fix $\mathcal L=\{i_0\}$ with $i_0\neq i_{\max}$.
	Since $r_{i_{\max}}>r_{i_0}$ and $f(r,\kappa)$ is strictly increasing in $r$, one has $\frac{\mathrm d}{\mathrm d\kappa}g_{\{i_0\}}(\kappa)>0, \forall \kappa\in(0,\kappa_{\max}]$.
	Together with $\lim_{\kappa\to 0^+} g_{\{i_0\}}(\kappa)=1$, this implies $g_{\{i_0\}}(\kappa)>1, \forall \kappa\in(0,\kappa_{\max}]$,
	so no solution to \eqref{eq:gE} exists unless $\mathcal L=\{i_{\max}\}$.
	
	This leaves only two possibilities for an interior equilibrium:
	$\mathcal L=\emptyset$ (small branch), or $\mathcal L=\{i_{\max}\}$ with a unique maximizer $i_{\max}$ (mixed branch). We analyze them separately.
	
	\emph{Case (a): $\mathcal{L}=\emptyset$.}
	We have $g_\emptyset(\kappa)=\ell(\kappa)$.
	Each $x_i^-(\kappa)$ is continuous and strictly increasing on $(0,\kappa_{\max}]$ with $x_i^-(0)=0$.
	Therefore $\ell(\kappa)$ is continuous and strictly increasing with $\ell(0)=0$.
	By the intermediate value theorem, there exists a unique $\kappa^*\in(0,\kappa_{\max}]$ that satisfies $\ell(\kappa^*)=1$ if and only if $\ell(\kappa_{\max})\ge 1$.
	This yields the unique small-branch equilibrium
	$\mathbf x^*=(x_1^-(\kappa^*),\ldots,x_n^-(\kappa^*))\in(0,\tfrac12]^n$.
	
	\emph{Case (b): $\mathcal{L}=\{i_{\max}\}$ with $k=1$ and $\ell(\kappa_{\max})<1$.}
	For ease of analysis, we introduce a change of variable.
	Define $t:=\sqrt{1-4\kappa r_{\max}}\in[0,1)$, implying $\kappa=(1-t^2)/(4r_{\max})$.
	Let $\zeta_i:=r_i/r_{\max}\in(0,1]$.
	Then $\sqrt{1-4\kappa r_i} = \sqrt{1-\zeta_i(1-t^2)}$.
	Consequently, $x_i^-(\kappa)$ and $\ell(\kappa)$ can be parameterized by $t$ as:
	\begin{equation*}
		\begin{aligned}
			x_i^-(\kappa(t))&=\tfrac12 \Big(1-\sqrt{1-\zeta_i(1-t^2)}\Big) \\
			\ell(\kappa(t))&=\sum_{i=1}^n x_i^-(\kappa(t)) = \tfrac{n}{2}-\tfrac12\sum_{i=1}^n \sqrt{1-\zeta_i(1-t^2)}.
		\end{aligned}
	\end{equation*}
	Using $x_{i_{\max}}^+(\kappa)-x_{i_{\max}}^-(\kappa)=\sqrt{1-4\kappa r_{\max}}=t$, we obtain
	\[
	g_{\{i_{\max}\}}(\kappa(t))=\ell(\kappa(t))+t=:G(t),\quad t\in[0,1).
	\]
	Thus, solving $g_{\{i_{\max}\}}(\kappa)=1$ is equivalent to
	\begin{equation}\label{eq:G=1}
		G(t)=1,\quad t\in[0,1).
	\end{equation}
	Writing $h_i(t):=\sqrt{1-\zeta_i+\zeta_i t^2}$, one then derives
	$h_i''(t)=\tfrac{\zeta_i(1-\zeta_i)}{(1-\zeta_i+\zeta_i t^2)^{3/2}}\ge 0$,
	with strict inequality for $\zeta_i\in(0,1)$.
	Hence $G$ (a linear term minus a sum of convex terms) is concave on $[0,1]$.
	Moreover, $G(1)=1$ and $G(0)=\ell(\kappa_{\max})<1$.
	A direct calculation gives
	\begin{equation*}
		\begin{aligned}
			G'(1) = \Big(1-\tfrac12\sum_{i=1}^n\frac{\zeta_i t}{h_i(t)}\Big)\Big{\rvert}_{t=1} = 1-\tfrac12\sum_i \zeta_i=1-\tfrac{1}{2r_{\max}}.
		\end{aligned}
	\end{equation*}
	If $r_{\max}<\tfrac12$, then $G'(1)<0$, so for $t$ slightly to the left of $1$ we have $G(t)>1$.
	Combining $G(0)<1$ with the intermediate value theorem and concavity shows that \eqref{eq:G=1} has exactly one solution $t^*\in(0,1)$.
	This results in a unique $\kappa^*\in(0,\kappa_{\max})$ and hence the unique mixed-branch equilibrium with	$x_{i_{\max}}^*=\tfrac12(1+t^*)>\tfrac12$ and $x_j^*=x_j^-(\kappa^*)<\tfrac12$ for all $j\neq i_{\max}$.

	Finally, if $r_{\max}\ge\tfrac12$, then $G'(1)\ge 0$ and concavity implies $G(t)\le 1$ for all $t\in[0,1]$, with $G(0)<1$. Hence, no mixed-branch solution exists.
	In addition, since $\zeta_j\in(0,1)$ for $j\neq i_{\max}$, one has $\sqrt{1-\zeta_j}>1-\zeta_j$, and therefore
	\begin{equation*}
		\begin{aligned}
			\ell(\kappa_{\max})
			&=\tfrac12 \Big(1+\sum_{j\ne i_{\max}} \big(1-\sqrt{1-\zeta_j}\big)\Big) \\
			&< \tfrac12 \Big(1+\sum_{j\ne i_{\max}} \zeta_j\Big)
			=\tfrac{1}{2r_{\max}} \leq 1.
		\end{aligned}
	\end{equation*}
	Hence, no small-branch solution exists either; that is, case (c).

	\textit{Proof of Statement 3.}
	According to Corollary~\ref{cor:simplex}, the simplex $\Delta_n$ is forward invariant.
	Moreover, by Lemma~\ref{lem:sep-explicit}, the dynamics on $\Delta_n$ satisfy $\bm\phi(\mathbf x)=\bm\alpha(\mathbf x)$ and reduce exactly to the explicit single-topic DF map \eqref{eq:sep-map}.

	If an interior equilibrium $\mathbf x^*$ exists, then by Statement 2 it is unique, and case 2(c) is explicitly excluded.
	Hence $r_{\max}<\tfrac12$.
	Therefore, the classical convergence proof for the non-star DF map applies directly; see~\cite[Lem.~2.2, Thm.~4.1]{jia2015opinion}.
	Then, every trajectory with $\mathbf x(0)\in\Delta_n\setminus\{\mathbf e_1,\dots,\mathbf e_n\}$ converges to $\mathbf x^*$.
	
	If no interior equilibrium exists, then by Statement 2(c), one has $r_{\max}\ge\tfrac12$.
	Since $\bm r\in\mathrm{int}\Delta_n$, the maximizer is necessarily unique.
	In this case, Lemma~\ref{lem:vertex-attraction} implies that any trajectory with $\mathbf x(0)\in\Delta_n\setminus\{\mathbf e_1,\dots,\mathbf e_n\}$ converges to $\mathbf e_{i_{\max}}$.
\end{proof}

To visualize the branch classification in Theorem~\ref{thm:global-final}, Fig.~\ref{fig:solution_branches} plots the equilibrium branches of \eqref{eq:FP-scalar}.
The maximal centrality $r_{\max}$ determines the feasibility limit $\kappa_{\max}$: when $\kappa>\kappa_{\max}$, no real solutions to the quadratic relation exist.
Geometrically, an interior equilibrium corresponds to a vertical line $\kappa=\kappa^*$ whose branch intercepts sum to $1$.

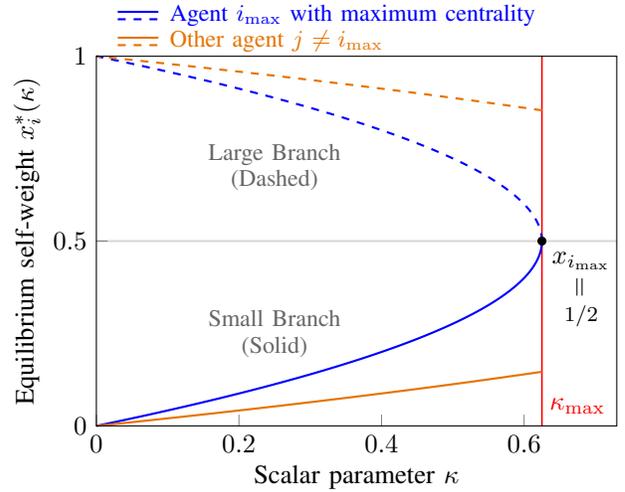
\begin{figure}[htbp]
	\begin{tikzpicture}
		\begin{axis}[
			width=8.5cm, height=6.5cm,
			xlabel={Scalar parameter $\kappa$},
			ylabel={Equilibrium self-weight $x_i^*(\kappa)$},
			xmin=0, xmax=0.73,
			ymin=0, ymax=1,
			xtick={0, 0.2, 0.4, 0.6}, 
			ytick={0, 0.5, 1},        
			axis on top=true,         
			tick pos=left,  
			ylabel style={yshift=-2pt},
			xlabel style={yshift=2pt},
			clip=false,               
			title={},
			title style={font=\small\bfseries}
			]

			\def\rmax{0.4}
			\def\rsmall{0.20}
			\pgfmathsetmacro{\kappamax}{1/(4*\rmax)}
			\pgfmathsetmacro{\ratio}{\rsmall/\rmax}
			\pgfmathsetmacro{\discrim}{sqrt(1 - \ratio)}
			\pgfmathsetmacro{\xsolidlimit}{0.5 * (1 - \discrim)}
			\pgfmathsetmacro{\xdashedlimit}{0.5 * (1 + \discrim)}
			
			\addplot[gray!30, solid, thick] coordinates {(0.00, 0.5) (0.73, 0.5)};
			\draw[red, semithick] (axis cs: \kappamax, 0) -- (axis cs: \kappamax, 1);

			\addplot[blue, thick, domain=0:0.5, samples=100] ({x*(1-x)/\rmax}, {x});
			\addplot[blue, thick, dashed, domain=0.5:1, samples=100] ({x*(1-x)/\rmax}, {x});
			
			\addplot[orange!90!black, thick, domain=0:\xsolidlimit, samples=100] ({x*(1-x)/\rsmall}, {x});
			\addplot[orange!90!black, thick, dashed, domain=\xdashedlimit:1, samples=100] ({x*(1-x)/\rsmall}, {x});

			\draw[blue, thick] (axis cs: 0.03, 1.12) -- (axis cs: 0.087, 1.12);
			\draw[blue, thick, dashed] (axis cs: 0.03, 1.10) -- (axis cs: 0.087, 1.10);
			\node[anchor=west, blue, font=\small] at (axis cs: 0.09, 1.11) {Agent $i_{\max}$ with maximum centrality};
			
			\draw[orange!90!black, thick] (axis cs: 0.03, 1.05) -- (axis cs: 0.087, 1.05);
			\draw[orange!90!black, thick, dashed] (axis cs: 0.03, 1.03) -- (axis cs: 0.087, 1.03);
			\node[anchor=west, orange!90!black, font=\small] at (axis cs: 0.09, 1.04) {Other agent $j$ $\neq i_{\max}$};

			\node[font=\small, align=center, gray!80!black] at (axis cs: 0.25, 0.70) {Large Branch\\(Dashed)};
			\node[font=\small, align=center, gray!80!black] at (axis cs: 0.25, 0.25) {Small Branch\\(Solid)};
			
			\node[red, anchor=north, font=\normalsize] at (axis cs: \kappamax+0.05, 0.1) {$\kappa_{\max}$};
			\filldraw[black] (axis cs: \kappamax, 0.5) circle (1.5pt);
			
			\node[anchor=west, font=\normalsize, black, align=center, inner sep=1pt] 
			at (axis cs: \kappamax+0.01, 0.37) {
				$x_{i_{\max}}$\\[1pt]
				\rotatebox{90}{$=$}\\[-3pt]
				{\footnotesize $1/2$}
			};
		\end{axis}
	\end{tikzpicture}
	\caption{Illustration of equilibrium branches for \eqref{eq:FP-scalar}.
		The curves describe the equilibrium self-weight $x_i^*(\kappa)$ as a function of the auxiliary scalar parameter $\kappa$.
		Blue and orange correspond to the agent $i_{\max}$ with maximum centrality and any other agent $j\neq i_{\max}$, respectively.
		For each agent, the solid segment represents the small branch ($x_i^*\le 1/2$), whereas the dashed segment represents the large branch ($x_i^*>1/2$).
		The vertical red line marks the feasibility limit $\kappa_{\max}$, at which the two branches of agent $i_{\max}$ merge at $x_{i_{\max}}^*=1/2$.}
	\label{fig:solution_branches}
\end{figure}

\begin{remark}\label{rmk:mono-computation}
	The characterization in Theorem~\ref{thm:global-final} is constructive.
	In particular, for $\mathcal{L}=\emptyset$, the root of $\ell(\kappa)=1$ can be found reliably via bisection on $(0,\kappa_{\max}]$.
	For $\mathcal{L}=\{i_{\max}\}$, the root of $G(t)=1$ can likewise be found through bisection on $(0,1)$, because $G(0)<1$ and $G(t)>1$ for $t$ sufficiently close to $1$ when $r_{\max}<\tfrac12$.
	Algorithm~\ref{alg:interior-eq} summarizes this procedure and computes the interior equilibrium whenever it exists.
\end{remark}

\begin{algorithm}[t]
	\caption{Computation of Interior Equilibria}
	\label{alg:interior-eq}
	\DontPrintSemicolon
	\SetAlgoLined
	
	\KwIn{Dominant left eigenvector $\bm r\in\mathrm{int}\Delta_n$ of $R$.}
	\KwOut{Interior equilibrium $\mathbf x^*$ or $\emptyset$ (if none exists).}
	
	Compute $r_{\max}\leftarrow \max_i r_i$ and $\kappa_{\max}\leftarrow 1/(4r_{\max})$.\;
	Compute the boundary value
	$\ell(\kappa_{\max})\leftarrow \sum_{i=1}^n \tfrac12 \left(1-\sqrt{1-r_i/r_{\max}}\right)$.\;
	
	\BlankLine
	\uIf{$\ell(\kappa_{\max})\ge 1$}{
		Find $\kappa^*\in(0,\kappa_{\max}]$ s.t. $\ell(\kappa^*)=1$ via bisection.\;
		Set the (all-small-branch) equilibrium:
		$x_i^*\leftarrow \tfrac12 \left(1-\sqrt{1-4\kappa^* r_i}\right),~\forall i$.\;
		\Return $\mathbf x^*$.\;
	}
	
	\BlankLine
	\uElseIf{$r_{\max}<\tfrac12$}{
		Identify $i_{\max}\leftarrow \arg\max_i r_i$.\;
		Define
		$G(t)\!:= t+\frac{n}{2}-\frac12\sum_{i=1}^n \sqrt{1-\frac{r_i}{r_{\max}}(1-t^2)}$.\;
		Find $t^*\in(0,1)$ s.t. $G(t^*)=1$ via bisection.\;
		Recover $\kappa^*\leftarrow (1-(t^*)^2)/(4r_{\max})$.\;
		Set the mixed-branch equilibrium: \;
		\, $x_{i_{\max}}^*\leftarrow \tfrac12(1+t^*),$ \;
		\, $x_j^*\leftarrow \tfrac12 \left(1-\sqrt{1-4\kappa^* r_j}\right),~\forall j\neq i_{\max}$.\;
		\Return $\mathbf x^*$.\;
	}
	
	\BlankLine
	\Else{
		\Return $\emptyset$.\;
	}
\end{algorithm}

Theorem~\ref{thm:global-final} establishes the existence and global convergence of the limiting self-weights.
Several immediate structural consequences follow from this classification with the fixed-point identity \eqref{eq:FP-scalar}.
We record them below.

\begin{corollary}[Order preservation]\label{cor:order-preservation}
	Suppose Assumptions~\ref{assum:shared-c}--\ref{assum:coupled-regularity} hold and an interior equilibrium $\mathbf x^*$ exists.
	Then $\mathbf x^*$ preserves the centrality ordering induced by $R$: for all $i,j\in\mathcal I$, if $r_i>r_j$ then $x_i^*>x_j^*$, and if $r_i=r_j$ then $x_i^*=x_j^*$.
\end{corollary}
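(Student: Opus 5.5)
The plan is to exploit the fixed-point identity \eqref{eq:FP-scalar}, $x_i^*(1-x_i^*)=\kappa^* r_i$ with a common $\kappa^*>0$, together with the branch structure obtained in Theorem~\ref{thm:global-final}. By that theorem, an interior equilibrium occurs only in case 2(a) or 2(b). So $\mathbf x^*$ has either all components on the small branch, $x_i^*=x_i^-(\kappa^*)$, or exactly one large-branch component, $x^*_{i_{\max}}=x^+_{i_{\max}}(\kappa^*)>\tfrac12$, at the unique maximizer $i_{\max}$, with all other components on the small branch.

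\emph{Step 1.} I would first handle two indices that both lie on the small branch. The map $r\mapsto \tfrac12\bigl(1-\sqrt{1-4\kappa^* r}\bigr)$ is strictly increasing on $(0,1/(4\kappa^*)]$. Hence, for such indices, $r_i>r_j$ gives $x_i^*>x_j^*$, and $r_i=r_j$ gives $x_i^*=x_j^*$.

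\emph{Step 2.} This step only arises in case 2(b), where one index is $i_{\max}$. Since $i_{\max}$ is the unique maximizer, every $j\neq i_{\max}$ satisfies $r_j<r_{i_{\max}}$, so equality of centralities never pairs $i_{\max}$ with another index. For the strict inequality, Theorem~\ref{thm:global-final} already gives $x^*_{i_{\max}}>\tfrac12>x_j^*$ for all $j\neq i_{\max}$. Therefore $x^*_{i_{\max}}>x^*_j$ whenever $r_{i_{\max}}>r_j$.

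\emph{Step 3.} Combining the two steps covers all pairs. No index $i\neq i_{\max}$ can satisfy $r_i>r_{i_{\max}}$, so no further pairs remain.

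The only delicate point is the equality clause in case 2(a) when several maximizers exist and $\kappa^*=\kappa_{\max}$. There the square root vanishes, but the small-branch formula still assigns them the same value $\tfrac12$, so equality holds. I expect no real obstacle: the argument is essentially the monotonicity of $x^-$ in $r$ together with the separation at $\tfrac12$ between the two branches.
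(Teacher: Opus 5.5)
Your proof is correct and follows essentially the same route as the paper: it combines the branch structure from Theorem~\ref{thm:global-final}, monotonicity on the small branch (the paper uses that $f(z)=z(1-z)$ is strictly increasing on $(0,\tfrac12]$, which is equivalent to your map $r\mapsto \tfrac12(1-\sqrt{1-4\kappa^* r})$ being strictly increasing), and the separation $x^*_{i_{\max}}>\tfrac12>x_j^*$ in the mixed case. The ``delicate'' case you flag (several maximizers with $\kappa^*=\kappa_{\max}$) cannot actually occur at an interior equilibrium, since those components would each equal $\tfrac12$ and already sum to at least $1$, but your handling of it is harmless.
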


\begin{proof}
	By Theorem~\ref{thm:global-final}, the interior equilibrium $\mathbf x^*$ is unique.
	Moreover, \eqref{eq:FP-scalar} gives $x_i^*(1-x_i^*)=\kappa^* r_i, i=1,\dots,n$.
	Define $f(z):=z(1-z)$, which is strictly increasing on $(0,\tfrac12]$.
	
	In the small-branch case, $x_i^*\in(0,\tfrac12]\,\forall i$.
	One therefore has $r_i>r_j \iff f(x_i^*)>f(x_j^*) \iff x_i^*>x_j^*$, and similarly $r_i=r_j \iff f(x_i^*)=f(x_j^*) \iff x_i^*=x_j^*$.
	
	In the mixed-branch case, there exists a unique $i_{\max}$ such that $x_{i_{\max}}^*>\tfrac12$ and $x_j^*<\tfrac12\,\forall j\neq i_{\max}$, and necessarily $r_{i_{\max}}=\max_k r_k>r_j\,\forall j\neq i_{\max}$.
	For any $j,k\neq i_{\max}$, both $x_j^*$ and $x_k^*$ lie in $(0,\tfrac12)$, so the same monotonicity argument leads to $r_j>r_k \iff x_j^*>x_k^*$ and $r_j=r_k \iff x_j^*=x_k^*$.
	Finally, $x_{i_{\max}}^*>\tfrac12>x_j^*\,\forall j\neq i_{\max}$, consistent with $r_{i_{\max}}>r_j$.
\end{proof}

A direct consequence of order preservation is a characterization of democratic configurations, in which self-weights are uniformly distributed.

\begin{corollary}[Democracy]\label{cor:democracy}
	With Assumptions~\ref{assum:shared-c}--\ref{assum:coupled-regularity}, the interior equilibrium is $\mathbf x^*=\tfrac{1}{n}\mathds{1}$ if and only if the centrality vector is uniform, i.e., $\bm r=\tfrac{1}{n}\mathds{1}$.
\end{corollary}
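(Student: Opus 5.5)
The plan is to prove the two implications separately. Both rely on the fixed-point identity \eqref{eq:FP-scalar}, Corollary~\ref{cor:order-preservation}, and the uniqueness part of Theorem~\ref{thm:global-final}.

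\emph{Sufficiency ($\bm r=\tfrac1n\mathds{1}\Rightarrow \mathbf x^*=\tfrac1n\mathds{1}$).} We first check existence of an interior equilibrium. With $r_{\max}=1/n<\tfrac12$ (using $n\ge3$), we have $\kappa_{\max}=n/4$. Since every $r_i=r_{\max}$, each term $x_i^-(\kappa_{\max})$ equals $\tfrac12$, so $\ell(\kappa_{\max})=n/2\ge1$ and case 2(a) of Theorem~\ref{thm:global-final} applies. We then verify directly that $\mathbf x=\tfrac1n\mathds{1}$ is a fixed point. Plugging it into \eqref{eq:sep-map} gives
\[
\alpha_i=\frac{(1/n)/(1-1/n)}{\sum_j (1/n)/(1-1/n)}=\frac1n .
\]
Equivalently, it satisfies \eqref{eq:FP-scalar} with $\kappa=(n-1)/n$. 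Uniqueness of the interior equilibrium in Theorem~\ref{thm:global-final} then yields $\mathbf x^*=\tfrac1n\mathds{1}$.

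\emph{Necessity ($\mathbf x^*=\tfrac1n\mathds{1}\Rightarrow \bm r=\tfrac1n\mathds{1}$).} Suppose the interior equilibrium is $\tfrac1n\mathds{1}$. By \eqref{eq:FP-scalar},
\[
\kappa^* r_i=\tfrac1n\bigl(1-\tfrac1n\bigr)\quad\text{for all } i .
\]
The right-hand side is positive, so $\kappa^*\neq0$, and hence all $r_i$ are equal. The normalization $\bm r^\top\mathds{1}=1$ then gives $\bm r=\tfrac1n\mathds{1}$. Alternatively, one can argue by contradiction through Corollary~\ref{cor:order-preservation}: if some $r_i>r_j$, then $x_i^*>x_j^*$, which contradicts $\mathbf x^*$ being uniform.

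There is no real obstacle here. The only point requiring care is that an interior equilibrium actually exists in the sufficiency direction, so that the statement ``the interior equilibrium'' is meaningful. This is exactly what the check $\ell(\kappa_{\max})\ge1$ together with $n\ge3$ provides.
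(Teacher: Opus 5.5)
Your proposal is correct and matches the paper's proof essentially step for step. For necessity you use \eqref{eq:FP-scalar} to get $\kappa^* r_i=(n-1)/n^2$ for all $i$, then the normalization $\bm r^\top\mathds{1}=1$; for sufficiency you check $\ell(\kappa_{\max})=n/2\ge 1$, verify that $\tfrac1n\mathds{1}$ is a fixed point (with $\kappa=(n-1)/n$), and invoke the uniqueness in Theorem~\ref{thm:global-final}(2a). One small refinement: your direct fixed-point check already proves that an interior equilibrium exists, so what you actually need from the theorem is its uniqueness rather than its existence.
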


\begin{proof}
	($\Rightarrow$) If $\mathbf x^*=\tfrac{1}{n}\mathds{1}$, then \eqref{eq:FP-scalar} gives
	$\kappa^* r_i=x_i^*(1-x_i^*)=(n-1)/n^2$, which is constant for all $i$.
	Since $\kappa^*>0$ and $\bm r^\top\mathds{1}=1$, we obtain $\bm r=\tfrac{1}{n}\mathds{1}$.

	($\Leftarrow$) If $\bm r=\tfrac{1}{n}\mathds{1}$, then $r_{\max}=1/n$, and the uniform vector $\mathbf x^*=\tfrac{1}{n}\mathds{1}$ satisfies $x_i^*(1-x_i^*)=(n-1)/n^2$, which matches \eqref{eq:FP-scalar} with $\kappa=(n-1)/n$.
	Moreover, $\ell(\kappa_{\max})=\sum_i \tfrac12(1-\sqrt{1-n/n})=n/2>1$. Hence, by Theorem~\ref{thm:global-final}(2a), a unique interior equilibrium exists, which must be $\mathbf x^*=\tfrac{1}{n}\mathds{1}$.
\end{proof}

While democracy corresponds to a perfectly balanced power distribution, the small-branch equilibrium more generally represents a pluralistic regime in which no single agent holds a majority of the power, i.e., $x_i^*\le \tfrac12$ for all $i$.
We next provide a simple explicit sufficient condition ensuring that this regime occurs when the maximizer is unique.

\begin{corollary}[An explicit sufficient condition for pluralism]\label{cor:sufficient-small}
	Under the hypotheses of Theorem~\ref{thm:global-final}, assume the maximizer of $\bm r$ is unique, and let $r_{\max}$ denote its maximum entry.
	If $r_{\max}\le \frac{n-1}{3n-4}$,
	then the small-branch interior equilibrium exists.
\end{corollary}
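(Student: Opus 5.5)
The plan is to verify the criterion of Theorem~\ref{thm:global-final}, case 2(a): we show that $\ell(\kappa_{\max})\ge 1$ whenever $r_{\max}\le \frac{n-1}{3n-4}$. Theorem~\ref{thm:global-final} then gives the unique interior equilibrium in $(0,\tfrac12]^n$.

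As in the proof of Theorem~\ref{thm:global-final}, write $\zeta_i:=r_i/r_{\max}\in(0,1]$. Then $4\kappa_{\max}r_i=\zeta_i$, and
\[
\ell(\kappa_{\max})=\tfrac12\sum_{i=1}^n\bigl(1-\sqrt{1-\zeta_i}\bigr)=\tfrac12+\tfrac12\sum_{j\neq i_{\max}}\bigl(1-\sqrt{1-\zeta_j}\bigr),
\]
since $\zeta_{i_{\max}}=1$. So it suffices to prove $\sum_{j\neq i_{\max}}\psi(\zeta_j)\ge 1$, where $\psi(z):=1-\sqrt{1-z}$.

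The key step is convexity of $\psi$ on $[0,1]$: its second derivative is $\psi''(z)=\tfrac14(1-z)^{-3/2}>0$. We also have $\sum_{j\neq i_{\max}}\zeta_j=(1-r_{\max})/r_{\max}$. Jensen's inequality applied to the $n-1$ remaining indices gives
\[
\sum_{j\neq i_{\max}}\psi(\zeta_j)\ \ge\ (n-1)\,\psi(\bar\zeta),\qquad \bar\zeta:=\frac{1-r_{\max}}{(n-1)r_{\max}} .
\]
Here $\bar\zeta\le 1$, because each $\zeta_j\le 1$, so $\psi(\bar\zeta)$ is well defined.

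It then remains to check that $(n-1)\psi(\bar\zeta)\ge1$. This is equivalent to $\sqrt{1-\bar\zeta}\le \frac{n-2}{n-1}$, which is legitimate since $n\ge3$ (Remark~\ref{rmk:n-2-trivial}) makes the right-hand side nonnegative. Squaring, the condition becomes
\[
\bar\zeta\ge 1-\Bigl(\tfrac{n-2}{n-1}\Bigr)^2=\tfrac{2n-3}{(n-1)^2},
\]
that is, $\frac{1-r_{\max}}{r_{\max}}\ge\frac{2n-3}{n-1}$. Rearranging, this holds exactly when $r_{\max}\le\frac{n-1}{3n-4}$, which is the hypothesis.

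Hence $\ell(\kappa_{\max})\ge1$, and Theorem~\ref{thm:global-final}(2a) applies. The only nonroutine point is choosing the convexity (Jensen) bound, which turns out to be sharp: equality holds when all non-maximal $r_j$ are equal. The remaining algebra is mechanical. Uniqueness of the maximizer is used only to single out $i_{\max}$ in the decomposition.
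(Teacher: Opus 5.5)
Your proposal is correct and follows the paper's proof essentially step for step: both reduce the claim to $\ell(\kappa_{\max})\ge 1$ in Theorem~\ref{thm:global-final}(2a), apply Jensen's inequality to the convex function $1-\sqrt{1-z}$ over the $n-1$ non-maximal indices with $\bar\zeta=\frac{1-r_{\max}}{(n-1)r_{\max}}$, and then square $\sqrt{1-\bar\zeta}\le\frac{n-2}{n-1}$ to obtain $r_{\max}\le\frac{n-1}{3n-4}$. Your extra checks, that $\bar\zeta\le 1$ and that $\frac{n-2}{n-1}\ge 0$ for $n\ge 3$ (which makes the squaring an equivalence), fill in details the paper leaves implicit.
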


\begin{proof}
	By Theorem~\ref{thm:global-final}, the small-branch equilibrium exists whenever $\ell(\kappa_{\max})\ge 1$.
	Let $i_{\max}$ denote the unique maximizer, and write $\zeta_j:=r_j/r_{\max}\,\forall j\neq i_{\max}$.
	Since $f(z):=1-\sqrt{1-z}$ is convex on $[0,1]$, by Jensen's inequality,
	\[
	\ell(\kappa_{\max})
	=\tfrac12\Big(1+\sum_{j\neq i_{\max}} f(\zeta_j)\Big)
	\ge \tfrac12\Big(1+(n-1)f(\bar\zeta)\Big),
	\]
	where $\bar\zeta:=\frac{1}{n-1}\sum_{j\neq i_{\max}}\zeta_j
	=\frac{1-r_{\max}}{(n-1)r_{\max}}$.
	Hence $\ell(\kappa_{\max})\ge 1$ is ensured by $(n-1)f(\bar\zeta)\ge 1$, i.e.,
	$\sqrt{1-\bar\zeta}\le \frac{n-2}{n-1}$.
	Squaring and simplifying yields $r_{\max}\le \frac{n-1}{3n-4}$.
\end{proof}

The previous corollaries concern the existence pattern and structural location of equilibria.
We conclude this section by recording a further local stability property of interior equilibria in the separable regime.
Although the argument employs the Jacobian estimates developed in the Appendix, it applies to all interior equilibria classified by Theorem~\ref{thm:global-final}.

\begin{corollary}[Local attractivity of interior equilibria in the separable regime]
	\label{cor:local-attractivity-separable}
	Suppose that Assumptions~\ref{assum:shared-c}--\ref{assum:coupled-regularity} hold.
	If the self-appraisal map admits an interior equilibrium $\mathbf x^*$, then $\mathbf x^*$ is locally attractive in a norm equivalent to $\ell_1$.
\end{corollary}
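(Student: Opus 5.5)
By Lemma~\ref{lem:sep-explicit}, on $\Delta_n$ the map is $\bm\phi=\bm\alpha$ with $\alpha_i(\mathbf x)=\frac{r_i/(1-x_i)}{\sum_j r_j/(1-x_j)}$. This is smooth on a neighborhood of any interior point, so the plan is to linearize at $\mathbf x^*$. We then show that the Jacobian, restricted to the tangent space $\mathcal T:=\{\mathbf v:\mathds{1}^\top\mathbf v=0\}$, has spectral radius strictly less than one. A norm equivalent to $\ell_1$ that makes the linearization a strict contraction then gives local attractivity via a standard first-order Taylor argument.

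\textbf{Jacobian.} Write $y_i:=r_i/(1-x_i)$ and $\sigma:=\sum_j y_j$. Then
\[
\frac{\partial\alpha_i}{\partial x_k}=\alpha_i\Big(\frac{\delta_{ik}}{1-x_k}-\frac{\alpha_k}{1-x_k}\Big),
\]
that is, $J(\mathbf x)=(\operatorname{diag}(\bm\alpha)-\bm\alpha\bm\alpha^\top)\,\operatorname{diag}(\mathds{1}-\mathbf x)^{-1}$.

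At $\mathbf x^*$ we have $\bm\alpha=\mathbf x^*$. Setting $D:=\operatorname{diag}\big(x_i^*/(1-x_i^*)\big)$ gives
\[
J^*=D-\mathbf x^*\,\mathbf d^\top, \qquad d_k:=x_k^*/(1-x_k^*).
\]
Since $\mathds{1}^\top J^*=\mathbf 0^\top$, the subspace $\mathcal T$ is mapped into itself, and only the spectrum of $J^*|_{\mathcal T}$ matters.

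\textbf{Eigenvalues.} $J^*$ is a diagonal matrix minus a rank-one term. Its eigenvalues $\lambda\notin\{d_i\}$ therefore solve the secular equation
\[
1=\sum_i\frac{x_i^* d_i}{d_i-\lambda},
\]
with an interlacing structure. $\lambda=0$ is one root (with left eigenvector $\mathds{1}$), and the remaining $n-1$ roots interlace the $d_i$.

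\emph{Small-branch case.} Here $x_i^*\le\tfrac12$, so $d_i\le 1$. The interlacing bounds all eigenvalues by $\max_i d_i\le1$. Equality requires $x_i^*=\tfrac12$, and this has to be handled by showing the root stays strictly below $d_{\max}$. That holds because the weights $x_i^*d_i$ are all positive.

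The lower bound $\lambda>-1$ must also be checked. Evaluating the secular function $F(\lambda)=\sum_i\frac{x_i^*d_i}{d_i-\lambda}-1$ at $\lambda=-1$ gives
\[
F(-1)=\sum_i\frac{x_i^* d_i}{d_i+1}-1=\sum_i (x_i^*)^2-1<0,
\]
which places no root at or below $-1$.

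\emph{Mixed-branch case.} Here $d_{i_{\max}}>1$, and the interlacing alone no longer gives $\rho<1$. This is the main obstacle. I would first try to show directly, using $G'(t^*)<0$ (from the concavity argument in Theorem~\ref{thm:global-final}), that $F(1)$ has the sign which excludes a root in $[1,d_{i_{\max}})$. We have
\[
F(1)=\sum_i \frac{x_i^*d_i}{d_i-1}-1=\sum_i\frac{(x_i^*)^2}{2x_i^*-1}-1,
\]
and I would aim to show that this quantity is tied to the sign of $G'(t^*)$.

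\textbf{Conclusion.} With $\rho(J^*|_{\mathcal T})<1$ established, pick an induced norm on $\mathcal T$ in which $\|J^*|_{\mathcal T}\|<1$. This norm is equivalent to $\ell_1$ because $\mathcal T$ is finite-dimensional. By continuity of $J$ near $\mathbf x^*$, the map is a contraction in this norm on a small neighborhood intersected with $\Delta_n$, and the neighborhood is invariant by Corollary~\ref{cor:simplex}. Hence $\mathbf x^*$ is locally attractive.

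As a fallback for both cases, I would instead appeal directly to the Jacobian estimates in the Appendix (or to \cite{jia2015opinion}, whose map coincides with this one), since global convergence from Theorem~\ref{thm:global-final} together with hyperbolicity yields local attractivity.
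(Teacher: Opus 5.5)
\textbf{Verdict.} Your approach is viable and differs from the paper's, but the mixed-branch case, the only nontrivial one, is left unproved, and the fact you plan to use there has the wrong sign.

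\textbf{What is already correct, and how it compares.} Your linearization agrees with the paper's: $J^*=D-\mathbf x^*\mathbf d^\top$ with $d_k=x_k^*/(1-x_k^*)=\kappa^*a_k^*$, and your secular equation is correct. Since $F(0)=0$ and $0<\min_i d_i$, the root below $\min_i d_i$ is $0$. Hence every eigenvalue of $J^*$ is real and lies in $[0,\max_i d_i]$, with $\max_i d_i$ itself excluded when a single index attains it. Your test at $\lambda=-1$ is therefore superfluous. In the degenerate small-branch case you should also say why the pole $d_{\max}=1$ is simple, since positive weights alone do not stop a repeated pole from being an eigenvalue. The reason is that at an interior point with $n\ge 3$, at most one $x_i^*$ can equal $\tfrac12$. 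With this, your small-branch argument is complete and genuinely different from the paper's. The paper never locates eigenvalues. Instead it exhibits explicit diagonal weights $\Xi$ with $\|\Xi^{-1}J^*\Xi\|_1<1$: plain column sums $2x_j^*$ in the nondegenerate case, then Lemmas~\ref{lem:weighted-L1} and~\ref{lem:diag-weight-mixed}. The paper thus gets a concrete weighted $\ell_1$ norm; your route gives sharper spectral information but only an abstract adapted norm.

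\textbf{The gap: the mixed branch.} You identify it as the main obstacle but leave it as an intention. The fact you propose to use, $G'(t^*)<0$, is false. $G$ is concave with $G(t^*)=G(1)=1$, so the tangent-line inequality $1=G(1)\le G(t^*)+G'(t^*)(1-t^*)$ gives $G'(t^*)\ge 0$. Moreover, $G'(t^*)=0$ would make $t^*$ a global maximizer, contradicting $G>1$ just left of $t=1$. So $G'(t^*)>0$: $G$ crosses the level $1$ from below. One can check that $F(1)>0\iff G'(t^*)>0$, so the link you suspect exists, but with the opposite sign.

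\textbf{Why the fallback does not close it.} Citing the Appendix is citing the paper's own proof. ``Global convergence plus hyperbolicity'' presupposes the missing non-degeneracy $1\notin\Lambda(J^*)$, i.e.\ $F(1)\neq 0$. Even granting that, you would still need an argument to pass from attractivity to $\rho(J^*|_{\mathcal T})<1$. That spectral bound is what the contraction norm needs; attractivity alone is already given by Theorem~\ref{thm:global-final}.

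\textbf{How to close it within your framework.} Set $x:=x^*_{i_{\max}}$, $s:=1-x<\tfrac12$ and $g(u):=u^2/(1-2u)$. Then $F(1)=g(s)-\sum_{j\neq i_{\max}}g(x_j^*)$. The ratio $g(u)/u=u/(1-2u)$ is strictly increasing, and $0<x_j^*<s$ for every $j\neq i_{\max}$ because $n\ge 3$. Hence $\sum_{j\neq i_{\max}}g(x_j^*)<\frac{g(s)}{s}\sum_{j\neq i_{\max}}x_j^*=g(s)$, so $F(1)>0$. Since $F$ increases on $(\max_{j\neq i_{\max}}d_j,\,d_{i_{\max}})\ni 1$, the eigenvalue in that interval is below $1$. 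All other eigenvalues are at most $\max_{j\neq i_{\max}}d_j<1$, so $\rho(J^*)<1$, and your concluding adapted-norm argument then goes through.
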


\begin{proof}
	By Lemma~\ref{lem:sep-explicit}, the self-appraisal map satisfies $\bm\phi(\mathbf x)=\bm\alpha(\mathbf x)$ on $\Delta_n$.
	If the small-branch equilibrium satisfies $x_i^*<\tfrac12$ for all $i$, then by Lemma~\ref{lem:jacobian-compact},
	\[
	J(\mathbf x^*)=\kappa^*\bigl(\operatorname{diag}(\mathbf a^*)-\mathbf x^*(\mathbf a^*)^\top\bigr),
	\quad
	a_j^*=\frac{r_j}{(1-x_j^*)^2}.
	\]
	Using the equilibrium identity $\kappa^* a_j^*=x_j^*/(1-x_j^*)$, the $j$-th absolute column sum of $J(\mathbf x^*)$ is
	\[
	\sum_{i=1}^n |J_{ij}(\mathbf x^*)|
	=
	\kappa^* a_j^*
	\Bigl((1-x_j^*)+\sum_{i\neq j}x_i^*\Bigr)
	=
	2x_j^*.
	\]
	Hence $\|J(\mathbf x^*)\|_1 = 2\max_j x_j^* <1$,
	so $\mathbf x^*$ is locally attractive in the $\ell_1$-norm.
	
	If instead the small-branch equilibrium satisfies $x_i^*=\tfrac12$ for some $i$, the conclusion follows from Lemma~\ref{lem:weighted-L1}.
	If $\mathbf x^*$ is the mixed-branch equilibrium, the conclusion follows from Lemma~\ref{lem:diag-weight-mixed}.
	In all these cases, there exists a norm equivalent to $\ell_1$ under which the map is locally contractive at $\mathbf x^*$, so $\mathbf x^*$ is locally attractive.
\end{proof}

\section{Dynamics Beyond Shared Topic Weights}
\label{sec:beyond-separable}

The analysis in Section~\ref{sec:baseline-sharedc} builds on Assumption~\ref{assum:shared-c}, under which the coupled system admits an exact algebraic reduction to a single-topic DF map.
We now move beyond this shared topic-weight regime.
The purpose of this section is twofold: to identify a local robustness property near a separable reference profile, and to elucidate a structural limitation of the present self-appraisal aggregation rule under heterogeneous logic.
In particular, Section~\ref{subsec:robustness} develops the perturbation analysis around a separable reference profile, while Section~\ref{subsec:aggregation-invariance} shows what survives, and what is lost, after aggregation at the agent level.

\subsection{Local Robustness under Small Logic Perturbations}
\label{subsec:robustness}

We begin with small perturbations around a separable reference logic profile.
Once the shared topic-weight condition is lost, the separated representation from Section~\ref{sec:baseline-sharedc} is in general no longer available, so the analysis must return to the coupled system matrix $\mathcal A(\mathbf x)$ itself.
Our aim here is to quantify how the eigenvalue $1$ and its associated left eigenstructure vary under small perturbations, and then to derive bounds for the induced update map from these estimates.

Consider $\{\bar C_i\}_{i=1}^n$ as a reference logic profile satisfying the shared topic-weight condition in Assumption~\ref{assum:shared-c}, with common topic-weight vector $\bar{\bm c}>\bm 0$, and $\{C_i\}_{i=1}^n$ as a perturbed logic profile.
Denote $\bar{\mathcal C}_{[n]}:=\mathrm{blkdiag}(\bar C_1,\dots,\bar C_n)$ and $\bar{\mathcal A}(\mathbf x):=\bar{\mathcal C}_{[n]}(W(\mathbf x)\otimes I_m)$.
The following analysis is carried out relative to a separable reference system whose coupled eigenvalue at $1$ remains uniformly isolated over $\Delta_n$.

\begin{assumption}
	\label{assum:robust-ref}
	The reference logic profile $\{\bar C_i\}_{i=1}^n$ and the relative interaction matrix $R$ satisfy the following conditions:
	\begin{enumerate}
		\item For each $i\in\mathcal I$, $\bar C_i$ is ES with
		$\bar{\bm c}^\top \bar C_i=\bar{\bm c}^\top$,
		$\bar{\bm c}^\top\mathds{1}_m=1$,
		and $\operatorname{diag}(\bar C_i)>\bm 0$.
		Moreover, all eigenvalues of $\bar C_i$ other than $1$ satisfy
		$|\lambda_k(\bar C_i)|\le 1-\delta_C$
		for some $\delta_C\in(0,1]$.
		
		\item The matrix $R$ satisfies Assumptions~\ref{assum:SPF-R} and \ref{assum:SLC-R}.
		In particular, following Proposition~\ref{prop:SLC-positivity-gap}, the non-dominant eigenvalues of $W(\mathbf x)$ satisfy
		$|\lambda_k(W(\mathbf x))|\le 1-\delta_W$
		uniformly over $\mathbf x\in\Delta_n$, for some $\delta_W\in(0,1]$.
	\end{enumerate}
\end{assumption}

\begin{assumption}
	\label{assum:robust-gap}
	For every $\mathbf x\in\Delta_n$, the system matrix $\bar{\mathcal A}(\mathbf x)$ has a simple eigenvalue $1$, and all its remaining eigenvalues satisfy $|\lambda_k(\bar{\mathcal A}(\mathbf x))|\le 1-\delta_A$ for some $\delta_A\in(0,1]$.
\end{assumption}

\begin{remark}
	In the homogeneous case $\bar C_i\equiv \bar C$, one obtains $\bar{\mathcal A}(\mathbf x)=W(\mathbf x)\otimes \bar C$.
	Therefore Assumption~\ref{assum:robust-gap} follows directly from the spectral gaps of $W(\mathbf x)$ and $\bar C$, and one may take $\delta_A=\min\{\delta_C,\delta_W\}$.
	When the matrices $\bar C_i$ are not identical, $\bar{\mathcal A}(\mathbf x)$ is no longer a Kronecker product, even though the reference profile remains separable.
	Assumption~\ref{assum:robust-gap} is then imposed as the reference-profile counterpart of Assumption~\ref{assum:coupled-regularity}, guaranteeing the uniform spectral isolation needed for the Riesz-projector perturbation analysis below.
\end{remark}

The perturbed logic matrices $\{C_i\}_{i=1}^n$ here are also assumed to satisfy the same basic regularity requirements, namely each $C_i$ is ES with strictly positive diagonal entries.
Consequently, $\mathcal A(\mathbf x)$ always retains the eigenvalue $1$ on $\Delta_n$.
To compare the perturbed and reference systems uniformly over $\Delta_n$, fix the contour $\Gamma:=\{z\in\mathbb C: |z-1|=\rho\}$ with $\rho:=\delta_A/2$; denote $K_{\rm res}:=\sup_{\mathbf x\in\Delta_n, z\in\Gamma}\|(zI-\bar{\mathcal A}(\mathbf x))^{-1}\|_1 < \infty$.

\begin{assumption}\label{assum:robust-small}
	Define $E(\mathbf x):=(\mathcal C_{[n]}-\bar{\mathcal C}_{[n]})(W(\mathbf x)\otimes I_m)$.
	Assume that there exists a sufficiently small $\varepsilon>0$ such that $\|E(\mathbf x)\|_1 \le \varepsilon$ for every $\mathbf x\in\Delta_n$, and $K_{\rm res}\varepsilon \le \frac12$.
\end{assumption}

Under Assumption~\ref{assum:robust-small}, the perturbed and reference systems can be compared on the same contour $\Gamma$.
We first obtain a uniform resolvent bound, then compare the associated projectors and normalized left eigenvectors, and finally translate these estimates to the induced update map.

\begin{lemma}[Uniform resolvent bounds]\label{lem:robust-res}
	Under Assumptions~\ref{assum:robust-ref}--\ref{assum:robust-small}, the resolvents of the coupled reference system $\bar{\mathcal A}(\mathbf x)$ and the perturbed system $\mathcal A(\mathbf x)$ are uniformly bounded on $\Gamma$.
	In particular, defining $K_{\rm res}^+:=\sup_{\mathbf x\in\Delta_n, z\in\Gamma}\|(zI-\mathcal A(\mathbf x))^{-1}\|_1$, one has $ K_{\rm res}^+ \le 2K_{\rm res} $.
\end{lemma}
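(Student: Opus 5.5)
The plan is the standard Neumann-series perturbation argument for resolvents. First, observe that $\mathcal A(\mathbf x)=\bar{\mathcal A}(\mathbf x)+E(\mathbf x)$. This holds because $\mathcal A(\mathbf x)-\bar{\mathcal A}(\mathbf x)=(\mathcal C_{[n]}-\bar{\mathcal C}_{[n]})(W(\mathbf x)\otimes I_m)=E(\mathbf x)$ by the definition in Assumption~\ref{assum:robust-small}.

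Second, confirm that $K_{\rm res}$ is finite, so the bound is meaningful. By Assumption~\ref{assum:robust-gap}, for each $\mathbf x\in\Delta_n$ the spectrum of $\bar{\mathcal A}(\mathbf x)$ consists of the simple eigenvalue $1$ together with eigenvalues in the disk $|z|\le 1-\delta_A$. Every $z\in\Gamma$ satisfies $|z-1|=\delta_A/2>0$. It also satisfies $|z|\ge 1-\delta_A/2>1-\delta_A$. Hence $\Gamma$ does not meet $\Lambda(\bar{\mathcal A}(\mathbf x))$. The map $(\mathbf x,z)\mapsto (zI-\bar{\mathcal A}(\mathbf x))^{-1}$ is continuous on the compact set $\Delta_n\times\Gamma$: $W(\mathbf x)$ is affine in $\mathbf x$, and matrix inversion is continuous on invertible matrices. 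The supremum is therefore attained and finite. The paper already asserts $K_{\rm res}<\infty$, so this step only re-justifies it.

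Third, fix $\mathbf x\in\Delta_n$ and $z\in\Gamma$, and write $\bar{\mathcal R}(z):=(zI-\bar{\mathcal A}(\mathbf x))^{-1}$. Then
\[
zI-\mathcal A(\mathbf x)=(zI-\bar{\mathcal A}(\mathbf x))\bigl(I-\bar{\mathcal R}(z)E(\mathbf x)\bigr).
\]
By submultiplicativity of the induced $\ell_1$ norm,
\[
\|\bar{\mathcal R}(z)E(\mathbf x)\|_1\le K_{\rm res}\varepsilon\le \tfrac12 .
\]
Consequently $I-\bar{\mathcal R}(z)E(\mathbf x)$ is invertible via its Neumann series, and
\[
\|(I-\bar{\mathcal R}E)^{-1}\|_1\le \frac{1}{1-1/2}=2 .
\]
It follows that $zI-\mathcal A(\mathbf x)$ is invertible. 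In particular, $\Gamma$ also avoids the spectrum of the perturbed matrix. Moreover,
\[
(zI-\mathcal A(\mathbf x))^{-1}=(I-\bar{\mathcal R}E)^{-1}\bar{\mathcal R}(z),
\]
whose $\ell_1$ norm is at most $2K_{\rm res}$. Taking the supremum over $\mathbf x$ and $z$ gives $K_{\rm res}^+\le 2K_{\rm res}$.

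No step is a real obstacle. The only subtlety is justifying uniform finiteness of $K_{\rm res}$ through compactness and the uniform gap $\delta_A$, which requires checking that $\Gamma$ stays away from both the eigenvalue $1$ and the disk of radius $1-\delta_A$. The choice $\rho=\delta_A/2$ accomplishes exactly this.
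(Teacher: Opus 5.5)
Your proof is correct and follows the paper's argument essentially verbatim. The paper also obtains finiteness of $K_{\rm res}$ from continuity of $(\mathbf x,z)\mapsto(zI-\bar{\mathcal A}(\mathbf x))^{-1}$ on the compact set $\Delta_n\times\Gamma$, factors $zI-\mathcal A=(zI-\bar{\mathcal A})(I-(zI-\bar{\mathcal A})^{-1}E)$, and applies the Neumann-series bound to get $2K_{\rm res}$; your explicit check that $\Gamma$ avoids both the eigenvalue $1$ and the disk $|z|\le 1-\delta_A$ just spells out a step the paper states in words.
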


\begin{proof}
	The finiteness of $K_{\rm res}$ follows directly from Assumption~\ref{assum:robust-gap}. That is, for each $\mathbf x\in\Delta_n$, the contour $\Gamma$ encloses the simple eigenvalue $1$ of $\bar{\mathcal A}(\mathbf x)$ and is uniformly separated from the remaining spectrum.
	Hence $zI-\bar{\mathcal A}(\mathbf x)$ is invertible for all $(\mathbf x,z)\in\Delta_n\times\Gamma$.
	Since $(\mathbf x,z)\mapsto (zI-\bar{\mathcal A}(\mathbf x))^{-1}$ is continuous and $\Delta_n\times\Gamma$ is compact, the supremum defining $K_{\rm res}$ is finite.
	
	For the perturbed resolvent, write $zI-\mathcal A = (zI-\bar{\mathcal A})(I-(zI-\bar{\mathcal A})^{-1}E)$.
	By Assumption~\ref{assum:robust-small}, $\|(zI-\bar{\mathcal A})^{-1}E\|_1 \le K_{\rm res}\varepsilon \le \frac12$.
	Therefore $I-(zI-\bar{\mathcal A})^{-1}E$ is invertible, and the Neumann-series estimate gives
	\[
	\|(zI-\mathcal A)^{-1}\|_1
	\le \frac{\|(zI-\bar{\mathcal A})^{-1}\|_1}{1-\|(zI-\bar{\mathcal A})^{-1}E\|_1}
	\le 2K_{\rm res}.
	\]
	Taking the supremum yields $K_{\rm res}^+\le 2K_{\rm res}$.
\end{proof}

With both resolvents controlled on the same contour $\Gamma$, the corresponding Riesz projectors can be compared uniformly.

\begin{lemma}[Eigenvector perturbation bound]\label{lem:rankone-proj}
	Let $\bar{\mathcal{P}}(\mathbf x)$ and $\mathcal{P}(\mathbf x)$ be the Riesz spectral projectors of $\bar{\mathcal A}(\mathbf x)$ and ${\mathcal A}(\mathbf x)$, respectively, associated with the isolated eigenvalue $1$ (which is enclosed by $\Gamma$).
	Under Assumptions~\ref{assum:robust-ref}--\ref{assum:robust-small}, there exist constants $K_{\rm proj}, K_{\mu}>0$ such that:
	\begin{enumerate}
		\item The projector perturbation satisfies
		\begin{equation}\label{eq:Pdiff}
			\|\mathcal{P}(\mathbf x)-\bar{\mathcal{P}}(\mathbf x)\|_1 \le K_{\rm proj} \varepsilon,
			\quad \forall \mathbf x\in\Delta_n.
		\end{equation}
		\item Let $\bm{\mu}(\mathbf x)$ and $\bar{\bm\mu}(\mathbf x):=\bm\alpha(\mathbf x)\otimes\bar{\bm c}$ be left eigenvectors of ${\mathcal A}(\mathbf x)$ and $\bar{\mathcal A}(\mathbf x)$, respectively, associated with eigenvalue $1$, satisfying $\bm\mu(\mathbf x)^\top \mathds{1}_{nm}=1$ and $\bar{\bm\mu}(\mathbf x)^\top \mathds{1}_{nm}=1$.
		Then
		\begin{equation}\label{eq:vecdiff}
			\|\bm\mu(\mathbf x)-\bar{\bm\mu}(\mathbf x)\|_1
			\le K_{\mu} \varepsilon,
			\quad \forall \mathbf x\in\Delta_n.
		\end{equation}
	\end{enumerate}
\end{lemma}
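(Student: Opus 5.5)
We will use the contour-integral representation of the Riesz projectors on the common contour $\Gamma$. Under Assumptions~\ref{assum:robust-ref}--\ref{assum:robust-small}, Lemma~\ref{lem:robust-res} guarantees that both resolvents exist on $\Gamma$. We then write
\[
\mathcal P(\mathbf x)-\bar{\mathcal P}(\mathbf x)=\frac{1}{2\pi i}\oint_\Gamma\Bigl[(zI-\mathcal A(\mathbf x))^{-1}-(zI-\bar{\mathcal A}(\mathbf x))^{-1}\Bigr]dz ,
\]
and use the second resolvent identity
\[
(zI-\mathcal A)^{-1}-(zI-\bar{\mathcal A})^{-1}=(zI-\mathcal A)^{-1}E(\mathbf x)(zI-\bar{\mathcal A})^{-1},
\]
which holds because $\mathcal A-\bar{\mathcal A}=E(\mathbf x)$. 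Bounding the integrand by $K_{\rm res}^+\varepsilon K_{\rm res}\le 2K_{\rm res}^2\varepsilon$ and the contour length by $2\pi\rho=\pi\delta_A$ gives \eqref{eq:Pdiff} with $K_{\rm proj}:=\delta_A K_{\rm res}^2$, uniformly in $\mathbf x\in\Delta_n$.

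Before passing to eigenvectors, we verify that $\Gamma$ encloses exactly one eigenvalue of $\mathcal A(\mathbf x)$ and that it is simple. Since $\|\mathcal P-\bar{\mathcal P}\|_1\le K_{\rm proj}\varepsilon$, the standard rank-continuity argument (projectors at distance less than one have equal rank, after shrinking $\varepsilon$ if necessary, as Assumption~\ref{assum:robust-small} permits) shows that $\mathcal P(\mathbf x)$ has rank one. Because each $C_i$ is ES, $\mathcal C_{[n]}$ fixes $\mathds 1_{nm}$ and $\mathcal A(\mathbf x)\mathds 1_{nm}=\mathds 1_{nm}$. Hence $1$ is the enclosed eigenvalue, and we can write $\mathcal P(\mathbf x)=\mathds 1_{nm}\bm\mu(\mathbf x)^\top$ with $\bm\mu^\top\mathds 1=1$. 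In the same way, $\bar{\mathcal P}(\mathbf x)=\mathds 1_{nm}\bar{\bm\mu}(\mathbf x)^\top$, where $\bar{\bm\mu}=\bm\alpha\otimes\bar{\bm c}$ by Lemma~\ref{lem:sep-algebraic} and the normalization argument of Lemma~\ref{lem:sep-explicit}.

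For part 2 we then have
\[
\mathcal P-\bar{\mathcal P}=\mathds 1_{nm}(\bm\mu-\bar{\bm\mu})^\top .
\]
The induced $\ell_1$ norm of $\mathds 1\mathbf v^\top$ is the maximum column sum, which equals $nm\max_k|v_k|$. This controls $\|\mathbf v\|_\infty$ and therefore $\|\mathbf v\|_1\le nm\|\mathbf v\|_\infty$. Thus $\|\bm\mu-\bar{\bm\mu}\|_1\le \|\mathcal P-\bar{\mathcal P}\|_1\le K_{\rm proj}\varepsilon$. An alternative is to extract the vectors by left-multiplying with $\mathbf e_1^\top$, since $\mathbf e_1^\top\mathcal P=\bm\mu^\top$; this gives $\|\bm\mu-\bar{\bm\mu}\|_\infty\le\|\mathcal P-\bar{\mathcal P}\|_\infty$, and norm equivalence in finite dimensions then yields a constant $K_\mu$, such as $K_\mu=nmK_{\rm proj}$ or another constant depending only on $nm$.

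The main obstacle is the middle step: justifying that the spectral projector of the perturbed matrix is exactly the rank-one projector $\mathds 1\bm\mu^\top$. This requires showing that no additional eigenvalue of $\mathcal A(\mathbf x)$ enters the disk bounded by $\Gamma$ and that $1$ does not become a multiple eigenvalue. We will handle it with the rank-stability of projectors, using the trace to give an integer-valued continuous function, together with the smallness constant $K_{\rm res}\varepsilon\le\tfrac12$. If needed, we will further restrict $\varepsilon$ so that $K_{\rm proj}\varepsilon<1$.
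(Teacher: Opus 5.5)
Your proposal is correct. Part~1 is the paper's argument essentially verbatim: the resolvent identity on $\Gamma$, the bound $K_{\rm res}K_{\rm res}^+\le 2K_{\rm res}^2$, and $K_{\rm proj}=\delta_A K_{\rm res}^2=2\rho K_{\rm res}^2$.

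Part~2 takes a genuinely different route. The paper first imports the rank-one property of $\mathcal P$ from Kato's stability theorem for small $\varepsilon$. It then sets $\bm v^\top:=\bm\mu^\top\bar{\mathcal P}$ and writes $\bm v^\top=\sigma\bar{\bm\mu}^\top$, since $\bar{\mathcal P}$ has one-dimensional range. It bounds $|\sigma-1|$ through the normalization against $\mathds{1}_{nm}$ and ends with $K_\mu=2K_{\rm proj}$.

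You instead use the fact that $\mathcal A(\mathbf x)$ and $\bar{\mathcal A}(\mathbf x)$ both fix $\mathds{1}_{nm}$. Hence both Riesz projectors are dyads with the same right factor, and $\mathcal P-\bar{\mathcal P}=\mathds{1}_{nm}(\bm\mu-\bar{\bm\mu})^\top$. The column-sum norm of this dyad is $nm\|\bm\mu-\bar{\bm\mu}\|_\infty\ge\|\bm\mu-\bar{\bm\mu}\|_1$. This gives $K_\mu=K_{\rm proj}$ directly, with no renormalization step.

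Your rank argument is also self-contained, and either version works. One is the elementary fact that idempotents at distance $<1$ have equal rank. The other is continuity of the integer-valued trace of the Riesz projector of $\bar{\mathcal A}+tE$ for $t\in[0,1]$. That projector is well defined for every $t$ because $K_{\rm res}\varepsilon\le\frac12$ keeps $\Gamma$ in the resolvent set, so the homotopy version needs no further shrinking of $\varepsilon$.

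Your dyadic viewpoint also clarifies the paper's step $\|\bm\mu^\top(\mathcal P-\bar{\mathcal P})\|_1\le\|\mathcal P-\bar{\mathcal P}\|_1$. As a generic estimate, left multiplication by a row vector is controlled by the row-sum norm and by $\|\bm\mu\|_1$, which need not be at most $1$ for a signed $\bm\mu$. Here, however, $\bm\mu^\top(\mathcal P-\bar{\mathcal P})=(\bm\mu-\bar{\bm\mu})^\top$ exactly, so $\bm v=\bar{\bm\mu}$ and $\sigma=1$ in the paper's notation, and the step holds for the reason you identify.

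Conversely, the paper's projection-and-renormalize scheme does not use a shared right factor. It is therefore less tied to the row-stochastic structure and would adapt, with the norm bookkeeping done carefully, to perturbations that do not preserve $\mathcal A(\mathbf x)\mathds{1}_{nm}=\mathds{1}_{nm}$. Your argument buys a sharper constant and a cleaner justification in exactly the setting of the lemma.
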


\begin{proof}
	Using the contour integral identity
	\[
	\mathcal{P}-\bar{\mathcal{P}}
	=\frac{1}{2\pi i}\oint_{\Gamma}(zI-\bar{\mathcal A})^{-1}E(zI-\mathcal A)^{-1} \mathrm{d}z,
	\]
	together with Lemma~\ref{lem:robust-res}, we obtain
	\begin{equation*}
		\begin{aligned}
			\|\mathcal{P}-\bar{\mathcal{P}}\|_1 
			&\le \frac{|\Gamma|}{2\pi}
			\sup_{z\in\Gamma}\|(zI-\bar{\mathcal{A}})^{-1}\|_1 \|E\|_1
			\sup_{z\in\Gamma}\|(zI-\mathcal{A})^{-1}\|_1 \\
			&\le \frac{|\Gamma|}{2\pi} K_{\rm res} K_{\rm res}^+ \varepsilon.
		\end{aligned}
	\end{equation*}
	Since $|\Gamma|=2\pi\rho$, this proves \eqref{eq:Pdiff} with, e.g., $K_{\rm proj}:=2\rho K_{\rm res}^2$.
	
	Next, because $C_i\mathds{1}_m=\mathds{1}_m$ for all $i$ and $W(\mathbf x)\mathds{1}_n=\mathds{1}_n$, one has $\mathcal A(\mathbf x)\mathds{1}_{nm}=\mathds{1}_{nm}$, so $1\in\Lambda(\mathcal A(\mathbf x))$ for every $\mathbf x\in\Delta_n$.
	By Lemma~\ref{lem:robust-res}, $\Gamma$ lies in the resolvent set of $\mathcal A(\mathbf x)$, hence the spectrum inside $\Gamma$ is isolated.
	Moreover, Assumption~\ref{assum:robust-gap} implies $\bar{\mathcal P}(\mathbf x)$ has one-dimensional range.
	Hence, for sufficiently small $\varepsilon$, the spectral subspace dimension inside $\Gamma$ is preserved under perturbation (see, e.g., \cite[Ch.~II, Thm.~5.1]{kato2013perturbation}). Therefore $\mathcal P(\mathbf x)$ also has one-dimensional range.
	
	Let ${\bm u}^\top:=\bm\mu(\mathbf x)^\top$ and $\bar{\bm u}^\top:=\bar{\bm\mu}(\mathbf x)^\top$. 
	Then one has ${\bm u}^\top \mathcal{P} = {\bm u}^\top $ and $\bar{\bm u}^\top \bar{\mathcal{P}} = \bar{\bm u}^\top$.
	Define ${\bm v}^\top:={\bm u}^\top\bar{\mathcal{P}}$.
	One then derives ${\bm u}^\top-{\bm v}^\top
	={\bm u}^\top(I-\bar{\mathcal{P}})
	={\bm u}^\top(\mathcal{P}-\bar{\mathcal{P}})$,
	implying $\|{\bm u}-{\bm v}\|_1\le \|\mathcal{P}-\bar{\mathcal{P}}\|_1$.
	Since $\bar{\mathcal P}(\mathbf x)$ has one-dimensional range, there exists a scalar $\sigma=\sigma(\mathbf x)$ such that ${\bm v}^\top=\sigma \bar{\bm u}^\top$.
	Using ${\bm u}^\top\mathds{1}_{nm}=1$ and $\bar{\bm u}^\top\mathds{1}_{nm}=1$, one obtains $|\sigma-1| = |({\bm v}^\top - {\bm u}^\top)\mathds{1}_{nm}| \le \|{\bm u}-{\bm v}\|_1 \|\mathds{1}_{nm}\|_\infty = \|{\bm u}-{\bm v}\|_1$.
	Therefore,
	\[
	\|{\bm u}-\bar{\bm u}\|_1
	\le \|{\bm u}-{\bm v}\|_1+\|{\bm v}-\bar{\bm u}\|_1
	=\|{\bm u}-{\bm v}\|_1+|\sigma-1| \|\bar{\bm u}\|_1.
	\]
	Because $\bar{\bm u}=\bm\alpha(\mathbf x)\otimes\bar{\bm c}$ with $\bm\alpha(\mathbf x)\in\Delta_n$, $\bar{\bm c}\in\mathrm{int}\Delta_m$, and $\bar{\bm c}^\top\mathds{1}_m=1$, one has $\|\bar{\bm u}\|_1=1$.
	Hence
	\[
	\|{\bm u}-\bar{\bm u}\|_1
	\le 2\|{\bm u}-{\bm v}\|_1
	\le 2\|\mathcal P-\bar{\mathcal P}\|_1
	\le 2K_{\rm proj}\varepsilon,
	\]
	which proves \eqref{eq:vecdiff} with $K_\mu:=2K_{\rm proj}$.
\end{proof}

Lemma~\ref{lem:rankone-proj} shows that the normalized left eigenvector associated with the isolated eigenvalue $1$ varies at most linearly with the heterogeneity level $\varepsilon$.
Since the aggregation in \eqref{equ:AggregatedPower} is linear, this directly yields a perturbation bound for the induced update map.

\begin{proposition}[Uniform perturbation bound for the induced update map]\label{prop:map-deviation}
	Under Assumptions~\ref{assum:robust-ref}--\ref{assum:robust-small}, there exists a constant $K_{\rm map}>0$ such that the induced update map $\bm\phi(\mathbf x)$ and the separable map $\bm\alpha(\mathbf x)$ satisfy
	\[
	\|\bm\phi(\mathbf x)-\bm\alpha(\mathbf x)\|_1 \le \Psi_{\rm map}, \quad \forall \mathbf x\in\Delta_n,
	\]
	where $\Psi_{\rm map}:=K_{\rm map}\varepsilon$.
\end{proposition}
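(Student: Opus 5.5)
The plan is to write the aggregation rule \eqref{equ:AggregatedPower} as a single linear operator and then apply the eigenvector estimate \eqref{eq:vecdiff} of Lemma~\ref{lem:rankone-proj}. Define the aggregation matrix $\mathcal S:=I_n\otimes\mathds{1}_m^\top\in\mathbb R^{n\times nm}$. With this notation, $\bm\phi(\mathbf x)=\mathcal S\bm\mu(\mathbf x)$ for the perturbed system.

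The first step is to check that the separable map is the image of the reference eigenvector under the same operator. Since $\bar{\bm\mu}(\mathbf x)=\bm\alpha(\mathbf x)\otimes\bar{\bm c}$ and $\bar{\bm c}^\top\mathds{1}_m=1$, the computation at the end of the proof of Lemma~\ref{lem:sep-explicit} carries over verbatim:
\[
\mathcal S\bar{\bm\mu}(\mathbf x)=\bm\alpha(\mathbf x)\,(\mathds{1}_m^\top\bar{\bm c})=\bm\alpha(\mathbf x).
\]
This uses only the reference profile, which satisfies Assumption~\ref{assum:robust-ref}(1). The left eigenvector of $W(\mathbf x)$ is well defined on $\Delta_n$ by Proposition~\ref{prop:SLC-positivity-gap}, since $R$ satisfies Assumptions~\ref{assum:SPF-R}--\ref{assum:SLC-R}.

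Next I need $\bm\phi(\mathbf x)$ to be well defined for the perturbed system. The proof of Lemma~\ref{lem:rankone-proj} already shows that $\mathcal P(\mathbf x)$ has one-dimensional range for small $\varepsilon$. It also shows that $1\in\Lambda(\mathcal A(\mathbf x))$ and that $\Gamma$ lies in the resolvent set. Hence the normalized left eigenvector $\bm\mu(\mathbf x)$ associated with $1$ is unique. One point needs care: the normalization $\bm\mu^\top\mathds{1}_{nm}=1$ must be admissible. This holds because $\mathcal P(\mathbf x)$ is the projector onto the right eigenvector $\mathds{1}_{nm}$ along the complementary spectral subspace. Its left range vector $\bm u$ therefore satisfies $\bm u^\top\mathds{1}_{nm}\neq0$, so the normalization can always be imposed.

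The bound itself then follows in one line:
\[
\|\bm\phi(\mathbf x)-\bm\alpha(\mathbf x)\|_1=\|\mathcal S(\bm\mu(\mathbf x)-\bar{\bm\mu}(\mathbf x))\|_1\le\|\mathcal S\|_1\,K_\mu\varepsilon.
\]
The induced $\ell_1$ norm of $\mathcal S$ is its maximum absolute column sum. Each column of $\mathcal S$ has exactly one entry equal to $1$, so $\|\mathcal S\|_1=1$. This gives the claim with $K_{\rm map}:=K_\mu=4\rho K_{\rm res}^2$, uniformly over $\mathbf x\in\Delta_n$. The uniformity comes from the fact that $K_{\rm res}$ and $\rho$ do not depend on $\mathbf x$.

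The calculation is essentially trivial, so the main care point is the well-posedness step, namely that $\bm\mu(\mathbf x)$ is uniquely defined and normalizable. I would handle it by the rank-one Riesz projector argument described above rather than by assuming that $1$ is dominant for $\mathcal A(\mathbf x)$. Dominance is not needed for the bound; only isolation inside $\Gamma$ and simplicity are used.
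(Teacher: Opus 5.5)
Your proposal is correct and matches the paper's proof. The paper also writes $\bm\phi(\mathbf x)=B\bm\mu(\mathbf x)$ with $B=I_n\otimes\mathds{1}_m^\top$, notes $B\bar{\bm\mu}(\mathbf x)=\bm\alpha(\mathbf x)$ and $\|B\|_1=1$, and invokes \eqref{eq:vecdiff} of Lemma~\ref{lem:rankone-proj} to get $K_{\rm map}:=K_\mu$ (which indeed equals $4\rho K_{\rm res}^2$); your extra check that the normalization $\bm\mu(\mathbf x)^\top\mathds{1}_{nm}=1$ is admissible, since the rank-one Riesz projector has range $\operatorname{span}\{\mathds{1}_{nm}\}$, is a valid detail the paper leaves implicit.
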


\begin{proof}
	Recall from \eqref{equ:AggregatedPower} that the aggregation is linear:
	$\bm\phi(\mathbf x)=B\bm\mu(\mathbf x)$.
	For the reference system, define
	$\bar{\bm\phi}(\mathbf x):=B\bar{\bm\mu}(\mathbf x)=\bm\alpha(\mathbf x)$,
	where $B:=I_n\otimes \mathds{1}_m^\top \in \mathbb{R}^{n\times nm}$.
	Because $\|B\|_1=1$, Lemma~\ref{lem:rankone-proj} gives 
	\[
	\|\bm\phi(\mathbf x)-\bm\alpha(\mathbf x)\|_1
	\le \|\bm\mu(\mathbf x)-\bar{\bm\mu}(\mathbf x)\|_1
	\le K_\mu \varepsilon.
	\]
	The claim follows with $K_{\rm map}:=K_\mu$.
\end{proof}

\begin{remark}
	In this subsection, $\bm\phi$ denotes the update induced by the normalized left eigenvector of $\mathcal A(\mathbf x)$ associated with the eigenvalue $1$.
	When eigenvalue $1$ is simple dominant, this coincides with the original self-appraisal update in Section~\ref{sec:ModelFormulation}.
\end{remark}

\begin{remark}
	Proposition~\ref{prop:map-deviation} provides a perturbative bound obtained directly from Lemma~\ref{lem:rankone-proj}.
	For the DF model in the paper, this estimate is generally not sharp; Section~\ref{subsec:aggregation-invariance} shows that, whenever the original self-appraisal update is well defined, a stronger structural identification is available.
\end{remark}

Proposition~\ref{prop:map-deviation} also yields a boundary persistence property away from the vertices.

\begin{lemma}[Boundary behavior away from vertices]
	\label{lem:bdry-away}
	Under Assumptions~\ref{assum:robust-ref}--\ref{assum:robust-small}, one has
	$\bm\phi(\mathbf e_i)=\mathbf e_i, \forall i\in\mathcal I$.
	Moreover, for any $\eta>0$, define $\mathcal K_{\partial,\eta}:=
	\{ \mathbf x\in\partial\Delta_n: \min_{1\le i\le n}\|\mathbf x-\mathbf e_i\|_1\ge \eta \}$.
	Then, $\mathcal K_{\partial,\eta}$ is compact, and there exists $\varepsilon_\eta>0$ such that, whenever $0<\varepsilon\le \varepsilon_\eta$, one has $\bm\phi(\mathbf x)\in\mathrm{int}\Delta_n, \forall \mathbf x\in \mathcal K_{\partial,\eta}$.
\end{lemma}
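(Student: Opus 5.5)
The proof has two parts: the vertex fixed points are handled by an exact eigenvector computation, and the boundary statement by a compactness margin combined with Proposition~\ref{prop:map-deviation}.

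\emph{Vertices are fixed.} At $\mathbf x=\mathbf e_i$ we have $W(\mathbf e_i)=Q_i$ and $\mathbf e_i^\top Q_i=\mathbf e_i^\top$. We then exhibit a left eigenvector of the perturbed matrix $\mathcal A(\mathbf e_i)$ directly. Let $\bm c_i$ be the normalized left Perron eigenvector of $C_i$, which exists since $C_i$ is ES, and set $\bm u:=\mathbf e_i\otimes\bm c_i$. Then
\[
\bm u^\top\mathcal C_{[n]}=\mathbf e_i^\top\otimes \bm c_i^\top C_i=\bm u^\top,
\]
and hence
\[
\bm u^\top\mathcal A(\mathbf e_i)=(\mathbf e_i^\top Q_i)\otimes\bm c_i^\top=\bm u^\top.
\]
Moreover $\bm u^\top\mathds{1}_{nm}=1$. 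By Lemma~\ref{lem:rankone-proj}, the Riesz projector $\mathcal P(\mathbf e_i)$ has one-dimensional range, so the normalized left eigenvector at $1$ is unique and $\bm\mu(\mathbf e_i)=\bm u$. Aggregating gives $\phi_k(\mathbf e_i)=\delta_{ki}\,\bm c_i^\top\mathds{1}_m=\delta_{ki}$, so $\bm\phi(\mathbf e_i)=\mathbf e_i$.

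\emph{Compactness of $\mathcal K_{\partial,\eta}$.} The boundary $\partial\Delta_n$ is a finite union of closed faces, hence compact. The function $\mathbf x\mapsto\min_i\|\mathbf x-\mathbf e_i\|_1$ is continuous, so $\mathcal K_{\partial,\eta}$ is a closed subset of a compact set and is therefore compact.

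\emph{Interior margin for the reference map.} On $\Delta_n\setminus\{\mathbf e_1,\dots,\mathbf e_n\}$, formula \eqref{eq:sep-map} and $\bm r>\bm 0$ (Proposition~\ref{prop:SLC-positivity-gap}) give $\alpha_k(\mathbf x)>0$ for every $k$. These entries are continuous there, because every denominator $1-x_j$ with $r_j>0$ is positive and at least one such term is finite. The set $\mathcal K_{\partial,\eta}$ is compact and avoids the vertices, so
\[
m_\eta:=\min_{\mathbf x\in\mathcal K_{\partial,\eta}}\min_k\alpha_k(\mathbf x)>0.
\]

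\emph{Transfer to the perturbed map.} Choose $\varepsilon_\eta:=m_\eta/(2K_{\rm map})$, further reduced if necessary so that Assumption~\ref{assum:robust-small} still holds. Proposition~\ref{prop:map-deviation} then yields
\[
|\phi_k(\mathbf x)-\alpha_k(\mathbf x)|\le K_{\rm map}\varepsilon\le m_\eta/2,
\]
so $\phi_k(\mathbf x)\ge m_\eta/2>0$ for every $k$. Since $\bm\phi(\mathbf x)\in\mathcal H_n$ by construction, it follows that $\bm\phi(\mathbf x)\in\mathrm{int}\Delta_n$.

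The main obstacle is the uniformity of $m_\eta$, since $\bm\alpha$ is discontinuous at the vertices. This is exactly why the vertex neighbourhoods are excised. We must also make sure that $K_{\rm map}$ in Proposition~\ref{prop:map-deviation} does not depend on $\varepsilon$. It does not, because $K_{\rm map}=4\rho K_{\rm res}^2$ depends only on the reference system.
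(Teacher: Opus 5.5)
Your proof is correct and follows the paper's argument essentially step for step: the explicit left eigenvector $\mathbf e_i\otimes\bm c_i$ of $\mathcal A(\mathbf e_i)$ plus the one-dimensional Riesz projector from Lemma~\ref{lem:rankone-proj} handles the vertices, and a uniform positive margin of $\bm\alpha$ on the compact set $\mathcal K_{\partial,\eta}$ is then transferred to $\bm\phi$ via Proposition~\ref{prop:map-deviation}. Your checks that $K_{\rm map}=4\rho K_{\rm res}^2$ does not depend on $\varepsilon$ and that $\bm\phi(\mathbf x)\in\mathcal H_n$ are useful details the paper leaves implicit, but your closing remark is inaccurate (harmlessly so): $\bm\alpha$ is continuous at the vertices (e.g., $\alpha_i(\mathbf x)\to 1$ as $\mathbf x\to\mathbf e_i$ because $r_i>0$), and the vertex neighbourhoods are excised because $\min_k\alpha_k$ vanishes at each $\mathbf e_i$, not because of a discontinuity.
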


\begin{proof}
	Fix $i\in\{1,\dots,n\}$.
	For $\mathbf x=\mathbf e_i$, one has $W(\mathbf e_i)=Q_i$ and
	$\mathcal A(\mathbf e_i)=\mathcal C_{[n]}(Q_i\otimes I_m)$.
	Let $\bm c_i$ denote the normalized left eigenvector of $C_i$ at eigenvalue $1$, satisfying
	$\bm c_i^\top C_i=\bm c_i^\top$ and $\bm c_i^\top\mathds{1}_m=1$.
	Define $\bm v^\top:=\mathbf e_i^\top\otimes \bm c_i^\top$.
	Then
	\[
	\bm v^\top\mathcal A(\mathbf e_i)
	=
	(\mathbf e_i^\top\otimes \bm c_i^\top)\mathcal C_{[n]}(Q_i\otimes I_m)
	=
	(\mathbf e_i^\top Q_i)\otimes \bm c_i^\top
	=
	\bm v^\top.
	\]
	Hence, $\bm v^\top$ is a left eigenvector of $\mathcal A(\mathbf e_i)$ associated with eigenvalue $1$.
	Since Lemma~\ref{lem:rankone-proj} ensures that the spectral subspace of $\mathcal A(\mathbf e_i)$ enclosed by $\Gamma$ is one-dimensional, the above eigenvector is unique up to scaling after normalization.
	By \eqref{equ:AggregatedPower}, $\bm\phi(\mathbf e_i)=\mathbf e_i$.
	
	Now fix $\eta>0$.
	The set $\mathcal K_{\partial,\eta}$ is closed in $\partial\Delta_n$ and stays a positive distance away from the vertices, so it is compact.
	For any $\mathbf x\in \mathcal K_{\partial,\eta}$, one has $\mathbf x\in\partial\Delta_n\setminus\{\mathbf e_1,\dots,\mathbf e_n\}$.
	By Lemma~\ref{lem:sep-explicit}, applied to the reference profile under Assumptions~\ref{assum:robust-ref} and \ref{assum:robust-gap}, the reference map satisfies $\bar{\bm\phi}(\mathbf x)=\bm\alpha(\mathbf x)\in\mathrm{int}\Delta_n$ for every $\mathbf x\in \mathcal K_{\partial,\eta}$.
	Because each component $\alpha_i(\mathbf x)$ is continuous and strictly positive on $\mathcal K_{\partial,\eta}$, we may define
	\[
	\underline{\alpha}_{\partial,\eta}
	:=
	\min_{\mathbf x\in \mathcal K_{\partial,\eta}}
	\min_{i\in\mathcal I}\alpha_i(\mathbf x)>0.
	\]
	By Proposition~\ref{prop:map-deviation}, $\|\bm\phi(\mathbf x)-\bm\alpha(\mathbf x)\|_1\le \Psi_{\rm map}$ uniformly on $\Delta_n$.
	Choose $\varepsilon_\eta>0$ such that $\Psi_{\rm map}<\underline{\alpha}_{\partial,\eta}$ whenever $0<\varepsilon\le\varepsilon_\eta$. Then for every $\mathbf x\in \mathcal K_{\partial,\eta}$ and every $i\in\mathcal I$,
	\[
	\phi_i(\mathbf x)
	\ge
	\alpha_i(\mathbf x)-|\phi_i(\mathbf x)-\alpha_i(\mathbf x)|
	>
	0.
	\]
	Hence $\bm\phi(\mathbf x)\in\mathrm{int}\Delta_n$ for all $\mathbf x\in \mathcal K_{\partial,\eta}$.
\end{proof}

The perturbation bound in Proposition~\ref{prop:map-deviation} also yields persistence of locally attractive interior equilibria near the separable reference map.

\begin{theorem}[Persistence of locally attractive interior equilibria]
	\label{thm:robust-local}
	Let Assumptions~\ref{assum:robust-ref}--\ref{assum:robust-small} hold.
	Suppose that the reference map $\bar{\bm\phi}=\bm\alpha$ admits an interior equilibrium $\mathbf x^*\in\mathrm{int}\Delta_n$ and is a strict contraction on
	\[
	\mathcal N_{\star}:=\{\mathbf z\in\Delta_n:\|\mathbf z-\mathbf x^*\|_{\star}\le r\}\subset\mathrm{int}\Delta_n
	\]
	with respect to some norm $\|\cdot\|_{\star}$ on $\mathbb R^n$, with contraction factor
	$q:=\mathrm{Lip}_{\mathcal N_{\star}}^{\star}(\bm\alpha)<1$.
	Let $\nu_{\star}>0$ be such that $\|\mathbf v\|_{\star}\le \nu_{\star}\|\mathbf v\|_1$ for all $\mathbf v\in\mathbb R^n$.
	If $\nu_{\star}\Psi_{\rm map}\le (1-q)r$, then the induced update map $\bm\phi$ admits a fixed point $\mathbf x_\varepsilon^*\in\mathcal N_{\star}$ such that
	\[
	\|\mathbf x_\varepsilon^*-\mathbf x^*\|_{\star}
	\le
	\frac{\nu_{\star}\Psi_{\rm map}}{1-q}.
	\]
	Moreover, if the perturbation is small enough that
	$q+K_{\rm lip}^{\star}\widetilde\varepsilon<1$,
	where $\widetilde\varepsilon$ is the perturbation quantity appearing in Lemma~\ref{lem:lip-blackbox},
	then $\mathbf x_\varepsilon^*$ is the unique locally attractive fixed point of $\bm\phi$ in $\mathcal N_{\star}$.
\end{theorem}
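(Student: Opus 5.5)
The proof has two parts. For existence we use a Brouwer fixed-point argument on the ball $\mathcal N_\star$, driven by the uniform map deviation of Proposition~\ref{prop:map-deviation}. For uniqueness and attractivity we add a Lipschitz perturbation estimate, taken from Lemma~\ref{lem:lip-blackbox} (stated later in the Appendix), and apply the Banach contraction principle.

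\emph{Self-map property.} Fix $\mathbf z\in\mathcal N_\star$. Since $\mathbf x^*=\bm\alpha(\mathbf x^*)$ and $\bm\alpha$ is a $q$-contraction on $\mathcal N_\star$,
\[
\|\bm\phi(\mathbf z)-\mathbf x^*\|_\star\le \|\bm\phi(\mathbf z)-\bm\alpha(\mathbf z)\|_\star+\|\bm\alpha(\mathbf z)-\bm\alpha(\mathbf x^*)\|_\star\le \nu_\star\Psi_{\rm map}+qr\le r .
\]
Here we used norm equivalence, Proposition~\ref{prop:map-deviation}, and the hypothesis $\nu_\star\Psi_{\rm map}\le(1-q)r$. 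We also need $\bm\phi(\mathbf z)\in\Delta_n$. We have $\bm\phi(\mathbf z)\in\mathcal H_n$ automatically, since $\mathds{1}^\top\bm\phi=\mathds{1}^\top\bm\mu=1$. Because the $\star$-ball $\{\|\mathbf v-\mathbf x^*\|_\star\le r\}$ intersected with $\mathcal H_n$ is exactly $\mathcal N_\star$ (it lies inside $\mathrm{int}\Delta_n$ by assumption), we get $\bm\phi(\mathbf z)\in\mathcal N_\star$.

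\emph{Existence.} $\mathcal N_\star$ is compact and convex, and $\bm\phi$ is continuous on it. Continuity holds because the eigenvalue $1$ of $\mathcal A(\mathbf x)$ stays isolated inside $\Gamma$ (Lemma~\ref{lem:robust-res}), so the Riesz projector $\mathcal P(\mathbf x)$ depends continuously on $\mathbf x$, and hence so do $\bm\mu(\mathbf x)$ and $\bm\phi(\mathbf x)=B\bm\mu(\mathbf x)$. Brouwer's theorem then gives a fixed point $\mathbf x^*_\varepsilon\in\mathcal N_\star$.

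\emph{Distance estimate.} Writing
\[
\|\mathbf x^*_\varepsilon-\mathbf x^*\|_\star=\|\bm\phi(\mathbf x^*_\varepsilon)-\bm\alpha(\mathbf x^*)\|_\star\le \nu_\star\Psi_{\rm map}+q\|\mathbf x^*_\varepsilon-\mathbf x^*\|_\star
\]
and rearranging gives the bound $\nu_\star\Psi_{\rm map}/(1-q)$.

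\emph{Uniqueness and attractivity.} We split $\bm\phi=\bm\alpha+(\bm\phi-\bm\alpha)$ and bound the Lipschitz constant of the difference. I expect Lemma~\ref{lem:lip-blackbox} to provide $\mathrm{Lip}^\star_{\mathcal N_\star}(\bm\phi-\bm\alpha)\le K^\star_{\rm lip}\widetilde\varepsilon$. The natural route is to differentiate the Riesz-projector representation of $\bm\mu(\mathbf x)$ in $\mathbf x$: the quantities $W(\mathbf x)$, $E(\mathbf x)$ and the resolvents are affine or smooth in $\mathbf x$ and uniformly bounded by Lemma~\ref{lem:robust-res}, and $\widetilde\varepsilon$ also controls the $\mathbf x$-derivative of $E$. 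Then
\[
\mathrm{Lip}^\star_{\mathcal N_\star}(\bm\phi)\le q+K^\star_{\rm lip}\widetilde\varepsilon<1,
\]
so $\bm\phi$ is a strict contraction of $\mathcal N_\star$ into itself. By Banach's theorem the fixed point is unique in $\mathcal N_\star$ and every trajectory starting in $\mathcal N_\star$ converges to it geometrically, so it is locally attractive.

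The main obstacle is this Lipschitz control of $\bm\phi-\bm\alpha$ rather than merely a sup-norm bound. The sup bound of Proposition~\ref{prop:map-deviation} suffices for existence but not for uniqueness, which is why the statement is phrased through the auxiliary lemma.
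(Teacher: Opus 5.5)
Your proposal is correct and follows essentially the same route as the paper: the triangle-inequality self-map estimate via Proposition~\ref{prop:map-deviation}, Brouwer for existence, rearrangement for the displacement bound, and Lemma~\ref{lem:lip-blackbox} with Banach's theorem for uniqueness and local attractivity. You also supply two details the paper leaves implicit, and both are sound. The first is that $\bm\phi(\mathcal N_\star)\subseteq\Delta_n$: the $\star$-ball within $\mathcal H_n$ coincides with $\mathcal N_\star$ by convexity, since a segment from $\mathbf x^*$ to a point outside $\Delta_n$ would meet $\partial\Delta_n$ inside the ball. The second is continuity of $\bm\phi$, which Brouwer's theorem needs.
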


\begin{proof}
	We first show that $\bm\phi(\mathcal N_{\star})\subseteq\mathcal N_{\star}$.
	For any $\mathbf x\in\mathcal N_{\star}$,
	\begin{align*}
		\|\bm\phi(\mathbf x)-\mathbf x^*\|_{\star}
		&\le
		\|\bm\phi(\mathbf x)-\bm\alpha(\mathbf x)\|_{\star} 
		+
		\|\bm\alpha(\mathbf x)-\bm\alpha(\mathbf x^*)\|_{\star} \\
		&\le
		\nu_{\star}\Psi_{\rm map}+q\|\mathbf x-\mathbf x^*\|_{\star} \le (1-q)r+qr=r.
	\end{align*}
	Therefore $\bm\phi$ maps the compact convex set $\mathcal N_{\star}$ into itself, and Brouwer's fixed point theorem yields a fixed point $\mathbf x_\varepsilon^*\in\mathcal N_{\star}$.
	
	To estimate its displacement, use $\bm\phi(\mathbf x_\varepsilon^*)=\mathbf x_\varepsilon^*$:
	\begin{align*}
		\|\mathbf x_\varepsilon^*-\mathbf x^*\|_\star
		&=
		\|\bm\phi(\mathbf x_\varepsilon^*)-\bm\alpha(\mathbf x^*)\|_\star\\
		&\le
		\|\bm\phi(\mathbf x_\varepsilon^*)-\bm\alpha(\mathbf x_\varepsilon^*)\|_\star
		+
		\|\bm\alpha(\mathbf x_\varepsilon^*)-\bm\alpha(\mathbf x^*)\|_\star\\
		&\le
		\nu_{\star}\Psi_{\rm map}+q\|\mathbf x_\varepsilon^*-\mathbf x^*\|_{\star}.
	\end{align*}
	Hence $(1-q)\|\mathbf x_\varepsilon^*-\mathbf x^*\|_{\star} \le \nu_{\star}\Psi_{\rm map}$, giving the claimed bound.
	
	By Lemma~\ref{lem:lip-blackbox}, for sufficiently small perturbations one has
	\[
	\mathrm{Lip}_{\mathcal N_{\star}}^{\star}(\bm\phi)
	\le
	\mathrm{Lip}_{\mathcal N_{\star}}^{\star}(\bm\alpha)+K_{\rm lip}^{\star}\widetilde\varepsilon
	=
	q+K_{\rm lip}^{\star}\widetilde\varepsilon
	<1.
	\]
	Therefore $\bm\phi$ is a contraction on $\mathcal N_{\star}$ in the norm $\|\cdot\|_{\star}$.
	Banach's fixed point theorem then implies $\mathbf x_\varepsilon^*$ is the unique fixed point of $\bm\phi$ in $\mathcal N_{\star}$ and is locally attractive.
\end{proof}

The preceding theorem treats the case in which the separable reference map admits a locally attractive interior equilibrium.
Together with Lemma~\ref{lem:bdry-away}, it yields a complementary statement for the regime in which no interior equilibrium exists.

\begin{corollary}[Equilibrium consequences near a separable reference profile]
	\label{cor:robust-dichotomy}
	Let Assumptions~\ref{assum:robust-ref}--\ref{assum:robust-small} hold.
	Then the following statements hold for sufficiently small perturbations:
	\begin{enumerate}
		\item If the reference map $\bm\alpha$ admits a locally attractive interior equilibrium $\mathbf x^*$, then the induced update map $\bm\phi$ admits a unique locally attractive interior equilibrium $\mathbf x_\varepsilon^*$ in a neighborhood of $\mathbf x^*$, and $\mathbf x_\varepsilon^*\to \mathbf x^*$ as $\varepsilon\to 0$.
		
		\item If the reference map $\bm\alpha$ has no interior equilibrium, then for every $\eta>0$ there is $\varepsilon_\eta>0$ such that, whenever $0<\varepsilon\le \varepsilon_\eta$, the induced update map $\bm\phi$ has no boundary fixed point in $\mathcal K_{\partial,\eta}:=
		\{ \mathbf x\in\partial\Delta_n: \min_{1\le i\le n}\|\mathbf x-\mathbf e_i\|_1\ge \eta \}$.
		Moreover, for any sequence $\varepsilon_\ell\to 0$, if the corresponding induced maps $\bm\phi^{(\ell)}$ admit interior equilibria, every limit point of $\{\mathbf x^{(\ell)}\}$ belongs to the vertex set $\{\mathbf e_1,\dots,\mathbf e_n\}$.
	\end{enumerate}
\end{corollary}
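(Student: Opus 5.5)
Statement 1 follows from Theorem~\ref{thm:robust-local}, once its hypotheses are produced from local attractivity of $\mathbf x^*$. Corollary~\ref{cor:local-attractivity-separable} and the Appendix lemmas it cites (Lemmas~\ref{lem:jacobian-compact}, \ref{lem:weighted-L1}, \ref{lem:diag-weight-mixed}) give a norm $\|\cdot\|_\star$, equivalent to $\ell_1$, in which the Jacobian of $\bm\alpha$ at $\mathbf x^*$ has induced norm strictly less than $1$. The map $\bm\alpha$ is smooth on $\mathrm{int}\Delta_n$. By continuity of the Jacobian, there is a radius $r>0$ with $\mathcal N_\star\subset\mathrm{int}\Delta_n$ on which $\sup\|J\|_\star\le q<1$. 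Since $\mathcal N_\star$ is convex, the mean-value inequality then gives $\mathrm{Lip}^\star_{\mathcal N_\star}(\bm\alpha)\le q$. Let $\nu_\star$ be the norm-equivalence constant. Because $\Psi_{\rm map}=K_{\rm map}\varepsilon$ by Proposition~\ref{prop:map-deviation}, the condition $\nu_\star\Psi_{\rm map}\le(1-q)r$ holds for $\varepsilon$ small. I also expect $\widetilde\varepsilon\to0$ as $\varepsilon\to0$ in Lemma~\ref{lem:lip-blackbox}, which gives $q+K^\star_{\rm lip}\widetilde\varepsilon<1$. Theorem~\ref{thm:robust-local} then gives a unique locally attractive fixed point $\mathbf x^*_\varepsilon\in\mathcal N_\star$ with $\|\mathbf x^*_\varepsilon-\mathbf x^*\|_\star\le\nu_\star K_{\rm map}\varepsilon/(1-q)\to0$. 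Interiority holds because $\mathcal N_\star\subset\mathrm{int}\Delta_n$.

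For Statement 2, first part, fix $\eta>0$ and apply Lemma~\ref{lem:bdry-away}. For $\varepsilon\le\varepsilon_\eta$ it gives $\bm\phi(\mathbf x)\in\mathrm{int}\Delta_n$ for every $\mathbf x\in\mathcal K_{\partial,\eta}$. Such an $\mathbf x$ lies on $\partial\Delta_n$, so $\bm\phi(\mathbf x)\neq\mathbf x$ and no boundary fixed point exists there. This part does not use the absence of interior equilibria.

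For the second part, take $\varepsilon_\ell\to0$ and interior fixed points $\mathbf x^{(\ell)}=\bm\phi^{(\ell)}(\mathbf x^{(\ell)})$. By compactness of $\Delta_n$, pass to a subsequence converging to some $\bar{\mathbf x}\in\Delta_n$. Proposition~\ref{prop:map-deviation} gives
\[
\|\mathbf x^{(\ell)}-\bm\alpha(\mathbf x^{(\ell)})\|_1=\|\bm\phi^{(\ell)}(\mathbf x^{(\ell)})-\bm\alpha(\mathbf x^{(\ell)})\|_1\le K_{\rm map}\varepsilon_\ell\to0 .
\]
\emph{Main obstacle:} passing to the limit requires continuity of $\bm\alpha$ at $\bar{\mathbf x}$. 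Formula \eqref{eq:sep-map} is continuous away from the vertices, so if $\bar{\mathbf x}$ is not a vertex we get $\bm\alpha(\bar{\mathbf x})=\bar{\mathbf x}$. Theorem~\ref{thm:global-final} (Statement 1 together with case 2(c)) says the fixed points of $\bm\alpha$ are exactly the vertices, since no interior equilibrium exists. So $\bar{\mathbf x}$ would have to be a vertex, a contradiction. Hence every limit point is a vertex, and the case where $\bar{\mathbf x}$ is itself a vertex needs no continuity argument.

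Two further points need care. The constant $K_{\rm map}$ should be uniform along the sequence; it is, because it depends only on the reference data through $K_{\rm res}$ and $\rho$. And in Statement 1, $\widetilde\varepsilon$ should be controlled by $\varepsilon$, which I would read off from Lemma~\ref{lem:lip-blackbox}.
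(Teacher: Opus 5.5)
Your proposal is correct and follows the paper's proof: Statement 1 goes through Theorem~\ref{thm:robust-local} with the contractive norm supplied by Corollary~\ref{cor:local-attractivity-separable}, and Statement 2 goes through Lemma~\ref{lem:bdry-away} together with the compactness, Proposition~\ref{prop:map-deviation} and Theorem~\ref{thm:global-final}(1) limit argument. Two of your flagged worries dissolve: since $\bm r>\bm 0$, $\alpha_i(\mathbf x)\to 1$ as $\mathbf x\to\mathbf e_i$, so $\bm\alpha$ is continuous on all of $\Delta_n$ (as the paper uses) and your vertex case split is harmless but unnecessary; and because the $(i,i)$ block of $E(\mathbf e_i)$ is $C_i-\bar C_i$, one has $\varepsilon_C\le\varepsilon$, hence $\widetilde\varepsilon\le(1+\|I-R\|_1)\varepsilon\to 0$.
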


\begin{proof}
	\textit{Proof of Statement 1.}
	By Corollary~\ref{cor:local-attractivity-separable}, whenever the reference map admits an interior equilibrium $\mathbf x^*$, there exist a norm $\|\cdot\|_{\star}$ equivalent to $\ell_1$ and also a closed neighborhood $\mathcal N_{\star}\subset\mathrm{int}\Delta_n$ of $\mathbf x^*$ on which $\bm\alpha$ is a strict contraction.
	Hence Theorem~\ref{thm:robust-local} applies and yields a unique locally attractive fixed point $\mathbf x_\varepsilon^*$ of $\bm\phi$ in $\mathcal N_{\star}$, with $\mathbf x_\varepsilon^*\to\mathbf x^*$ as $\varepsilon\to0$.

	\textit{Proof of Statement 2.}
	The first claim follows from Lemma~\ref{lem:bdry-away}: for any fixed $\eta>0$, sufficiently small perturbations map $\mathcal K_{\partial,\eta}$ into $\mathrm{int}\Delta_n$, so no point of $\mathcal K_{\partial,\eta}$ can be a boundary fixed point.

	We now prove the statement on the limit points of interior equilibria.
	Let $\{\varepsilon_\ell\}_{\ell\ge 1}$ be any sequence such that $\varepsilon_\ell\to 0$, and suppose that for each $\ell$, the induced update map $\bm\phi^{(\ell)}$ admits an interior equilibrium $\mathbf x^{(\ell)}\in\mathrm{int}\Delta_n$.
	Since $\Delta_n$ is compact, there exists a subsequence $\{\mathbf x^{(\ell_j)}\}$ and a point $\bar{\mathbf x}\in\Delta_n$ such that $\mathbf x^{(\ell_j)}\to \bar{\mathbf x}$.
	Because each $\mathbf x^{(\ell_j)}$ is a fixed point of $\bm\phi^{(\ell_j)}$, one has $\mathbf x^{(\ell_j)}=\bm\phi^{(\ell_j)}(\mathbf x^{(\ell_j)})$.
	By Proposition~\ref{prop:map-deviation},
	\[
	\|\bm\phi^{(\ell_j)}(\mathbf x^{(\ell_j)})-\bm\alpha(\mathbf x^{(\ell_j)})\|_1
	\le
	K_{\rm map}\varepsilon_{\ell_j}\to 0.
	\]
	Combining this with continuity of $\bm\alpha$ yields
	\[
	\bar{\mathbf x}
	=
	\lim_{j\to\infty}\mathbf x^{(\ell_j)}
	=
	\lim_{j\to\infty}\bm\phi^{(\ell_j)}(\mathbf x^{(\ell_j)})
	=
	\lim_{j\to\infty}\bm\alpha(\mathbf x^{(\ell_j)})
	=
	\bm\alpha(\bar{\mathbf x}),
	\]
	so $\bar{\mathbf x}$ is a fixed point of $\bm\alpha$.
	Since $\bm\alpha$ has no interior equilibrium by hypothesis, $\bar{\mathbf x}\in\partial\Delta_n$.
	By Theorem~\ref{thm:global-final}(1), the only boundary equilibria of $\bm\alpha$ are the vertices.
	Hence $\bar{\mathbf x}\in\{\mathbf e_1,\dots,\mathbf e_n\}$.
	Since the convergent subsequence was arbitrary, every limit point of $\{\mathbf x^{(\ell)}\}$ belongs to the vertex set $\{\mathbf e_1,\dots,\mathbf e_n\}$.
\end{proof}

\begin{remark}
	\label{rmk:robust-no-interior}
	In the no-interior regime of the reference map, the perturbation results provide only a partial persistence statement for the induced update map.
	They show that boundary fixed points cannot persist away from small neighborhoods of the vertices, and any interior equilibria appearing under perturbation must approach the vertex set as the perturbation level tends to zero.
	We do not claim here a full global convergence classification under perturbations, since the explicit scalar reduction in Section~\ref{sec:baseline-sharedc} is no longer available once the shared topic-weight condition is lost.
	Establishing such a result would require additional dynamics analysis near the vertices.
\end{remark}

\subsection{Aggregation Invariance beyond Shared Topic Weights}
\label{subsec:aggregation-invariance}

Section~\ref{subsec:robustness} quantifies how small logic perturbations affect the coupled eigenstructure near a separable reference system.
We now turn to the agent level.
Once the self-appraisal update is well defined in the sense of Section~\ref{sec:ModelFormulation}, how much of this heterogeneity remains visible after the aggregation in \eqref{equ:AggregatedPower}?

Throughout this subsection, logic matrices remain admissible in the sense of the model, namely ES with strictly positive diagonal entries.
For the invariance result itself, however, the only algebraic property needed is $C_i\mathds{1}_m=\mathds{1}_m$ for all $i\in\mathcal I$.

\begin{proposition}[Aggregation invariance]
	\label{prop:aggregation-invariance}
	Suppose Assumptions~\ref{assum:SPF-R} and~\ref{assum:SLC-R} hold.
	Let $\mathbf x\in\Delta_n$ be such that $\mathcal A(\mathbf x)$ has $1$ as a simple dominant eigenvalue, and let $\bm\mu(\mathbf x)$ be the corresponding normalized left eigenvector satisfying $\bm\mu(\mathbf x)^\top\mathds{1}_{nm}=1$.
	If $C_i\mathds{1}_m=\mathds{1}_m$ for all $i\in\mathcal I$,
	then the aggregated self-appraisal update rule defined by \eqref{equ:AggregatedPower} satisfies $\bm\phi(\mathbf x)=\bm\alpha(\mathbf x)$,
	where $\bm\alpha(\mathbf x)$ is the unique normalized left eigenvector of $W(\mathbf x)$ associated with eigenvalue $1$.
\end{proposition}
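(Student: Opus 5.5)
The plan is to show directly that the aggregated vector $\bm\phi(\mathbf x)=B\bm\mu(\mathbf x)$, with $B:=I_n\otimes\mathds{1}_m^\top$ as in the proof of Proposition~\ref{prop:map-deviation}, is itself a left eigenvector of $W(\mathbf x)$ at eigenvalue $1$ with unit sum. Uniqueness from Proposition~\ref{prop:SLC-positivity-gap} will then identify it with $\bm\alpha(\mathbf x)$. This amounts to a ``dual'' version of the separation identity in Lemma~\ref{lem:sep-algebraic}: there, a common left eigenvector $\bm c$ of the $C_i$ was used, whereas here only the common right eigenvector $\mathds{1}_m$ is available. So we cannot push a vector through $\mathcal A(\mathbf x)$ from the left; instead we push the aggregation matrix through from the right.

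\emph{Step 1.} Prove the intertwining identity $\mathcal A(\mathbf x)(I_n\otimes\mathds{1}_m)=W(\mathbf x)\otimes\mathds{1}_m$. Using the mixed-product rule,
\[
(W(\mathbf x)\otimes I_m)(I_n\otimes\mathds{1}_m)=W(\mathbf x)\otimes\mathds{1}_m .
\]
Then apply $\mathcal C_{[n]}=\operatorname{blkdiag}(C_1,\dots,C_n)$ blockwise: the $(i,j)$ block becomes $w_{ij}(\mathbf x)C_i\mathds{1}_m=w_{ij}(\mathbf x)\mathds{1}_m$ by the hypothesis $C_i\mathds{1}_m=\mathds{1}_m$. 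This is the only place the hypothesis on the $C_i$ enters.

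\emph{Step 2.} Left-multiply the identity by $\bm\mu(\mathbf x)^\top$ and use $\bm\mu^\top\mathcal A=\bm\mu^\top$. This gives
\[
\bm\mu^\top(I_n\otimes\mathds{1}_m)=\bm\mu^\top(W\otimes\mathds{1}_m).
\]
The left-hand side is $\bm\phi(\mathbf x)^\top$. The right-hand side equals $\bm\phi(\mathbf x)^\top W(\mathbf x)$, since $W\otimes\mathds{1}_m=(I_n\otimes\mathds{1}_m)W$ (again by the mixed-product rule, viewing $\mathds{1}_m$ as an $m\times1$ matrix). Hence $\bm\phi(\mathbf x)^\top W(\mathbf x)=\bm\phi(\mathbf x)^\top$. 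We also have $\mathds{1}_n^\top\bm\phi(\mathbf x)=\mathds{1}_{nm}^\top\bm\mu(\mathbf x)=1$, as already noted after \eqref{equ:AggregatedPower}. Thus $\bm\phi(\mathbf x)$ is a nonzero left eigenvector of $W(\mathbf x)$ for the eigenvalue $1$, normalized to unit sum.

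\emph{Step 3.} Conclude by uniqueness. Since $\mathbf x\in\Delta_n$ and Assumptions~\ref{assum:SPF-R}--\ref{assum:SLC-R} hold, Proposition~\ref{prop:SLC-positivity-gap} shows that $1$ is a simple eigenvalue of $W(\mathbf x)$. Its left eigenspace is therefore one-dimensional, and the sum normalization forces $\bm\phi(\mathbf x)=\bm\alpha(\mathbf x)$.

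I do not expect a genuine obstacle. The only point needing care is the Kronecker bookkeeping in Steps 1--2, specifically the correct ordering in $W\otimes\mathds{1}_m=(I_n\otimes\mathds{1}_m)W$. It is also worth remarking that the simple-dominance hypothesis on $\mathcal A(\mathbf x)$ is used only to make $\bm\mu(\mathbf x)$, and hence $\bm\phi$, well defined; it plays no role in the identification itself.
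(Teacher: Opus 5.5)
Your proposal is correct and is essentially the paper's own proof. The paper also sets $T:=I_n\otimes\mathds{1}_m$ and uses $\mathcal C_{[n]}T=T$ together with $(W(\mathbf x)\otimes I_m)T=W(\mathbf x)\otimes\mathds{1}_m=TW(\mathbf x)$ to obtain $\mathcal A(\mathbf x)T=TW(\mathbf x)$; it then multiplies by $\bm\mu(\mathbf x)^\top$ to get $\bm\phi(\mathbf x)^\top W(\mathbf x)=\bm\phi(\mathbf x)^\top$ with unit sum, and concludes by the uniqueness in Proposition~\ref{prop:SLC-positivity-gap}, with the same Kronecker ordering you identified.
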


\begin{proof}
	Let $T:=I_n\otimes\mathds{1}_m\in\mathbb R^{nm\times n}$.
	Since $C_i\mathds{1}_m=\mathds{1}_m$ for all $i\in\mathcal I$, one has $\mathcal C_{[n]}T=T$.
	Furthermore, $(W(\mathbf x)\otimes I_m)T
	=
	(W(\mathbf x)\otimes I_m)(I_n\otimes\mathds{1}_m)
	=
	W(\mathbf x)\otimes\mathds{1}_m
	=
	TW(\mathbf x)$.
	Hence,
	\[
	\mathcal A(\mathbf x)T
	=
	\mathcal C_{[n]}(W(\mathbf x)\otimes I_m)T
	=
	\mathcal C_{[n]}TW(\mathbf x)
	=
	TW(\mathbf x).
	\]
	By \eqref{equ:AggregatedPower}, $\bm\phi(\mathbf x)^\top=\bm\mu(\mathbf x)^\top T$.
	It follows that
	\begin{align*}
		\bm\phi(\mathbf x)^\top W(\mathbf x)
		& =
		\bm\mu(\mathbf x)^\top T W(\mathbf x) \\
		& =
		\bm\mu(\mathbf x)^\top \mathcal A(\mathbf x)T
		=
		\bm\mu(\mathbf x)^\top T
		=
		\bm\phi(\mathbf x)^\top.
	\end{align*}
	Moreover, $\bm\phi(\mathbf x)^\top\mathds{1}_n
	=
	\bm\mu(\mathbf x)^\top T\mathds{1}_n
	=
	\bm\mu(\mathbf x)^\top\mathds{1}_{nm}
	=
	1$.
	Thus, $\bm\phi(\mathbf x)$ is a normalized left eigenvector of $W(\mathbf x)$ associated with eigenvalue $1$.
	By Proposition~\ref{prop:SLC-positivity-gap}, this eigenvector is unique, so $\bm\phi(\mathbf x)=\bm\alpha(\mathbf x)$.
\end{proof}

Proposition~\ref{prop:aggregation-invariance} shows that, after aggregation, the update $\bm\phi(\mathbf x)$ coincides exactly with the unique normalized left eigenvector $\bm\alpha(\mathbf x)$ of the matrix $W(\mathbf x)$ at eigenvalue $1$.
Accordingly, whenever the original self-appraisal update is well defined, the present aggregation rule implies that the aggregated agent-level map is determined solely by the interaction component.

This does not mean that heterogeneity across logic matrices becomes trivial at the topic level.
The following construction isolates a topic-level mechanism occurring before aggregation, and shows that substantial transfer disparities may still arise across heterogeneous logic matrices.

Consider a structured heterogeneous profile with one distinguished broker (i.e., an agent whose logic matrix has broad connections across topics), indexed by $n$, and a nonempty set of specialists (i.e., agents with logic matrices concentrated on a single topic) $\mathcal I_{\mathrm{sp}}\subseteq \mathcal I\setminus\{n\}$.
Each specialist $i\in\mathcal I_{\mathrm{sp}}$ is assigned a focal topic $\sigma(i)\in\{1,\dots,m\}$.
Here, the map $\sigma$ need not be injective or surjective; thus, multiple specialists may share the same focal topic, and some topics may have no specialist.
Let $\bar\chi\in(0,1)$ be fixed.
For each $i\in\mathcal I_{\mathrm{sp}}$, suppose that the logic matrix is parameterized by $\chi\in(0,\bar\chi]$ as
\[
C_i(\chi)
=
(1-\chi)\mathds{1}_m\mathbf e_{\sigma(i)}^\top+\chi \widehat C_i,
\]
where $\widehat C_i\in\mathbb R^{m\times m}$ is independent of $\chi$, satisfies $\widehat C_i\mathds{1}_m=\mathds{1}_m$, and is chosen so that $C_i(\chi)$ remains an admissible logic matrix for every $\chi\in(0,\bar\chi]$.
Here, $\bar\chi<1$ guarantees that the specialist family remains concentrated on the focal topic uniformly over the admissible range of $\chi$.
For the broker, let
\[
C_n=\mathds{1}_m\mathbf b^\top, \quad \mathbf b = [b_1,\ldots,b_m]^\top \in\mathrm{int}\Delta_m.
\]
In addition, any remaining agents in $\mathcal I\setminus(\mathcal I_{\mathrm{sp}}\cup\{n\})$ may have arbitrary admissible logic matrices, as they play no role in the estimate below.

The next proposition isolates a disparity generated solely by the logic matrices before the self-appraisal aggregation map.
Here, $\mathbf e_j$ denotes the $j$-th topic basis vector, while $\mathbf e_{\sigma(i)}^\top$ selects the component corresponding to the specialist's focal topic.
It offers a topic-level contrast to Proposition~\ref{prop:aggregation-invariance} at the agent level.
	
\begin{proposition}[Direct vs.\ broker-mediated topic-level transfer gains]
	\label{prop:gain-separation}
	Under the structured heterogeneous profile introduced above, for any specialist $i\in\mathcal I_{\mathrm{sp}}$ and any topic index $j\neq \sigma(i)$, one has, as $\chi\to 0$,
	\begin{enumerate}
		\item \textit{Direct gain:} $|\mathbf e_{\sigma(i)}^\top C_i(\chi)\mathbf e_j|=O(\chi)$.
		
		\item \textit{Broker-mediated gain:} $|\mathbf e_{\sigma(i)}^\top C_i(\chi)C_n\mathbf e_j| = b_j = \Theta(1)$.
	\end{enumerate}
\end{proposition}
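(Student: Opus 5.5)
The plan is a direct computation from the parameterization of the specialist and broker logic matrices; no spectral or dynamical input is needed beyond $\widehat C_i\mathds{1}_m=\mathds{1}_m$ and $\mathbf b\in\mathrm{int}\Delta_m$. Throughout, fix a specialist $i\in\mathcal I_{\mathrm{sp}}$ and a topic $j\neq\sigma(i)$, and abbreviate $p:=\sigma(i)$.

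For the direct gain, expand
\[
\mathbf e_{p}^\top C_i(\chi)\mathbf e_j=(1-\chi)(\mathbf e_p^\top\mathds{1}_m)(\mathbf e_p^\top\mathbf e_j)+\chi\,\mathbf e_p^\top\widehat C_i\mathbf e_j .
\]
Since $j\neq p$, we have $\mathbf e_p^\top\mathbf e_j=0$, so the first term vanishes. Hence $|\mathbf e_p^\top C_i(\chi)\mathbf e_j|=\chi\,|[\widehat C_i]_{pj}|$. Because $\widehat C_i$ does not depend on $\chi$, this is bounded by $\chi\max_{p,q}|[\widehat C_i]_{pq}|$, which is $O(\chi)$ as $\chi\to0$.

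For the broker-mediated gain, use $C_n=\mathds{1}_m\mathbf b^\top$ to write $C_i(\chi)C_n\mathbf e_j=C_i(\chi)\mathds{1}_m\,b_j$. Row-stochasticity then does the work. By the identity $\mathds{1}_m\mathbf e_p^\top\mathds{1}_m=\mathds{1}_m$ together with $\widehat C_i\mathds{1}_m=\mathds{1}_m$, we get $C_i(\chi)\mathds{1}_m=(1-\chi)\mathds{1}_m+\chi\mathds{1}_m=\mathds{1}_m$ for every $\chi$; this is also implied by admissibility, since ES requires $C_i\mathds{1}=\mathds{1}$. Therefore $\mathbf e_p^\top C_i(\chi)C_n\mathbf e_j=b_j$ exactly, independently of $\chi$. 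Since $\mathbf b\in\mathrm{int}\Delta_m$, we have $0<b_j<1$, so the quantity is a fixed positive constant, that is, $\Theta(1)$.

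There is no genuine obstacle. The only points to handle carefully are to invoke $j\neq\sigma(i)$ explicitly to kill the rank-one term in the first part, and to note that the bound in the first part is uniform because $\widehat C_i$ is $\chi$-independent. I will add a brief remark contrasting the two parts: the direct path vanishes linearly in $\chi$, while the broker path stays at $b_j$. This gives the promised topic-level disparity, which then disappears at the agent level by Proposition~\ref{prop:aggregation-invariance}.
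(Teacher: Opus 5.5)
Your proposal is correct and matches the paper's proof essentially step for step: the condition $j\neq\sigma(i)$ kills the rank-one term, leaving $\chi\,[\widehat C_i]_{\sigma(i)j}$, and the broker term collapses to $b_j$ via $C_n\mathbf e_j=b_j\mathds{1}_m$ and $C_i(\chi)\mathds{1}_m=\mathds{1}_m$. Where the paper simply invokes $C_i(\chi)\mathds{1}_m=\mathds{1}_m$, you verify it directly from the parameterization using $\widehat C_i\mathds{1}_m=\mathds{1}_m$, which is a small but welcome addition.
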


\begin{proof}
	Fix $i\in\mathcal I_{\mathrm{sp}}$ and $j\neq \sigma(i)$.
	By the definition of $C_i(\chi)$,
	\begin{align*}
		\mathbf e_{\sigma(i)}^\top C_i(\chi)\mathbf e_j
		& =
		(1-\chi)\mathbf e_{\sigma(i)}^\top\mathds{1}_m\mathbf e_{\sigma(i)}^\top\mathbf e_j
		+
		\chi \mathbf e_{\sigma(i)}^\top \widehat C_i \mathbf e_j \\
		& =
		\chi \mathbf e_{\sigma(i)}^\top \widehat C_i \mathbf e_j.
	\end{align*}
	Hence, $|\mathbf e_{\sigma(i)}^\top C_i(\chi)\mathbf e_j| = \chi |\mathbf e_{\sigma(i)}^\top \widehat C_i \mathbf e_j| \le M_i \chi$,
	where $M_i:=\|\widehat C_i\|_1$ is independent of $\chi$.
	Then, $|\mathbf e_{\sigma(i)}^\top C_i(\chi)\mathbf e_j|=O(\chi)$.
	
	On the other hand, since $C_n\mathbf e_j=b_j\mathds{1}_m$ and $C_i(\chi)\mathds{1}_m=\mathds{1}_m$,
	one gets
	\[
	\mathbf e_{\sigma(i)}^\top C_i(\chi)C_n\mathbf e_j
	=
	b_j \mathbf e_{\sigma(i)}^\top C_i(\chi)\mathds{1}_m
	=
	b_j \mathbf e_{\sigma(i)}^\top\mathds{1}_m
	=
	b_j.
	\]
	Because $\mathbf b\in\mathrm{int}\Delta_m$ and thus $b_j>0$ (independent of $\chi$), one has $|\mathbf e_{\sigma(i)}^\top C_i(\chi)C_n\mathbf e_j| = b_j = \Theta(1)$.
\end{proof}

Combined with Proposition~\ref{prop:aggregation-invariance}, this yields the key contrast of this subsection: under the present self-appraisal aggregation rule, these topic-level disparities do not alter the aggregated agent-level map.
Hence, the limitation lies in the aggregation mechanism rather than in the absence of topic-level heterogeneity itself.
Making heterogeneous logic visible at the agent level would in general require an aggregation rule sensitive to the topic composition of each agent's profile, rather than only to its total mass.

\section{Numerical Simulations}
\label{sec:Simulations}

\begin{figure*}[t!]
	\centering
	\begin{subfigure}[b]{0.23\textwidth}
		\centering
		\includegraphics[width=\linewidth]{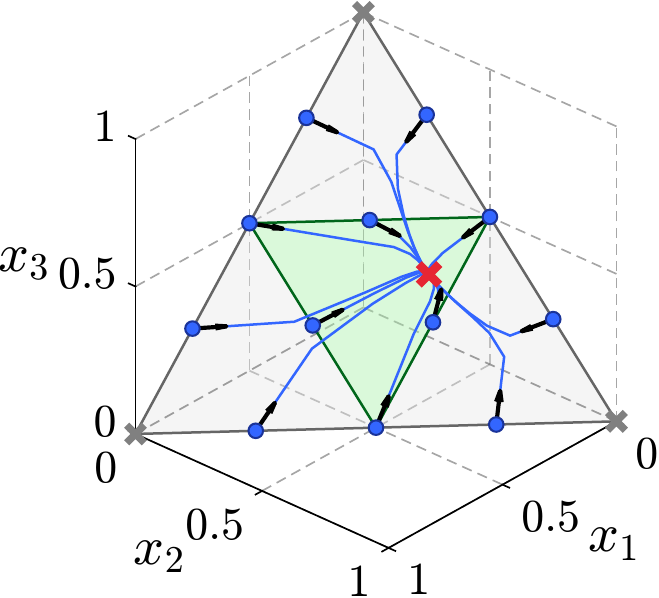}
		\caption{$\bm{r} = [0.249, 0.387, 0.364]^\top$}
		\label{fig:simplex_small}
	\end{subfigure}
	\hfill
	\begin{subfigure}[b]{0.23\textwidth}
		\centering
		\includegraphics[width=\linewidth]{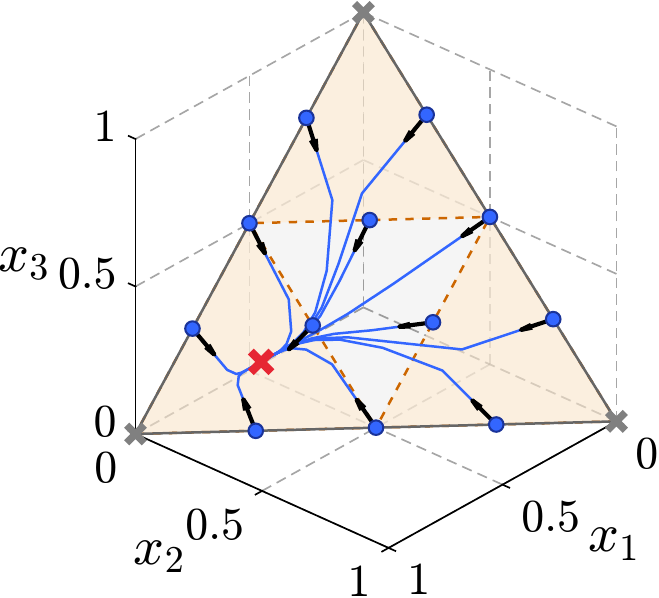}
		\caption{$\bm{r} = [0.441, 0.290, 0.269]^\top$}
		\label{fig:simplex_mixed}
	\end{subfigure}
	\hfill
	\begin{subfigure}[b]{0.23\textwidth}
		\centering
		\includegraphics[width=\linewidth]{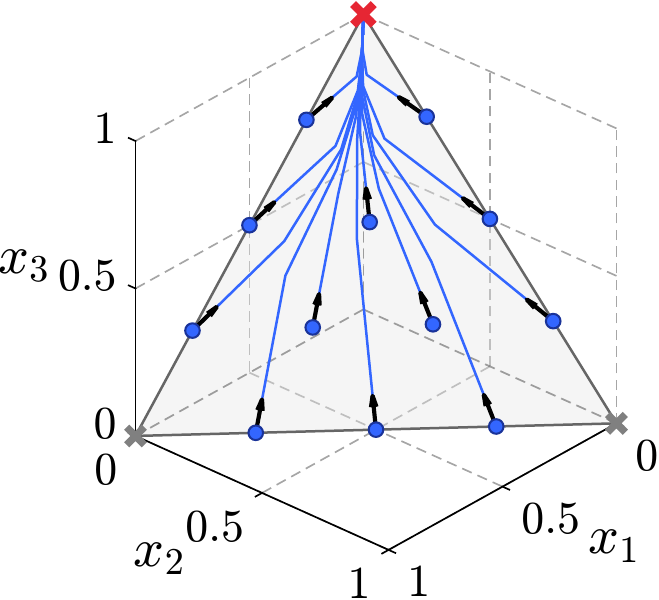}
		\caption{$\bm{r} = [0.205, 0.167, 0.628]^\top$}
		\label{fig:simplex_no_interior}
	\end{subfigure}
	\hfill
	\begin{subfigure}[b]{0.23\textwidth}
		\centering
		\includegraphics[width=\linewidth]{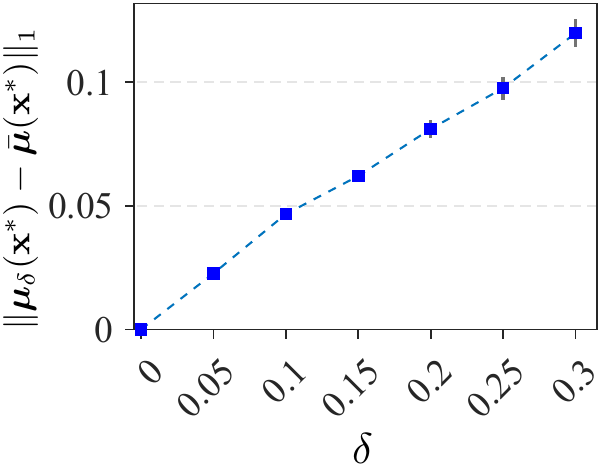}
		\caption{Perturbation near baseline (a)}
		\label{fig:robustness}
	\end{subfigure}
	\caption{Numerical illustration of the baseline dynamics and perturbation response.
		Panels (a)--(c): Evolution of self-weights on the 3-agent simplex $\Delta_3$ under a baseline system in the shared topic-weight regime.
		The trajectories (blue lines) start from multiple non-vertex initial states (blue circles) and converge to the stable equilibrium (red cross). The grey crosses denote the unstable vertex equilibria.
		Panel (a) shows the small-branch regime, in which all trajectories converge to a unique pluralistic interior equilibrium.
		Panel (b) shows the mixed-branch regime, in which all trajectories converge to a unique interior equilibrium with $x^*_{i_{\max}} > 1/2$.
		Panel (c) shows the no-interior regime, in which the dynamics collapse to the vertex $\mathbf e_3$ corresponding to the unique dominant agent.
		Panel (d) reports the perturbation response near the small-branch baseline in panel (a): for each perturbation level $\delta$, it plots the mean value of $\|\bm\mu_\delta(\mathbf x^*)-\bar{\bm\mu}(\mathbf x^*)\|_1$ at the reference interior equilibrium $\mathbf x^*$ over 500 independent trials, with vertical bars showing the 95\% confidence interval. 
        } 
	\label{fig:sim_all}
\end{figure*}

This section presents numerical simulations for a network of $n=3$ agents discussing $m=3$ topics to illustrate:
(i) the equilibrium classification and convergence properties in Theorem~\ref{thm:global-final};
(ii) the eigenvector perturbation behavior underlying the analysis in Section~\ref{subsec:robustness}.
We first construct a baseline system in the shared topic-weight regime satisfying the assumptions used in Section~\ref{sec:baseline-sharedc}, and then synthesize three representative relative interaction matrices whose normalized left eigenvectors $\bm r$ fall into the three mutually exclusive regimes in Theorem~\ref{thm:global-final}.

Fig.~\ref{fig:sim_all}(a)--(c) illustrate the baseline dynamics in Section~\ref{sec:baseline-sharedc}.
These trajectories start from multiple non-vertex initial conditions and demonstrate the global convergence behavior across the three regimes in Theorem~\ref{thm:global-final}.
In Figs.~\ref{fig:sim_all}(a) and \ref{fig:sim_all}(b), all trajectories converge to the unique interior equilibrium computed via Algorithm~\ref{alg:interior-eq}, whereas in Fig.~\ref{fig:sim_all}(c), the dynamics collapse to the vertex $\mathbf e_3$, corresponding to the unique dominant agent.

Fig.~\ref{fig:sim_all}(d) illustrates the eigenvector perturbation behavior analyzed in Section~\ref{subsec:robustness}.
We employ the stable small-branch instance from Fig.~\ref{fig:sim_all}(a) as the reference system and introduce random heterogeneous perturbations to the logic matrices, with magnitude indexed numerically by $\delta$.
For each perturbation level $\delta$, we evaluate the normalized left eigenvector $\bm\mu_\delta(\mathbf x^*)$ of the perturbed coupled system matrix at the reference interior equilibrium $\mathbf x^*$, and compare it with its reference counterpart $\bar{\bm\mu}(\mathbf x^*)$.
The observed near-linear growth of $\|\bm\mu_\delta(\mathbf x^*)-\bar{\bm\mu}(\mathbf x^*)\|_1$ as $\delta$ increases is consistent with the linear perturbation estimate in Lemma~\ref{lem:rankone-proj}.

\section{Concluding Remarks}
\label{sec:Conclusion}

In this paper we studied DF dynamics over signed influence networks with interdependent topics.
Despite the analytical difficulties introduced by repelling antagonistic interactions, the resulting self-appraisal dynamics admit a rigorous characterization under suitable structural conditions.
In the shared topic-weight regime, the model reduces exactly to an explicit scalar DF map, yielding a complete classification of limiting self-weight configurations together with global convergence on the simplex.
In this regime, the limiting self-weight vector represents the corresponding long-term agent-level social power profile, so interaction centrality remains the main determinant of long-term social power formation.

Beyond exact separability, the analysis further clarifies the scope of logic heterogeneity in the present framework.
Small heterogeneous perturbations preserve locally attractive interior equilibria, demonstrating that the centrality-based description of social power is locally robust.
At the same time, substantial topic-level effects generated by the heterogeneous logic do not alter the aggregated self-appraisal map.
These findings indicate that, in the current model, the main obstruction to transmitting topic-level logic heterogeneity to long-term agent-level social power lies in the aggregation mechanism.

Future research naturally points in two directions.
One is to weaken the current spectral and contractivity requirements while maintaining well-posed signed self-appraisal dynamics.
The other is to design topic-sensitive aggregation mechanisms under which heterogeneous logic can genuinely reshape agent-level social power formation.

\appendix

\textit{Proof of Remark~\ref{rmk:model-justification}.}
It suffices to consider the case $m=2$. Let $C\in\mathbb{R}^{2\times 2}$ be an ES matrix with positive diagonal entries:
\[
C=
\begin{bmatrix}
	\gamma & 1-\gamma\\
	\tau & 1-\tau
\end{bmatrix},
\]
where $\gamma>0$ and $1-\tau>0$ by hypothesis.
Its eigenvalues are $\lambda_1=1$ and $\lambda_2=\operatorname{tr}(C)-1=\gamma-\tau$.
Since $C$ is ES with $\rho(C)=1$, we must have $|\lambda_2|=|\gamma-\tau|<1$.
Moreover, the normalized left Perron eigenvector $\bm\pi^\top$ satisfying
$\bm\pi^\top C=\bm\pi^\top$ and $\bm\pi^\top\mathds{1}=1$
must be strictly positive.

Assume, for contradiction, that $C$ contains a negative entry.
Because the diagonal entries are positive, any negative entry must be off-diagonal, which leads to the following two cases:

\emph{Case 1: $1-\gamma<0$, i.e., $\gamma>1$.}
Since $\tau<1$ and $|\gamma-\tau|<1$, we have $\gamma-\tau<1$, hence
$\tau>\gamma-1>0$.
Thus $C$ has sign pattern
$\bigl[\begin{smallmatrix}+&-\\+&+\end{smallmatrix}\bigr]$.
Let $\bm\pi=[\pi_1,\pi_2]^\top$.
From $\bm\pi^\top C=\bm\pi^\top$, the first component equation is $(\gamma-1)\pi_1+\tau\pi_2=0$.
Since $\gamma-1>0$ and $\tau>0$, this equality forces $\pi_1$ and $\pi_2$ to have opposite signs, contradicting $\bm\pi>\bm 0$.

\emph{Case 2: $\tau<0$.}
Again, from $(\gamma-1)\pi_1+\tau\pi_2=0$, and using $\tau<0$ together with $\pi_1,\pi_2>0$, we obtain $\gamma-1>0$, i.e., $\gamma>1$.
It then follows that $ \lambda_2=\gamma-\tau>1$ (because $-\tau>0$), which implies $\rho(C)>1$, contradicting the ES property.

Therefore, no negative entries appear in a logic matrix when $m=2$ under the ES and positive-diagonal assumptions.
Hence negative logical influences require $m\ge 3$. \hfill $\blacksquare$

\begin{lemma}\label{lem:jacobian-compact}
	For any $\mathbf x\in\mathrm{int}\Delta_n$, denote $u_i:=r_i/(1-x_i)>0$, $S:=\sum_{j=1}^n u_j$, and $\kappa:=1/S$.
	Then the DF map \eqref{eq:sep-map} satisfies $\alpha_i(\mathbf x) = u_i/S = \kappa r_i / (1-x_i)$, and its Jacobian is
	\[
	J(\mathbf x):=\frac{\partial\bm\alpha}{\partial\mathbf x}=\kappa\bigl(\operatorname{diag}(\mathbf a)-\bm\alpha \mathbf a^\top\bigr),
	\  \text{with } a_i:=\frac{r_i}{(1-x_i)^2}.
	\]
\end{lemma}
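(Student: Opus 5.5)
The identity for $\alpha_i$ is immediate, and the Jacobian follows from a direct quotient-rule computation. The only subtlety is the ambient space in which we differentiate. We treat $\bm\alpha$ as a function of the free variables $x_1,\dots,x_n$ on an open neighborhood of $\mathrm{int}\Delta_n$ in $\mathbb R^n$. This extension is legitimate because $x_i<1$ there, so every $u_i$ is well defined and positive. By Lemma~\ref{lem:sep-explicit} and $\mathbf x\in\mathrm{int}\Delta_n$ (with $n\ge 3$, so $\mathbf x$ is not a vertex), we have $\alpha_i(\mathbf x)=u_i/S$. Since $\kappa=1/S$, this gives $\alpha_i=\kappa u_i=\kappa r_i/(1-x_i)$, which proves the first claim.

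For the Jacobian, first differentiate the building blocks. We have $\partial u_i/\partial x_j=\delta_{ij}\,r_i/(1-x_i)^2=\delta_{ij}a_i$, and hence $\partial S/\partial x_j=a_j$. The quotient rule then gives
\[
\frac{\partial\alpha_i}{\partial x_j}=\frac{\delta_{ij}a_i}{S}-\frac{u_i a_j}{S^2}
=\kappa\bigl(\delta_{ij}a_i-\alpha_i a_j\bigr),
\]
where we used $u_i/S=\alpha_i$. Collecting these entries in matrix form yields $J(\mathbf x)=\kappa\bigl(\operatorname{diag}(\mathbf a)-\bm\alpha\mathbf a^\top\bigr)$.

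The main point needing care is not analytic but interpretive. This unconstrained Jacobian is the object used in Corollary~\ref{cor:local-attractivity-separable}. There, displacements are tangent to $\Delta_n$, i.e. $\mathds{1}^\top\mathbf v=0$. As a consistency check, $\mathds{1}^\top J=\kappa(\mathbf a^\top-\mathbf a^\top)=\mathbf 0^\top$, so $J$ maps into the tangent space of $\mathcal H_n$, as it must since $\bm\alpha\in\mathcal H_n$. I would record this remark so the later column-sum bound is understood as a bound on the restriction of $J$ as well.
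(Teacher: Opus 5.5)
Your proof is correct and follows the paper's argument: the quotient rule with $\partial u_i/\partial x_k=\delta_{ik}a_i$ and $\partial S/\partial x_k=a_k$, followed by the substitution $u_i/S=\alpha_i$. Your extra remarks are valid additions that the paper leaves implicit: you differentiate in an open neighborhood in $\mathbb R^n$, and you check that $\mathds{1}^\top J=\mathbf 0^\top$, so the $\ell_1$ column-sum bound also controls the restriction to the tangent space.
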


\begin{proof}
	By the quotient rule,
	\(
	\partial\alpha_i/\partial x_k
	=\big((\partial u_i/\partial x_k)S-u_i(\partial S/\partial x_k)\big)/S^2.
	\)
	We have $\partial u_i/\partial x_k=\delta_{ik} r_i/(1-x_i)^2=\delta_{ik}a_i$ and $\partial S/\partial x_k=r_k/(1-x_k)^2=a_k$.
	Substituting these identities, together with $\kappa=1/S$ and $\alpha_i=u_i/S$, yields
	\[
	\frac{\partial\alpha_i}{\partial x_k}
	=\kappa^2\big(\delta_{ik}a_i S-u_i a_k\big)
	=\kappa\big(\delta_{ik}a_i-\alpha_i a_k\big),
	\]
	which is exactly the $(i,k)$-th entry of $\kappa(\mathrm{diag}(\mathbf a)-\bm\alpha \mathbf a^\top)$.
\end{proof}

\begin{lemma}[Diagonal $\ell_1$-contraction at a small-branch equilibrium]\label{lem:weighted-L1}
	Let $\mathbf x^*\in\mathrm{int}\Delta_n$ be the small-branch equilibrium.
	If the boundary-degenerate case $\mathcal M:=\{j: x_j^*=\tfrac12\}\neq\emptyset$ occurs, then there exists a diagonal matrix $\Xi=\mathrm{diag}(\xi_1,\dots,\xi_n)\succ0$ such that
	\(
	\|\Xi^{-1}J(\mathbf x^*)\Xi\|_1<1.
	\)
\end{lemma}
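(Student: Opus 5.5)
The plan is to reuse the column-sum computation from the proof of Corollary~\ref{cor:local-attractivity-separable}, first pinning down the structure of $\mathcal M$, and then to choose $\Xi$ as a one-parameter perturbation of the identity that moves weight away from the single critical column.

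First, at an interior equilibrium $\bm\alpha(\mathbf x^*)=\mathbf x^*$. Lemma~\ref{lem:jacobian-compact} then gives the entries of $J:=J(\mathbf x^*)$:
\[
J_{jj}=\kappa^*a_j^*(1-x_j^*)>0,\qquad J_{ij}=-\kappa^*x_i^*a_j^*<0 \quad (i\neq j).
\]
Using $\kappa^*a_j^*=x_j^*/(1-x_j^*)$ from \eqref{eq:FP-scalar}, the $j$-th absolute column sum equals $2x_j^*$. This is exactly as in the proof of Corollary~\ref{cor:local-attractivity-separable}.

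Next, I show that $\mathcal M$ is a singleton $\{p\}$ and that $x_j^*<\tfrac12$ for all $j\neq p$.
\begin{itemize}
\item If $x_p^*=\tfrac12$, then $\kappa^*r_p=\tfrac14$. Since $\kappa^*r_i=x_i^*(1-x_i^*)\le\tfrac14$ for every $i$, this forces $r_p=r_{\max}$ and $\kappa^*=\kappa_{\max}$.
\item Two indices with $x^*=\tfrac12$ would exhaust the simplex budget, forcing the remaining components to be zero when $n\ge3$. This contradicts $\mathbf x^*\in\mathrm{int}\Delta_n$.
\item For $j\neq p$, we have $\sum_{j\neq p}x_j^*=\tfrac12$ with at least two strictly positive terms, so each $x_j^*<\tfrac12$.
\end{itemize}
Consequently, column $p$ of $|J|$ has sum exactly $1$, and every other column has sum $2x_j^*<1$. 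Let $\gamma:=\max_{j\neq p}2x_j^*<1$.

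Now I take $\xi_i=1$ for $i\neq p$ and $\xi_p=1-\theta$ with $\theta\in(0,1)$ to be chosen. The entries of $\Xi^{-1}J\Xi$ are $J_{ij}\xi_j/\xi_i$, so its absolute column sums are as follows.
\begin{itemize}
\item For column $p$:
\[
|J_{pp}|+(1-\theta)\sum_{i\neq p}|J_{ip}| = 1-\theta\sum_{i\neq p}|J_{ip}|<1 .
\]
The sum $\sum_{i\neq p}|J_{ip}|=\kappa^*a_p^*(1-x_p^*)$ is strictly positive because $\mathbf x^*>\bm 0$ and $\bm r>\bm 0$ (Proposition~\ref{prop:SLC-positivity-gap}).
\item For column $j\neq p$:
\[
\frac{|J_{pj}|}{1-\theta}+\sum_{i\neq p}|J_{ij}| = 2x_j^*+|J_{pj}|\frac{\theta}{1-\theta}\le \gamma+\max_{j\neq p}|J_{pj}|\frac{\theta}{1-\theta}.
\]
This is strictly below $1$ once $\theta$ is small enough.
\end{itemize}
Since there are finitely many columns, a single small $\theta>0$ works for all of them, giving $\|\Xi^{-1}J\Xi\|_1<1$.

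The only delicate point is the structural step showing that $\mathcal M$ is a single index and that all other components are strictly below $\tfrac12$. This is what makes the unit column isolated, so that the reweighting can trade the slack $1-\gamma$ in the other columns against the strict decrease in column $p$. Everything else is the explicit column-sum bookkeeping above.
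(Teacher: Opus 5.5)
Your proof is correct and follows essentially the same route as the paper. Both use the weighted column sums $\Sigma_j(\Xi)$ of $\Xi^{-1}J(\mathbf x^*)\Xi$ and shrink the weight of the critical column by a factor $1-\varepsilon$. This trades the strict slack $1-2x_k^*$ in the non-critical columns against a strict decrease in the critical column, whose unweighted sum is exactly $1$.

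The only difference is bookkeeping. You first show that $\mathcal M$ is a singleton, using $n\ge 3$ and interiority. The paper instead scales all indices in $\mathcal M$ at once and only needs $\mathcal M\neq\mathcal I$. Since $|\mathcal M|=1$ necessarily holds, the two constructions coincide.
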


\begin{proof}
	Let $J^*:=J(\mathbf x^*)	= \kappa^*\bigl(\operatorname{diag}(\mathbf a^*)-\mathbf x^*(\mathbf a^*)^\top\bigr)$ from Lemma~\ref{lem:jacobian-compact}, where $a_j^*:=r_j/(1-x_j^*)^2$.
	For a diagonal matrix $\Xi=\operatorname{diag}(\xi_1,\dots,\xi_n)\succ0$, the $j$-th absolute column sum of $\Xi^{-1}J^*\Xi$ is
	\begin{equation}\label{eq:colsumD}
		\Sigma_j(\Xi) := \sum_i |(\Xi^{-1}J^*\Xi)_{ij}| = \kappa^* a_j^* \Bigl((1-x_j^*)+\sum_{i\ne j}x_i^*\frac{\xi_j}{\xi_i}\Bigr).
	\end{equation}
	
	We construct $\Xi$ by setting $\xi_j=1-\varepsilon$ for $j\in\mathcal M$ and $\xi_i=1$ for $i\notin\mathcal M$, with $\varepsilon\in(0,1)$.
	For any $j\in\mathcal M$, we have $x_j^*=\tfrac12$ and, using the equilibrium relation $x_j^*=\kappa^* r_j/(1-x_j^*)$, one gets $\kappa^* a_j^* = \kappa^* r_j/(1-x_j^*)^2 = x_j^*/(1-x_j^*)=1$.
	Substituting into \eqref{eq:colsumD} gives
	\[
	\Sigma_j(\Xi) = \tfrac12+(1-\varepsilon)\sum_{i\ne j}\frac{x_i^*}{\xi_i}.
	\]
	Since $\xi_i\ge 1-\varepsilon$ and $\sum_{i\ne j}x_i^*=\tfrac12$, one always has $\Sigma_j(\Xi)\le 1$.
	Moreover, because $\mathcal M\neq\mathcal I$ (otherwise $x_i^*=\tfrac12$ for all $i$, which together with $\sum_i x_i^*=1$ would force $n=2$), there exists at least one index $i\notin\mathcal M$ with $\xi_i=1$.
	Hence $\sum_{i\ne j}x_i^*/\xi_i<\tfrac12/(1-\varepsilon)$, 
	and therefore $\Sigma_j(\Xi)<1$ for all $j\in\mathcal M$.
	
	Next, for any $k\notin\mathcal M$, one has $x_k^*<\tfrac12$, hence there exists $\eta\in(0,1]$ such that $2x_k^*\le 1-\eta$.
	Using \eqref{eq:colsumD} and the fact that $\xi_k/\xi_i\le (1-\varepsilon)^{-1}$ whenever $i\in\mathcal M$, we obtain
	\begin{align*}
		\Sigma_k(\Xi)
		&=
		\kappa^* a_k^*
		\Bigl((1-x_k^*)+\sum_{i\ne k}x_i^*\frac{\xi_k}{\xi_i}\Bigr)\\
		&\le
		\kappa^* a_k^*
		\Bigl((1-x_k^*)+\sum_{i\ne k}x_i^*\Bigr)
		+
		\kappa^* a_k^*
		\sum_{i\in\mathcal M}x_i^*
		\Bigl(\tfrac{1}{1-\varepsilon}-1\Bigr)\\
		&=
		2x_k^*+K_{\max}\frac{\varepsilon}{1-\varepsilon},
	\end{align*}
	where $K_{\max}:= \max_k(\kappa^* a_k^*) \sum_{i\in\mathcal M}x_i^*<\infty$.
	Choosing $\varepsilon>0$ sufficiently small so that $K_{\max}\frac{\varepsilon}{1-\varepsilon}<\eta$ ensures $\Sigma_k(\Xi)<1$ for all $k\notin\mathcal M$.
	Therefore $\|\Xi^{-1}J^*\Xi\|_1	= \max_j\Sigma_j(\Xi)<1$.
\end{proof}

\begin{lemma}[Diagonal $\ell_1$-contraction at the mixed-branch equilibrium]\label{lem:diag-weight-mixed}
	Let $\mathbf x^*\in\mathrm{int}\Delta_n$ be the unique mixed-branch equilibrium, i.e., $x_{i_{\max}}^*>\tfrac12$ for a unique dominant agent $i_{\max}$ and $x_j^*<\tfrac12\,\forall j\neq i_{\max}$.
	Then there exists a diagonal matrix $\Xi=\mathrm{diag}(\xi_1,\dots,\xi_n)\succ0$ such that $\|\Xi^{-1}J(\mathbf x^*)\Xi\|_1<1$.
\end{lemma}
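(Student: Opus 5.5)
We will reuse the column-sum formula \eqref{eq:colsumD} from the proof of Lemma~\ref{lem:weighted-L1}, now at the mixed-branch equilibrium. Write $p:=i_{\max}$ and $s:=1-x_p^*\in(0,\tfrac12)$. By the equilibrium identity $\kappa^*a_j^*=x_j^*/(1-x_j^*)=:d_j$ (Lemma~\ref{lem:jacobian-compact} together with \eqref{eq:FP-scalar}), the $j$-th absolute column sum of $\Xi^{-1}J(\mathbf x^*)\Xi$ is
\[
\Sigma_j(\Xi)=d_j\Bigl((1-x_j^*)+\sum_{i\neq j}x_i^*\frac{\xi_j}{\xi_i}\Bigr).
\]
Here $d_p>1$, whereas $d_j<1$ for $j\neq p$. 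We will use the two-level weighting $\xi_p=1$ and $\xi_j=\theta$ for all $j\neq p$, with a scalar $\theta>0$ still to be chosen. With this choice every column sum becomes an explicit function of $\theta$.

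\emph{Column $p$.} Since $\sum_{i\neq p}x_i^*=s$, we get
\[
\Sigma_p=d_p\bigl(s+s/\theta\bigr)=x_p^*\bigl(1+1/\theta\bigr).
\]
This is $<1$ if and only if $\theta>x_p^*/s=d_p$. So the dominant agent's column is contracted exactly when $\theta$ exceeds $d_p>1$.

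\emph{Columns $j\neq p$.} For these, $\xi_j/\xi_i=1$ when $i\neq p$, and $\xi_j/\xi_p=\theta$. Using $\sum_{i\neq j,p}x_i^*=s-x_j^*$, we obtain
\[
\Sigma_j=d_j\bigl(1-x_j^*+s-x_j^*+x_p^*\theta\bigr).
\]
The condition $\Sigma_j<1$ is therefore equivalent to $f_\theta(x_j^*)<0$, where
\[
f_\theta(x):=x\bigl(1+s-2x+x_p^*\theta\bigr)-(1-x).
\]

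\emph{Boundary case $\theta=d_p$.} Here $x_p^*\theta=(1-s)^2/s$. A direct substitution gives $f_{d_p}(s)=0$. Moreover, $f_{d_p}'(x)=2+s-4x+(1-s)^2/s$, which is positive on $[0,s]$ because $s<\tfrac12$. Hence $f_{d_p}(x)<0$ for every $x\in[0,s)$. Because $n\ge3$ and $\mathbf x^*>\mathbf 0$, each $j\neq p$ satisfies $x_j^*<s$ strictly, so $f_{d_p}(x_j^*)<0$ for all $j\neq p$.

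\emph{Choice of $\theta$.} There are finitely many such strict inequalities, and each $\Sigma_j$ is continuous in $\theta$. We can therefore take $\theta=d_p+\epsilon$ with $\epsilon>0$ small enough that all columns $j\neq p$ still satisfy $\Sigma_j<1$. At the same time $\theta>d_p$, so $\Sigma_p<1$. This gives $\|\Xi^{-1}J(\mathbf x^*)\Xi\|_1=\max_j\Sigma_j<1$.

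The main obstacle is the interplay between the two requirements: $\theta$ must be large enough for column $p$ but small enough for the other columns. The boundary computation $f_{d_p}(s)=0$ is exactly what shows these two regimes meet. We therefore expect the key verification to be the algebra at $\theta=d_p$, together with the strictness coming from $x_j^*<s$, which is where $n\ge3$ is used.
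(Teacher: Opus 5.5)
Your proof is correct and is essentially the paper's argument. Your two-level scaling $\xi_{i_{\max}}=1$, $\xi_j=\theta$ coincides, up to a global rescaling with $\theta=1/(1-\varepsilon)$, with the paper's $\xi_{i_{\max}}=1-\varepsilon$, $\xi_j=1$, so your threshold $\theta>d_p$ is the paper's $\varepsilon>\varepsilon_{\min}=2-1/x_{i_{\max}}^*$, and both arguments rely on the strict inequality $x_j^*<1-x_{i_{\max}}^*$ that comes from $n\ge 3$. The difference is only presentational: the paper bounds the explicit thresholds $\varepsilon_j$ from below by $\varepsilon_{\min}$, whereas you check the non-dominant columns at the threshold via $f_{d_p}(s)=0$ and $f_{d_p}'>0$ on $[0,s]$, then perturb by continuity.
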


\begin{proof}
	Let $J^*:=J(\mathbf x^*)$ and use the column-sum formula \eqref{eq:colsumD}.
	Construct $\Xi(\varepsilon)=\mathrm{diag}(\xi_1,\dots,\xi_n)$ by setting $\xi_{i_{\max}}=1-\varepsilon$ and $\xi_j=1$ for $j\neq i_{\max}$, where $\varepsilon\in(0,1)$.
	A direct calculation gives
	\[
	\Sigma_j(\Xi)=
	\begin{cases}
		2x_{i_{\max}}^*-\varepsilon x_{i_{\max}}^* & \text{if } j=i_{\max},\\
		2x_j^*+\dfrac{x_j^*x_{i_{\max}}^*}{1-x_j^*} \dfrac{\varepsilon}{1-\varepsilon} & \text{if } j\neq i_{\max}.
	\end{cases}
	\]
	
	For the column $j=i_{\max}$, the condition $\Sigma_{i_{\max}}(\Xi)<1$ is equivalent to
	$\varepsilon>\varepsilon_{\min}:=2-1/x_{i_{\max}}^* \in(0,1)$.
	
	For each $j\neq i_{\max}$, define $\delta_j:=1-2x_j^*>0$ and $\nu_j:=x_j^*x_{i_{\max}}^*/(1-x_j^*)>0$. 
	Then $\Sigma_j(\Xi)<1$ is equivalent to
	\[
	2x_j^*+\nu_j\frac{\varepsilon}{1-\varepsilon}<1
	\ \Longleftrightarrow \
	\varepsilon<\varepsilon_j:=\frac{\delta_j}{\delta_j+\nu_j}.
	\]
	Let $\varepsilon_{\max}:=\min_{j\neq i_{\max}}\varepsilon_j\in(0,1]$.
	We show that $\varepsilon_{\max}>\varepsilon_{\min}$ for $n\ge 3$, hence one can always select any $\varepsilon\in(\varepsilon_{\min},\varepsilon_{\max})$ and obtain $\max_j \Sigma_j(\Xi)<1$.

	To this end, fix any $j\neq i_{\max}$.
	Since $\mathbf x^*\in\mathrm{int}\Delta_n$ and $n\ge 3$, there exists at least one index
	$h\notin\{i_{\max},j\}$ such that $x_h^*>0$.
	Hence $x_j^* < \sum_{k\neq i_{\max}}x_k^* =	1-x_{i_{\max}}^*$.
	Therefore, $
	\delta_j=1-2x_j^*
	>
	1-2(1-x_{i_{\max}}^*)
	=
	2x_{i_{\max}}^*-1$,
	and also, since $1-x_j^*>x_{i_{\max}}^*$ and $\nu_j={x_j^*x_{i_{\max}}^*}/{(1-x_j^*)}
	<
	x_j^*
	<
	1-x_{i_{\max}}^*$.
	Because the map $(\delta,\nu)\mapsto \delta/(\delta+\nu)$ is increasing in $\delta$ and decreasing in $\nu$, it follows that
	\begin{equation*}
		\begin{aligned}
			\varepsilon_j=\frac{\delta_j}{\delta_j+\nu_j}
			&>
			\frac{2x_{i_{\max}}^*-1}{(2x_{i_{\max}}^*-1)+(1-x_{i_{\max}}^*)} \\
			&=
			\frac{2x_{i_{\max}}^*-1}{x_{i_{\max}}^*}
			=
			2-\frac{1}{x_{i_{\max}}^*}
			=
			\varepsilon_{\min}.
		\end{aligned}
	\end{equation*}
	Since this holds for every $j\neq i_{\max}$, we conclude that $\varepsilon_{\max}:=\min_{j\neq i_{\max}}\varepsilon_j>\varepsilon_{\min}$.
	Choosing any $\varepsilon\in(\varepsilon_{\min},\varepsilon_{\max})$ yields $\Sigma_j(\Xi)<1$ for all $j$, and therefore $\|\Xi^{-1}J^*\Xi\|_1<1$.
\end{proof}

\begin{lemma}[Structural bound for eigenvector differentials]\label{lem:struct-Dmu}
	Let $\mathcal P(\mathbf x)$ and $\bar{\mathcal P}(\mathbf x)$ be the Riesz projectors at the simple eigenvalue $1$ of $\mathcal A(\mathbf x)$ and $\bar{\mathcal A}(\mathbf x)$, respectively, and let $\bm\mu(\mathbf x)$ and $\bar{\bm\mu}(\mathbf x)$ be the corresponding normalized left eigenvectors.
	Define $E_{\mathcal P}:=\mathcal P(\mathbf x)-\bar{\mathcal P}(\mathbf x)$, $E_{\mu}:=\bm\mu(\mathbf x)-\bar{\bm\mu}(\mathbf x)$,
	and, for any direction $\mathbf h\in\mathbb R^n$, $E_{D\mathcal P}[\mathbf h]
	:=
	D\mathcal P(\mathbf x)[\mathbf h]-D\bar{\mathcal P}(\mathbf x)[\mathbf h]$.
	Then
	\begin{align*}
		\|D\bm\mu(\mathbf x)[\mathbf h]-D\bar{\bm\mu}(\mathbf x)[\mathbf h]\|_1
		\le~& 
		2\|E_{D\mathcal P}[\mathbf h]\|_1 \\
		 +~& K_{\rm gap}\|\mathbf h\|_1
		\bigl(\|E_{\mathcal P}\|_1+\|E_{\mu}\|_1\bigr),
	\end{align*}
	where $K_{\rm gap}:=2M_{\mathcal P}+2M_{\bar{\mathcal P}}+\bar\tau_*$,
	with $M_{\mathcal P}:=\sup_{\|\mathbf h\|_1=1}\|D\mathcal P(\mathbf x)[\mathbf h]\|_1$, $M_{\bar{\mathcal P}}:=\sup_{\|\mathbf h\|_1=1}\|D\bar{\mathcal P}(\mathbf x)[\mathbf h]\|_1$, and $\bar\tau_*:=
	\sup_{\|\mathbf h\|_1=1}
	\left|
	\bar{\bm\mu}(\mathbf x)^\top
	D\bar{\mathcal P}(\mathbf x)[\mathbf h]
	(I-\bar{\mathcal P}(\mathbf x))\mathds{1}
	\right|$.
\end{lemma}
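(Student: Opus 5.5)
The plan is to exploit the rank-one structure of the Riesz projectors and turn the eigenvector differential into a linear expression in $D\mathcal P$. Since $C_i\mathds{1}_m=\mathds{1}_m$ and $W(\mathbf x)\mathds{1}_n=\mathds{1}_n$, both $\mathcal A(\mathbf x)$ and $\bar{\mathcal A}(\mathbf x)$ have $\mathds{1}_{nm}$ as a right eigenvector at the simple eigenvalue $1$. Hence, with the normalization $\bm\mu^\top\mathds{1}=\bar{\bm\mu}^\top\mathds{1}=1$,
\[
\mathcal P(\mathbf x)=\mathds{1}\,\bm\mu(\mathbf x)^\top,\qquad \bar{\mathcal P}(\mathbf x)=\mathds{1}\,\bar{\bm\mu}(\mathbf x)^\top .
\]
I will differentiate the identity $\bm\mu^\top=\bm\mu^\top\mathcal P$ in the direction $\mathbf h$, which is legitimate because $\mathbf x\mapsto\mathcal P(\mathbf x)$ is smooth when the eigenvalue stays isolated inside $\Gamma$, as in Lemma~\ref{lem:robust-res}. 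This gives
\[
D\bm\mu^\top[\mathbf h]=D\bm\mu^\top[\mathbf h]\,\mathcal P+\bm\mu^\top D\mathcal P[\mathbf h].
\]
The normalization gives $D\bm\mu^\top[\mathbf h]\mathds{1}=0$, so $D\bm\mu^\top[\mathbf h]\mathcal P=(D\bm\mu^\top[\mathbf h]\mathds{1})\bm\mu^\top=0$, and therefore $D\bm\mu^\top[\mathbf h]=\bm\mu^\top D\mathcal P[\mathbf h]$. The same computation holds for the barred quantities.

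The key step is the splitting of the difference
\[
D\bm\mu^\top-D\bar{\bm\mu}^\top=\bm\mu^\top E_{D\mathcal P}[\mathbf h]+E_\mu^\top D\bar{\mathcal P}[\mathbf h].
\]
The second term is bounded by $M_{\bar{\mathcal P}}\|\mathbf h\|_1\|E_\mu\|_1$. For the first term I will use $\|\bm\mu\|_1\le\|\bar{\bm\mu}\|_1+\|E_\mu\|_1=1+\|E_\mu\|_1$, recalling that $\|\bar{\bm\mu}\|_1=1$ as in Lemma~\ref{lem:rankone-proj}. In the small-perturbation regime $\|E_\mu\|_1\le 1$, this yields the factor $2$ in front of $\|E_{D\mathcal P}[\mathbf h]\|_1$. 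Without that restriction, I will instead write $\bm\mu^\top E_{D\mathcal P}=\bar{\bm\mu}^\top E_{D\mathcal P}+E_\mu^\top E_{D\mathcal P}$ and bound $\|E_{D\mathcal P}[\mathbf h]\|_1\le (M_{\mathcal P}+M_{\bar{\mathcal P}})\|\mathbf h\|_1$, which charges the excess to $K_{\rm gap}\|\mathbf h\|_1\|E_\mu\|_1$.

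To reproduce exactly the constant $K_{\rm gap}=2M_{\mathcal P}+2M_{\bar{\mathcal P}}+\bar\tau_*$, including the term involving $(I-\bar{\mathcal P})\mathds{1}$, I will redo the computation in a "projected" form that does not presuppose $\mathcal P=\mathds{1}\bm\mu^\top$ exactly. I will represent the eigenvector as $\bm\mu^\top=\bar{\bm\mu}^\top\mathcal P/(\bar{\bm\mu}^\top\mathcal P\mathds{1})$ and differentiate this quotient. The numerator derivative produces $\bar{\bm\mu}^\top D\mathcal P$ together with terms bounded by $M_{\mathcal P}$ and $M_{\bar{\mathcal P}}$ times $\|E_{\mathcal P}\|_1$. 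The denominator derivative produces $\bar{\bm\mu}^\top D\mathcal P\,\mathds{1}$; subtracting its reference counterpart and inserting $I=\bar{\mathcal P}+(I-\bar{\mathcal P})$ isolates precisely $\bar{\bm\mu}^\top D\bar{\mathcal P}(I-\bar{\mathcal P})\mathds{1}$, whose supremum is $\bar\tau_*$. Each term is then controlled by the triangle inequality, using $\|\bar{\bm\mu}\|_1=1$ and submultiplicativity of $\|\cdot\|_1$.

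The main obstacle I expect is the bookkeeping in this quotient differentiation. I must ensure the denominator $\bar{\bm\mu}^\top\mathcal P\mathds{1}$ stays bounded away from zero, namely within $\|E_{\mathcal P}\|_1$ of $1$ by Lemma~\ref{lem:rankone-proj}, so that the denominator factors can be absorbed into the constants $2$ without extra factors. The rest is routine: linear algebra with Riesz projectors and the uniform bounds $M_{\mathcal P}$ and $M_{\bar{\mathcal P}}$, which are finite by compactness of $\Delta_n$ and the resolvent bounds of Lemma~\ref{lem:robust-res}.
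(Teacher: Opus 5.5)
Your argument is correct, but it takes a different and shorter route than the paper. The paper never uses that $\mathds{1}$ is the right eigenvector. It differentiates $\bm\mu^\top\mathcal P=\bm\mu^\top$ only in the projected form $D\bm\mu^\top[\mathbf h](I-\mathcal P)=\bm\mu^\top D\mathcal P[\mathbf h](I-\mathcal P)$. It then writes $D\bm\mu^\top[\mathbf h]=U+\tau(\mathbf h)\bm\mu^\top$, with $U=\bm\mu^\top D\mathcal P[\mathbf h](I-\mathcal P)$ and $\tau$ fixed by $D\bm\mu^\top[\mathbf h]\mathds{1}=0$, and expands $U-\bar U$ and $\tau-\bar\tau$ separately. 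The factor $2$, the constant $2M_{\mathcal P}+2M_{\bar{\mathcal P}}$, and $\bar\tau_*$ all come from that bookkeeping. This template applies to any simple eigenvalue whose left eigenvector is normalized by $\bm\mu^\top\mathds{1}=1$, whatever the right eigenvector is.

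You instead use row-stochasticity: $\mathcal A(\mathbf x)\mathds{1}=\mathds{1}$ holds for every $\mathbf x$, since $W(\mathbf x)\mathds{1}=\mathds{1}$ and $C_i\mathds{1}=\mathds{1}$. Hence $\mathcal P=\mathds{1}\bm\mu^\top$ and $(I-\mathcal P)\mathds{1}=0$, so $\tau=\bar\tau=0$, in fact $\bar\tau_*=0$, and $D\bm\mu^\top[\mathbf h]=\bm\mu^\top D\mathcal P[\mathbf h]$. Your unrestricted estimate $\|E_{D\mathcal P}[\mathbf h]\|_1+(M_{\mathcal P}+2M_{\bar{\mathcal P}})\|\mathbf h\|_1\|E_\mu\|_1$ is already bounded by the stated right-hand side, so the proof is finished there.

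Your third paragraph is therefore unnecessary. A smaller constant implies the stated inequality, and there is no nonzero $\bar\tau_*$ term to isolate. The denominator $\bar{\bm\mu}^\top\mathcal P\mathds{1}=\bar{\bm\mu}^\top\mathds{1}=1$ identically, so the obstacle you anticipate does not arise.

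One caveat, shared with the paper's proof: bounds such as $\|\bm\mu^\top E_{D\mathcal P}[\mathbf h]\|_1\le\|\bm\mu\|_1\|E_{D\mathcal P}[\mathbf h]\|_1$ are not instances of submultiplicativity for the paper's column-sum norm $\|A\|_1=\max_j\sum_i|a_{ij}|$. Left multiplication by a row vector is controlled by the row-sum norm instead. In your setting this is easily made exact:
\begin{itemize}
\item Since $D\bar{\mathcal P}[\mathbf h]=\mathds{1}D\bar{\bm\mu}[\mathbf h]^\top$ and $E_\mu^\top\mathds{1}=0$, the term $E_\mu^\top D\bar{\mathcal P}[\mathbf h]$ vanishes.
\item $E_{D\mathcal P}[\mathbf h]=\mathds{1}\bigl(D\bm\mu[\mathbf h]-D\bar{\bm\mu}[\mathbf h]\bigr)^\top$ has column-sum norm $nm\|D\bm\mu[\mathbf h]-D\bar{\bm\mu}[\mathbf h]\|_\infty\ge\|D\bm\mu[\mathbf h]-D\bar{\bm\mu}[\mathbf h]\|_1$.
\end{itemize}
So the lemma holds with the single term $\|E_{D\mathcal P}[\mathbf h]\|_1$ on the right.
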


\begin{proof}
	For brevity, all quantities below are evaluated at the same $\mathbf x$.
	From $\bm\mu^\top\mathcal P=\bm\mu^\top$, differentiation along $\mathbf h$ gives
	\[
	D\bm\mu^\top[\mathbf h](I-\mathcal P)=\bm\mu^\top D\mathcal P[\mathbf h](I-\mathcal P).
	\]
	Hence, $D\bm\mu^\top[\mathbf h]=U+\tau(\mathbf h)\bm\mu^\top$, with $U:=\bm\mu^\top D\mathcal P[\mathbf h](I-\mathcal P)$.
	Using the constraint $D\bm\mu^\top[\mathbf h]\mathds{1}=0$ (owing to $\bm\mu^\top\mathds{1}=1$), one has $\tau(\mathbf h)=-U\mathds{1}$, and thus $D\bm\mu^\top[\mathbf h] = U-\bigl(U\mathds{1}\bigr)\bm\mu^\top$.
	Analogously, $D\bar{\bm\mu}^\top[\mathbf h] = \bar U-\bar\tau(\mathbf h)\bar{\bm\mu}^\top$, with $\bar U:=\bar{\bm\mu}^\top D\bar{\mathcal P}[\mathbf h](I-\bar{\mathcal P})$ and $\bar\tau(\mathbf h):=\bar U\mathds{1}$.
	Therefore
	\[
	D\bm\mu^\top[\mathbf h]-D\bar{\bm\mu}^\top[\mathbf h]
	=
	(U-\bar U)-\bigl(\tau(\mathbf h)\bm\mu^\top-\bar\tau(\mathbf h)\bar{\bm\mu}^\top\bigr).
	\]
	First, expanding $U-\bar U$ yields $U-\bar U
	=
	E_{\mu}^\top D\bar{\mathcal P}[\mathbf h](I-\bar{\mathcal P})
	+
	\bm\mu^\top E_{D\mathcal P}[\mathbf h](I-\bar{\mathcal P})
	+
	\bm\mu^\top D\mathcal P[\mathbf h](\bar{\mathcal P}-\mathcal P)$.
	Hence,
	\begin{equation}\label{eq:UminusUbar-new}
		\begin{aligned}
			\|U-\bar U\|_1
			\le 
			\|E_{\mu}\|_1 M_{\bar{\mathcal P}} \|\mathbf h\|_1
			+
			\|E_{D\mathcal P}[\mathbf h]\|_1 +
			\|E_{\mathcal P}\|_1 M_{\mathcal P} \|\mathbf h\|_1.
		\end{aligned}
	\end{equation}
	Next, one can write $\tau(\mathbf h)\bm\mu^\top-\bar\tau(\mathbf h)\bar{\bm\mu}^\top = (\tau(\mathbf h)-\bar\tau(\mathbf h))\bm\mu^\top
	+\bar\tau(\mathbf h)(\bm\mu^\top-\bar{\bm\mu}^\top)$.
	Therefore,
	\[
	\|\tau(\mathbf h)\bm\mu^\top-\bar\tau(\mathbf h)\bar{\bm\mu}^\top\|_1
	\le
	|\tau(\mathbf h)-\bar\tau(\mathbf h)|
	+
	|\bar\tau(\mathbf h)|\,\|E_{\mu}\|_1.
	\]
	By the same expansion leading to \eqref{eq:UminusUbar-new}, one derives
	\begin{align*}
		|\tau(\mathbf h)-\bar\tau(\mathbf h)|
		\le
		\|E_{D\mathcal P}[\mathbf h]\|_1 +
		\bigl(
		M_{\bar{\mathcal P}} \|E_{\mu}\|_1
		+
		M_{\mathcal P} \|E_{\mathcal P}\|_1
		\bigr)\|\mathbf h\|_1.
	\end{align*}
	Moreover, by definition of $\bar\tau_*$, one obtains $|\bar\tau(\mathbf h)|\le \bar\tau_*\|\mathbf h\|_1$.
	Combining the above estimates yields the claimed bound.
\end{proof}

\begin{lemma}[Local Lipschitz stability]\label{lem:lip-blackbox}
	Let Assumptions~\ref{assum:robust-ref}--\ref{assum:robust-small} hold.
	Fix any compact set $\mathcal N\subset\mathrm{int}\Delta_n$ and any norm $\|\cdot\|_{\star}$ on $\mathbb R^n$.
	Then there exists a constant $K_{\rm lip}^{\star}>0$, depending only on $\delta_A$, $R$, $\bar{\mathcal C}_{[n]}$, the set $\mathcal N$, and the chosen norm $\|\cdot\|_{\star}$, such that
	\[
	\mathrm{Lip}_{\mathcal N}^{\star}(\bm\phi)
	\le
	\mathrm{Lip}_{\mathcal N}^{\star}(\bm\alpha)+K_{\rm lip}^{\star}\widetilde\varepsilon,
	\]
	where $\widetilde\varepsilon:=\varepsilon+\varepsilon_C\|I-R\|_1$ and $\varepsilon_C:=\max_i\|C_i-\bar C_i\|_1$.
	In particular, if $\bm\alpha$ is $q$-contractive on $\mathcal N$ in $\|\cdot\|_{\star}$ and $\widetilde\varepsilon$ is small enough that $q+K_{\rm lip}^{\star}\widetilde\varepsilon<1$,
	then $\bm\phi$ is also a contraction on $\mathcal N$ in the norm $\|\cdot\|_{\star}$.
\end{lemma}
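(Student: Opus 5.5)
The plan is to write $\bm\phi=B\bm\mu$ with $B=I_n\otimes\mathds{1}_m^\top$ and $\|B\|_1=1$, as in Proposition~\ref{prop:map-deviation}. Then $\bm\phi-\bm\alpha=B(\bm\mu-\bar{\bm\mu})$, and it suffices to bound the differential of $\bm\mu-\bar{\bm\mu}$ uniformly on $\mathcal N$. For $\mathbf x,\mathbf y\in\mathcal N$ I would use the mean-value inequality along the segment $[\mathbf x,\mathbf y]$. This needs $\mathcal N$ to be convex, or else I replace it by its convex hull, which is still a compact subset of $\mathrm{int}\Delta_n$. The result is
\[
\|\bm\phi(\mathbf x)-\bm\phi(\mathbf y)\|_\star\le \|\bm\alpha(\mathbf x)-\bm\alpha(\mathbf y)\|_\star+\nu_\star\sup_{\mathbf z}\|D\bm\mu(\mathbf z)-D\bar{\bm\mu}(\mathbf z)\|_{1\to1}\,\|\mathbf x-\mathbf y\|_1 .
\]
Norm equivalence converts $\|\cdot\|_1$ into $\|\cdot\|_\star$, and the constants are absorbed into $K_{\rm lip}^\star$. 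The whole task therefore reduces to showing $\|D\bm\mu[\mathbf h]-D\bar{\bm\mu}[\mathbf h]\|_1\le K\widetilde\varepsilon\|\mathbf h\|_1$ uniformly on $\mathcal N$.

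For this I invoke Lemma~\ref{lem:struct-Dmu}. It bounds the differential difference by $2\|E_{D\mathcal P}[\mathbf h]\|_1+K_{\rm gap}\|\mathbf h\|_1(\|E_{\mathcal P}\|_1+\|E_\mu\|_1)$. Lemma~\ref{lem:rankone-proj} already gives $\|E_{\mathcal P}\|_1\le K_{\rm proj}\varepsilon$ and $\|E_\mu\|_1\le K_\mu\varepsilon$. To make $K_{\rm gap}$ uniform I need bounds on $M_{\mathcal P}$, $M_{\bar{\mathcal P}}$ and $\bar\tau_*$. I would write $D\mathcal P[\mathbf h]=\frac{1}{2\pi i}\oint_\Gamma (zI-\mathcal A)^{-1}D\mathcal A[\mathbf h](zI-\mathcal A)^{-1}\,dz$, and note that $D\mathcal A[\mathbf h]=\mathcal C_{[n]}(DW[\mathbf h]\otimes I_m)$ with $DW[\mathbf h]=\operatorname{diag}(\mathbf h)(I-R)$. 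This yields $M_{\mathcal P}\le 4\rho K_{\rm res}^2\max_i\|C_i\|_1\|I-R\|_1$ by Lemma~\ref{lem:robust-res}, and similarly for the reference system. The quantity $\bar\tau_*$ is bounded by $M_{\bar{\mathcal P}}(1+\|\bar{\mathcal P}\|_1)$ times $\|\bar{\bm\mu}\|_1=1$.

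The main step, and the one I expect to be the real obstacle, is the estimate of $E_{D\mathcal P}[\mathbf h]$. Writing $\mathcal R=(zI-\mathcal A)^{-1}$ and $\bar{\mathcal R}=(zI-\bar{\mathcal A})^{-1}$, I would expand
\[
\mathcal R\,D\mathcal A\,\mathcal R-\bar{\mathcal R}\,D\bar{\mathcal A}\,\bar{\mathcal R}=(\mathcal R-\bar{\mathcal R})D\mathcal A\,\mathcal R+\bar{\mathcal R}(D\mathcal A-D\bar{\mathcal A})\mathcal R+\bar{\mathcal R}D\bar{\mathcal A}(\mathcal R-\bar{\mathcal R}).
\]
The identity $\mathcal R-\bar{\mathcal R}=\bar{\mathcal R}E\mathcal R$ gives $\|\mathcal R-\bar{\mathcal R}\|_1\le 2K_{\rm res}^2\varepsilon$. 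The middle term is $(\mathcal C_{[n]}-\bar{\mathcal C}_{[n]})(\operatorname{diag}(\mathbf h)(I-R)\otimes I_m)$, whose norm is at most $\varepsilon_C\|I-R\|_1\|\mathbf h\|_1$. This term is precisely why $\widetilde\varepsilon$ contains $\varepsilon_C\|I-R\|_1$, since $\varepsilon$ alone only controls $E$ and not the difference of the differentials. Integrating over $\Gamma$ gives $\|E_{D\mathcal P}[\mathbf h]\|_1\le K'\widetilde\varepsilon\|\mathbf h\|_1$.

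Two points need care here. The first is the uniformity of $\max_i\|C_i\|_1$, which I take to be at most $\max_i\|\bar C_i\|_1+\varepsilon_C$. The second is smoothness of $\mathcal P$ and $\bm\mu$ in $\mathbf x$, which follows because the range of $\mathcal P$ is one-dimensional (Lemma~\ref{lem:rankone-proj}) and $\bm\mu^\top=\mathds{1}^\top\mathcal P/(\mathds{1}^\top\mathcal P\mathds{1})$ has a nonvanishing denominator for small perturbations. Collecting the estimates gives the first claim. The contraction statement then follows immediately: when $q+K_{\rm lip}^\star\widetilde\varepsilon<1$, the Lipschitz constant of $\bm\phi$ on $\mathcal N$ in $\|\cdot\|_\star$ is below $1$.
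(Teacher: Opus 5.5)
Your proposal is correct and follows essentially the same route as the paper. The shared steps are: write $\bm\phi=B\bm\mu$; differentiate the Riesz projector through the contour integral with $DW[\mathbf h]=\operatorname{diag}(\mathbf h)(I-R)$; split the resolvent difference into three terms, where the $D\mathcal A-D\bar{\mathcal A}$ term produces the $\varepsilon_C\|I-R\|_1$ part of $\widetilde\varepsilon$; then combine Lemma~\ref{lem:struct-Dmu}, Lemma~\ref{lem:rankone-proj} and norm equivalence. Your explicit mean-value step (assuming $\mathcal N$ convex, or passing to its convex hull) and your uniform bounds on $M_{\mathcal P}$, $M_{\bar{\mathcal P}}$, $\bar\tau_*$ fill in details the paper leaves implicit. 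The extra $\varepsilon_C$-dependence entering through $\max_i\|C_i\|_1$ is harmless, because it always multiplies $\varepsilon$ and $K_{\rm res}\varepsilon\le\tfrac12$, so the constant still depends only on the reference data.
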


\begin{proof}
	The proof strategy is to control $D\bm\phi-D\bm\alpha$ on $\mathcal N$ by bounding the difference between the Fr\'echet derivatives of the associated Riesz projectors.

	Let $R_A(z):=(zI-\mathcal A(\mathbf x))^{-1}$ and $R_{\bar A}(z):=(zI-\bar{\mathcal A}(\mathbf x))^{-1}$.
	For a simple isolated eigenvalue, the Fr\'echet derivative of the Riesz projector admits the contour representation
	\cite{kato2013perturbation}:
	\begin{equation}\label{eq:DP-resolvent}
		D\mathcal P(\mathbf x)[\mathbf h]
		=
		\frac{1}{2\pi i}\oint_{\Gamma}R_A(z)\bigl(D\mathcal A(\mathbf x)[\mathbf h]\bigr)R_A(z) \mathrm{d}z.
	\end{equation}
	Moreover, one has $D\mathcal A(\mathbf x)[\mathbf h] = \mathcal C_{[n]}((DW(\mathbf x)[\mathbf h])\otimes I_m)$ and $D\bar{\mathcal A}(\mathbf x)[\mathbf h] = \bar{\mathcal C}_{[n]}((DW(\mathbf x)[\mathbf h])\otimes I_m)$,
	where $DW(\mathbf x)[\mathbf h]=\operatorname{diag}(\mathbf h)(I-R)$.
	Hence, for $\|\mathbf h\|_1=1$, one has $\|D\mathcal A(\mathbf x)[\mathbf h]-D\bar{\mathcal A}(\mathbf x)[\mathbf h]\|_1 \le \varepsilon_C\|I-R\|_1$.
	By \eqref{eq:DP-resolvent} and the identity
	\begin{equation*}
		\begin{aligned}
			R_A D\mathcal A R_A - R_{\bar A} & D\bar{\mathcal A} R_{\bar A}
			= R_A(D\mathcal A-D\bar{\mathcal A})R_A + \\
			&  (R_A-R_{\bar A})D\bar{\mathcal A}R_A 
			+R_{\bar A}D\bar{\mathcal A}(R_A-R_{\bar A}),
		\end{aligned}
	\end{equation*}
	one derives $\|D\mathcal P[\mathbf h]-D\bar{\mathcal P}[\mathbf h]\|_1
	\le
	K_1\varepsilon_C\|I-R\|_1+K_2\varepsilon
	\le
	K_3\widetilde\varepsilon$
	for some constants $K_1,K_2,K_3>0$ (depending only on $\delta_A$, $R$, $\bar{\mathcal C}_{[n]}$, and $\mathcal N$), where we used Lemma~\ref{lem:robust-res}, Assumption~\ref{assum:robust-small}, and the identity $R_A-R_{\bar A}=R_A(\mathcal A-\bar{\mathcal A})R_{\bar A}=R_AER_{\bar A}$.
	
	Applying Lemma~\ref{lem:struct-Dmu}, together with Lemma~\ref{lem:rankone-proj}, this implies $\sup_{\mathbf x\in\mathcal N, \|\mathbf h\|_1=1}
	\|D\bm\mu(\mathbf x)[\mathbf h]-D\bar{\bm\mu}(\mathbf x)[\mathbf h]\|_1
	\le
	K_4\widetilde\varepsilon$
	for some constant $K_4>0$.
	Since, by \eqref{equ:AggregatedPower}, one gets $\bm\phi(\mathbf x)=B\bm\mu(\mathbf x)$ and $\bm\alpha(\mathbf x)=B\bar{\bm\mu}(\mathbf x)$ with $B:=I_n\otimes \mathds{1}_m^\top$ (so $\|B\|_1=1$). Then one derives $\sup_{\mathbf x\in\mathcal N, \|\mathbf h\|_1=1}
	\|D\bm\phi(\mathbf x)[\mathbf h]-D\bm\alpha(\mathbf x)[\mathbf h]\|_1
	\le
	K_4\widetilde\varepsilon$.
	
	Since all norms on $\mathbb R^n$ are equivalent, the induced operator norms are also equivalent.
	For the fixed norm $\|\cdot\|_{\star}$, there is a constant $\nu_{\star}^{\rm op}>0$ such that $\|M\|_{\star}\le \nu_{\star}^{\rm op}\|M\|_1$ for every linear map $M:\mathbb R^n\to\mathbb R^n$.
	Hence $\sup_{\mathbf x\in\mathcal N}\|D\bm\phi(\mathbf x)-D\bm\alpha(\mathbf x)\|_{\star}
	\le
	\nu_{\star}^{\rm op}K_4\widetilde\varepsilon.$
	Then $\mathrm{Lip}_{\mathcal N}^{\star}(\bm\phi)
	\le
	\mathrm{Lip}_{\mathcal N}^{\star}(\bm\alpha)+K_{\rm lip}^{\star}\widetilde\varepsilon$
	with $K_{\rm lip}^{\star}:=\nu_{\star}^{\rm op}K_4$.
	The final contraction statement is immediate.
\end{proof}

\bibliographystyle{IEEEtran}
\bibliography{IEEEabrv,MultiDF_repelling_bibfile}

\end{document}